\documentclass[11pt]{article}
\usepackage[utf8]{inputenc}
\usepackage[T1]{fontenc}
\usepackage{microtype} 

\usepackage{lmodern}

\usepackage[margin=1in]{geometry}
\usepackage{amssymb,amsmath,amsfonts,amsthm,dsfont}
\usepackage{mathtools}

\usepackage[linesnumbered,boxed]{algorithm2e}

\SetAlgorithmName{Algorithm}{Algo}{List of Algos}

\newtheorem{theorem}{Theorem}[section]\numberwithin{equation}{section}
\newtheorem{corollary}[theorem]{Corollary}
\newtheorem{definition}[theorem]{Definition}

\newtheorem{fact}[theorem]{Fact}
\newtheorem{claim}[theorem]{Claim}
\newtheorem{lemma}[theorem]{Lemma}

\usepackage{caption}
\usepackage{subcaption}
\captionsetup[table]{labelsep=newline, labelfont={footnotesize}, textfont={sc,footnotesize}, position=top}

\usepackage{thm-restate}

\usepackage{tikz}
\usepackage{graphicx}
\usepackage{xcolor}
\newcommand{\eps}{\varepsilon}
\newcommand{\wh}{\widehat}
\newcommand{\wt}{\widetilde}
\newcommand{\Ot}{\wt{O}}
\newcommand{\Omegat}{\wt{\Omega}}
\newcommand{\Thetat}{\wt{\Theta}}
\newcommand{\dkl}{\mathrm{KL}}
\newcommand{\weight}{\text{wt}}
\newcommand{\abs}[1]{|#1|}
\DeclareMathOperator*{\E}{\mathbb{E}}
\DeclareMathOperator*{\argmax}{arg\,max}
\DeclareMathOperator*{\argmin}{arg\,min}
\DeclareMathOperator{\Var}{Var}
\newcommand{\Psymb}{{\bf Pr}}
\DeclareMathOperator*{\ProbOp}{\Psymb}
\renewcommand{\Pr}{\ProbOp}

\newcommand{\ignore}[1]{}

\usepackage{todonotes}

\newcounter{mynotes}
\setcounter{mynotes}{0}
\newcommand{\eric}[1]{\addtocounter{mynotes}{1}{{}}\todo[color=blue!20!white]{[\arabic{mynotes}] \scriptsize  {Eric: {\sf {#1}}}}}
\newcommand{\arnab}[1]{\addtocounter{mynotes}{1}{{}}\todo[color=blue!20!white]{[\arabic{mynotes}] \scriptsize  {Arnab: {\sf {#1}}}}}
\newcommand{\sutanu}[1]{\addtocounter{mynotes}{1}{{}}\todo[color=blue!20!white]{[\arabic{mynotes}] \scriptsize  {Sutanu: {\sf {#1}}}}}
\newcommand{\vinod}[1]{\addtocounter{mynotes}{1}{{}}\todo[color=blue!20!white]{[\arabic{mynotes}] \scriptsize  {Vinod: {\sf {#1}}}}}

\DeclarePairedDelimiterX{\infdivx}[2]{(}{)}{%
  #1\;\delimsize\|\;#2%
}
\newcommand{\kl}{D\infdivx}

\allowdisplaybreaks

\usepackage[symbol]{footmisc}

\begin{document}

\title{Near-Optimal Learning of Tree-Structured Distributions by Chow-Liu}

\author{
Arnab Bhattacharyya\\
  National University of Singapore\\
arnabb@nus.edu.sg
\and
Sutanu Gayen\\
National University of Singapore\\
 sutanugayen@gmail.com
\and
Eric Price\\
University of Texas at Austin\\
ecprice@cs.utexas.edu
\and
N. V. Vinodchandran\\
University of Nebraska-Lincoln\\
vinod@cse.unl.edu
}

\date{}

 \renewcommand{\eric}[1]{}
 \renewcommand{\arnab}[1]{}
 \renewcommand{\sutanu}[1]{}
 \renewcommand{\vinod}[1]{}

\maketitle

\begin{abstract}
  We provide finite sample guarantees for the classical Chow-Liu algorithm (IEEE Trans.~Inform.~Theory, 1968) to learn a tree-structured graphical model of a distribution. For a distribution $P$ on $\Sigma^n$ and a tree $T$ on $n$ nodes, we say $T$ is an $\varepsilon$-approximate tree for $P$ if there is a $T$-structured distribution $Q$ such that $D(P\;||\;Q)$ is at most $\varepsilon$ more than the best possible tree-structured distribution for $P$. We show that if $P$ itself is tree-structured, then the Chow-Liu algorithm with the plug-in estimator for mutual information with $\wt{O}(|\Sigma|^3 n\varepsilon^{-1})$ i.i.d.~samples outputs an $\varepsilon$-approximate tree for $P$ with constant probability. In contrast, for a general $P$ (which may not be tree-structured), $\Omega(n^2\varepsilon^{-2})$ samples are necessary to find an $\varepsilon$-approximate tree. Our upper bound is based on a new conditional independence tester that addresses an open problem posed by Canonne, Diakonikolas, Kane, and Stewart~(STOC, 2018): we prove that for three random variables $X,Y,Z$ each over $\Sigma$,  testing if $I(X; Y \mid Z)$ is $0$ or $\geq \varepsilon$ is possible with $\widetilde{O}(|\Sigma|^3/\varepsilon)$ samples. Finally, we show that for a specific tree $T$, with $\widetilde{O}(|\Sigma|^2n\varepsilon^{-1})$ samples from a distribution $P$ over $\Sigma^n$, one can efficiently learn the closest $T$-structured distribution in KL divergence by applying the add-1 estimator at each node.
\end{abstract}

\sutanu{should start with a title page}

\section{Introduction}

\ignore{
\begin{enumerate}
\item Approximate Structure Recovery with Chow Liu
\begin{enumerate}
\item Nonrealizable
\item Realizable
\end{enumerate}
\item Lower Bounds on  Sample Complexity 
\begin{enumerate}
\item Nonrealizable
\item Realizable
\end{enumerate}
\item Learning the distribution when the structure is known
\end{enumerate}
}

Probabilistic graphical models form a highly effective framework for encoding high-dimensional distributions. Graphical models yield human-interpretable representation of data as they explicitly describe the statistical dependencies among  different features. From a computational standpoint, the graphical representation enables efficient algorithms for inference, e.g., message passing, loopy belief propagation, and other variational inference methods \cite{KFL01}. Graphical models have found extensive applications in many domains, such as image processing, natural language processing and computational biology; see \cite{Lau96, KF09, WJ08} and the references therein for examples.

A fundamental question in this area is to learn graphical models from independently drawn samples. In this paper, we focus on the basic problem of learning {\em tree-structured distributions}. Given a tree $T$ on $n$ nodes, fix an arbitrary root and orient it outwards. A distribution $P$ over variables $X_1, \dots, X_n$  is said to be {\em $T$-structured} iff for every non-root vertex $i$:
\[
X_ i = f_i(X_{\text{pa}(i)}, U_i)
\]
where $\text{pa}(i)$ is the parent of $i$ in the oriented tree, $U_i$ is an independent random variable, and $f_i$ is a (deterministic) function. A distribution is tree-structured if it is $T$-rooted for some tree $T$. Equivalently, a tree-structured distribution is a Markov random field where the underlying undirected graph is a tree. 

In a seminal work \cite{DBLP:journals/tit/ChowL68}, Chow and Liu observed that the tree-structured distribution maximizing the likelihood of the observed samples can be obtained by solving a maximum weight spanning tree problem. In particular, their algorithm assigns a weight equal to the empirical mutual information between each pair of variables and finds a maximum weight spanning tree in this weighted graph. The resulting tree can be oriented from an arbitrary root, so as to assign a parent $\text{pa}(i)$ for all non-root vertices $i$. Finally, the conditional probability distributions $X_i \mid X_{\text{pa}(i)}$ can be learned from the data.

Chow and Wagner \cite{CW73} showed that the Chow-Liu algorithm consistently recovers structure, meaning that if the samples are generated by a $T$-structured distribution for a tree $T$, then it recovers $T$ with probability approaching $1$ as the number of samples tends to infinity.  More recent works \cite{TATW11, TTZ20} have used large-deviation theory to study the error exponent $K_P$
of $T$-structured distributions $P$, where:
\[
K_P = \lim_{N \to \infty} -\frac1N \log \Pr[\wh{T} \neq T].
\]
Here, $\wh{T}$ is the tree output by the Chow-Liu algorithm, and $N$
is the number of samples used.  The bounds they obtain depend on the
distribution structure, since it may be very hard to distinguish $T$
from an alternative tree that is almost as good.\eric{Can we be more specific in comparing their bounds to ours?}

In this work, we take a different viewpoint that is in the spirit of distribution learning and probabilistically approximately correct (PAC) analysis \cite{Val84, KSS94, KS94}. Instead of trying to exactly recover the structure of a tree-structured distribution $P$, we consider the objective of {\em learning a tree-structured distribution $Q$ that is close to $P$}. For many downstream tasks, most notably {statistical inference}, it is fine to not recover the exact structure as long as one can approximate probabilities of relevant events using the learned distribution. Also, this viewpoint allows us to analyze  Chow-Liu for non-tree structured distributions $P$, by comparing how far $P$ is from the output of Chow-Liu and how far from the closest tree-structured distribution.

More formally, for a distribution $P$ over $\Sigma^n$ and a tree $T$ on $n$ vertices, let:
\[
P_T := \argmin_{\substack{T\text{-structured}\\ \text{distribution } Q}} \kl{P}{Q}
\]
where $\kl{\cdot}{\cdot}$ denotes the KL-divergence. We say that a tree $\wh{T}$ is an {\em $\eps$-approximate tree} for $P$ if:
\[
\kl{P}{P_{\wh{T}}} \leq \min_{\text{tree }T} \kl{P}{P_{T}} + \eps.
\]
The KL divergence, although not a metric, is a useful notion of distance to consider in this setting. Firstly, with infinite samples, Chow-Liu's output maximizes the likelihood of generating samples from $P$ and hence, minimizes $\kl{P}{\cdot}$. Secondly, via Pinsker's inequality, bounding the KL divergence by $\eps$ implies a $\sqrt{2\eps}$ bound on total variation distance which may be more directly useful. 

\subsection{Our Contributions}
We study the number of samples required by Chow-Liu to output an $\eps$-approximate tree with a fixed error probability. We first observe that for any distribution $P$, it can be guaranteed that the output of Chow-Liu is $\eps$-approximate if each mutual information estimate is an additive $\pm \frac{\eps}{2n}$ estimate. Known bounds for the plug-in entropy estimator imply the following sample complexity.

\ignore{
\begin{definition}[Realizable Tree Structure Learning Problem] Given $\eps,\delta <1$ and samples from an unknown distribution $P$ on $\Sigma^n$ over an unknown tree $G$ on $n$ nodes. Output a tree $H$ so that the best distribution $Q$ on $\Sigma^n$ over $H$ satisfies that $\kl{P}{Q} \leq \eps$ with probability at least $1-\delta$. 
\end{definition}}

\begin{lemma}\label{lem:mainnr}
The \emph{\textsf{Chow-Liu}} algorithm when run on $\widetilde{O}\left(\frac{|\Sigma|^2 n}{\eps} \right.+\allowbreak \left.\frac{n^2}{\eps^{2}}\log {\frac1 \delta}\right)$ samples from a distribution $P$ on $\Sigma^n$ outputs an $\eps$-approx. tree $T$ with probability at least $1-\delta$. Moreover, the dependence of the sample complexity on $n$ and $\eps$ are tight up to logarithmic factors.
\end{lemma}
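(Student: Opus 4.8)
The plan is to establish two things: (i) a deterministic reduction showing that accurate mutual-information estimates suffice for Chow–Liu to return an $\eps$-approximate tree, and (ii) a sample-complexity bound for obtaining those estimates via the plug-in estimator, together with a matching lower bound.

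For (i), I would first recall the standard decomposition of KL divergence for tree-structured distributions: for any tree $T$, $\kl{P}{P_T} = \sum_i H(X_i) - \sum_{(i,j)\in T} I(X_i;X_j) - H(X_1,\dots,X_n)$ up to root-dependent terms that do not depend on $T$, so that minimizing $\kl{P}{P_T}$ over trees is exactly the maximum-weight spanning tree problem with edge weights $I(X_i;X_j)$. The Chow–Liu algorithm solves this same MST problem but with empirical weights $\wh I(X_i;X_j)$. If every estimate satisfies $|\wh I(X_i;X_j) - I(X_i;X_j)| \le \eps/(2n)$, then the weight of any spanning tree is estimated to within $(n-1)\cdot \eps/(2n) < \eps/2$, so the MST found under $\wh I$ has true weight within $\eps$ of optimal; translating back through the KL decomposition, its KL cost is within $\eps$ of $\min_T \kl{P}{P_T}$, i.e.\ it is an $\eps$-approximate tree. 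This is the ``observation'' referenced in the text and is essentially a triangle-inequality argument over the $\binom n2$ edge weights.

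For (ii), it suffices to estimate each of the $\binom n2$ pairwise mutual informations $I(X_i;X_j) = H(X_i) + H(X_j) - H(X_i,X_j)$ to additive error $\eps/(2n)$ with failure probability $\delta/\binom n2$ after a union bound. Each such quantity is a sum of entropies of distributions on alphabets of size at most $|\Sigma|^2$, so I would invoke known finite-sample guarantees for the plug-in (empirical) entropy estimator: with $m$ samples the plug-in entropy estimator has bias $O(|\Sigma|^2/m)$ and, by McDiarmid/bounded-differences, fluctuations $O(\sqrt{\log(1/\delta')}\,\log m / \sqrt m)$. Setting the bias term below $\eps/(2n)$ forces $m = \widetilde O(|\Sigma|^2 n/\eps)$, and setting the fluctuation term below $\eps/(2n)$ with $\delta' = \delta/\binom n2$ forces $m = \widetilde O((n^2/\eps^2)\log(1/\delta))$; taking the max gives the stated bound. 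I would cite the standard references for these plug-in entropy bounds rather than reprove them.

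For tightness, the $n$ and $\eps^{-1}$ dependence: the $\Omega(n^2\eps^{-2})$ lower bound for general $P$ is stated separately in the abstract and presumably proved elsewhere in the paper, so here I would simply point to that construction, which already shows the $n^2\eps^{-2}$ term cannot be improved; a separate simple argument (e.g.\ a two-variable instance requiring $\Omega(1/\eps)$ samples to distinguish mutual information $0$ from $\eps$, tensored up) handles the linear-in-$n$, $1/\eps$ regime. The main obstacle is not conceptual but bookkeeping: making sure the bias and deviation bounds for the plug-in entropy estimator are quoted in a form strong enough to give the clean $\widetilde O$ expression, and confirming that the two error sources genuinely decouple into the additive $|\Sigma|^2 n/\eps$ and $n^2\eps^{-2}\log(1/\delta)$ terms rather than multiplying. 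I expect this to be routine given the cited estimator guarantees.
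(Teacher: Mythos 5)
Your proposal is correct and follows essentially the same route as the paper: reduce to estimating every pairwise mutual information to within $\pm\eps/(2n)$ via the Chow--Liu weight decomposition $\kl{P}{P_T}=J_P-\mathrm{wt}_P(T)$, obtain those estimates from known plug-in entropy guarantees (bias plus concentration) with a union bound over the $\binom{n}{2}$ pairs, and defer tightness to the separate $\Omega(n^2\eps^{-2})$ lower-bound construction. The only cosmetic difference is that the paper quotes the entropy-estimation bound of Antos--Kontoyiannis/Paninski as a black box rather than re-deriving the bias and McDiarmid fluctuation terms as you sketch.
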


We show that the quadratic dependence on $n$ and $\eps$ is
unfortunately necessary for general distributions $P$.  However, in
the ``realizable'' setting where $P$ is actually tree-structured, we
show that the sample complexity can be improved to near-linear:
 
\begin{theorem}\label{thm:mainr}
The \emph{\textsf{Chow-Liu}} algorithm when run on $\widetilde{O}({|\Sigma|^3 n\over \eps}\log {1\over \delta})$ samples from a tree-structured distribution $P$ on $\Sigma^n$ outputs an $\eps$-approximate tree $T$ with probability at least $1-\delta$. Moreover, the dependence on $n$ and $\eps$ are tight up to logarithmic factors.
\end{theorem}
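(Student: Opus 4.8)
The plan is to first reduce the $\eps$-approximation guarantee to a purely information-theoretic inequality about the \textsf{Chow-Liu} weighting, and then to exploit the realizable structure to control the plug-in estimates only where it matters — at the $1/N$ rate that drives the conditional-independence tester, rather than the $1/\sqrt N$ rate a naive analysis would require.

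\textbf{Reduction.} For any $P$ on $\Sigma^n$ and any tree $T$, the standard Chow--Liu identity gives $\kl{P}{P_T} = C_P - \weight_I(T)$, where $\weight_I(T) := \sum_{(i,j)\in T} I(X_i;X_j)$ and $C_P := \sum_i H(X_i) - H(X_1,\dots,X_n)$ does not depend on $T$; hence $\wh T$ is $\eps$-approximate iff $\weight_I(\wh T)\ge \weight_I(T^\star)-\eps$, where $T^\star$ maximizes $\weight_I$. When $P$ is realizable, $\kl{P}{P_{T^\star}}=0$, so $T^\star$ is the true tree and the data-processing inequality along tree paths yields $I(X_i;X_j)\le I(X_k;X_l)$ whenever $(k,l)$ lies on the $T^\star$-path between $i$ and $j$. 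Since the \textsf{Chow-Liu} output $\wh T$ maximizes $\weight_{\wh I}$ for the empirical estimates $\wh I$, a symmetric exchange argument for spanning trees produces a bijection $\phi\colon \wh T\setminus T^\star\to T^\star\setminus\wh T$ such that $\phi(e)$ lies on the $T^\star$-path through the endpoints of $e$ and $\wh T-e+\phi(e)$ is again a spanning tree. Optimality of $\wh T$ then forces $\wh I(\phi(e))\le \wh I(e)$, while realizability forces $I(\phi(e))\ge I(e)$, so
\[
\weight_I(T^\star)-\weight_I(\wh T) \;=\; \sum_{e\in\wh T\setminus T^\star}\bigl(I(\phi(e))-I(e)\bigr) \;\le\; \sum_{e\in\wh T\setminus T^\star}\Bigl[\bigl(\wh I(e)-I(e)\bigr)_+ + \bigl(I(\phi(e))-\wh I(\phi(e))\bigr)_+\Bigr].
\]
It therefore suffices to show the right-hand side is at most $\eps$ with probability $1-\delta$ when $N = \Ot(|\Sigma|^3 n\eps^{-1}\log\tfrac1\delta)$.

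\textbf{The main step, and the obstacle.} The right-hand side is a sum of $O(n)$ one-sided estimation errors, and controlling it is the crux of the theorem: demanding instead that each of the $\binom n2$ estimates be accurate to $\pm\eps/n$ already costs $\Omega(n^2/\eps^2)$ samples (this is exactly Lemma~\ref{lem:mainnr}), and the exchange reduction above cuts the number of error terms to $O(n)$ but not their magnitude — a generic plug-in mutual-information estimate fluctuates by $\Theta(|\Sigma|/\sqrt N)$, so $n$ such terms would still force $N=\Omega(n^2/\eps^2)$. The plan is instead to (i) reorganize the bound so that it involves only $O(n)$ \emph{fixed} quantities determined by $T^\star$ — the mutual informations on $T^\star$'s edges (for the underestimation terms) and the conditional mutual informations along $T^\star$-paths that control $I(\phi(e))-I(e)$ and that are small, indeed vanishing, precisely because $P$ is Markov on $T^\star$; and (ii) show that for these particular quantities the plug-in estimator behaves like a scaled $\chi^2$ statistic, with bias $O(|\Sigma|^2/N)$ and fluctuations of order $|\Sigma|^2/N$ up to a $\log(1/\delta)$-factor from the $\chi^2$ tail, rather than $|\Sigma|/\sqrt N$. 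Step (ii) is essentially the concentration analysis behind our $\Ot(|\Sigma|^3/\eps)$ conditional-independence tester, reused here; summing $O(n)$ such contributions and union-bounding over the $O(n)$ conditional-independence structures of $T^\star$ then drives the total below $\eps$ once $N\gtrsim |\Sigma|^3 n/\eps$ up to logarithmic factors. Carrying out (i)--(ii) rigorously — in particular the reorganization that eliminates the data-dependent edges of $\wh T$ in favor of fixed $T^\star$-structures — is the main obstacle, and is where the conditional-independence tester enters the argument.

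\textbf{Lower bound.} For tightness I would exhibit tree-structured distributions requiring $\Omegat(n/\eps)$ samples. Plant a uniformly random perfect matching $M$ on $\{1,\dots,n\}$ and make $(X_a,X_b)$ a fixed correlated pair with $I(X_a;X_b)=\Theta(\eps/n)$ for each $(a,b)\in M$, with distinct matched pairs mutually independent; then $P$ is tree-structured ($T^\star$ is $M$ augmented by zero-weight edges that link the pairs into a tree) and $\weight_I(T^\star)$ is a fixed large constant times $\eps$. Any $\eps$-approximate $\wh T$ must satisfy $\weight_I(\wh T)\ge\weight_I(T^\star)-\eps$, which forces $|\wh T\cap M|=\Omega(n)$: the algorithm has to place a constant fraction of the planted edges correctly, and false positives are genuinely costly because of the acyclicity constraint. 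But a sample of size $m$ carries only $O(m\eps)$ bits about $M$ up to logarithmic factors, whereas correctly placing $\Omega(n)$ edges of a random perfect matching requires $\Omegat(n)$ bits; hence $m=\Omegat(n/\eps)$, establishing tightness of both the $n$- and $\eps$-dependence up to logarithmic factors. The delicate point here is the information-theoretic bookkeeping — bounding the information a sample gives about $M$ and lower-bounding the information needed to place $\Omega(n)$ matching edges, given that the edge-indicators of a random matching are not independent; a packing/Fano argument over a suitably chosen family of matchings should make it go through.
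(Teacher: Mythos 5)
Your reduction is the same as ours (Chow--Liu identity, exchange pairing between $\wh{T}\setminus T^\star$ and $T^\star\setminus\wh{T}$, data processing along $T^\star$-paths), but the step you flag as ``the main obstacle'' is precisely the content of the theorem, and the specific bound you commit to before stopping points in a direction that cannot work. Once you split the error as $\sum_e\bigl[(\wh{I}(e)-I(e))_+ + (I(\phi(e))-\wh{I}(\phi(e)))_+\bigr]$ you have discarded the correlation between the two estimates in each pair: for an edge $e=(W,Z)\in\wh{T}\setminus T^\star$ the variables are genuinely dependent, so the plug-in estimate $\wh{I}(W;Z)$ has a non-vanishing linear term in its expansion and fluctuates at rate $\Theta(1/\sqrt{N})$; the same holds for $\wh{I}(\phi(e))$. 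Summing $\Theta(n)$ such one-sided errors forces $N=\Omega(n^2/\eps^2)$, i.e.\ no better than the non-realizable bound. The missing idea is to keep the pair as a \emph{difference} and expand it exactly by the chain rule, which holds verbatim for the empirical distribution: with $e_i=(X_i,Y_i)\in T^\star$ paired to $f_i=(W_i,Z_i)\in\wh{T}$ whose $T^\star$-path passes through $e_i$,
\[
I(\wh{X_i};\wh{Y_i})-I(\wh{W_i};\wh{Z_i})=I(\wh{X_i};\wh{Y_i}\mid\wh{Z_i})-I(\wh{X_i};\wh{Z_i}\mid\wh{Y_i})+I(\wh{Z_i};\wh{X_i}\mid\wh{W_i})-I(\wh{Z_i};\wh{W_i}\mid\wh{X_i}),
\]
where by the Markov property the two subtracted terms have \emph{true} value $0$, so the soundness half of the conditional-independence tester caps their empirical values at $O(\eps')$ with $N=\Ot(\abs{\Sigma}^3/\eps')$, while the completeness half lower-bounds the two added empirical terms by a constant fraction of their true values. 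Summing over the pairing and using only the aggregate maximality $\wh{\weight}(\wh{T})\ge\wh{\weight}(T^\star)$ (you do not need, and should not rely on, a bijection that simultaneously makes $\wh{T}-e+\phi(e)$ a spanning tree) then gives $\weight(T^\star)-\weight(\wh{T})\lesssim \eps' n$, and rescaling $\eps'=\Theta(\eps/n)$ yields the claimed sample complexity. Your worry about ``eliminating the data-dependent edges of $\wh{T}$'' is resolved much more cheaply than you suggest: the tester's guarantee has $\log(1/\delta)$ dependence, so one simply union-bounds over all $O(n^3)$ ordered triples of variables, costing only logarithmic factors.

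On tightness, your planted-matching construction (pairs with $I=\Theta(\eps/n)$, Fano/counting over matchings) is a genuinely different route from the paper's, which uses a three-variable gadget ($Y=Z$, $X$ copying them with probability $1-\eps$), a Hellinger-distance three-point argument to get $\Omega(1/\eps)$ per gadget, and concatenation of $\Theta(n)$ independent blocks. Your sketch is plausible and would even recover a $\log n$ factor, but as written it leaves the delicate parts open: the reduction from ``$\eps$-approximate tree'' to ``recovers $\Omega(n)$ matched edges,'' the bound $I(\text{sample};M)\lesssim\eps$ (cleanest via $\kl{P_M}{U}=\sum_{(a,b)\in M}I(X_a;X_b)$), and a partial-recovery Fano argument over matchings, none of which is carried out. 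As it stands, both halves of the proposal are outlines whose hardest steps are deferred rather than proved.
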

Hence, for example, for tree-structured Ising models (where $\Sigma = \{\pm 1\}$), there is a provable near-quadratic gap in the sample complexity for realizable versus non-realizable input distributions. Note that with $O(n/\eps)$ samples, we do not get accurate estimates of the mutual information edge weights. However, as we explain in Section \ref{sec:overview}, 
the errors for the edge weights are not independent; in fact, the errors are correlated in such a way that Chow-Liu still recovers an approximate tree!
We note that our $\Omega(n/\eps)$-sample complexity lower bound is specifically for \emph{recovering the structure} of the unknown tree. Daskalakis, Dikkala, and Kamath~\cite{DBLP:journals/tit/DaskalakisD019} have shown the same lower bound for learning the distribution, but learning the tree might have been easier.

Our main tool for proving Theorem \ref{thm:mainr} is a result on testing conditional independence using the plug-in conditional mutual information estimator. We show that $\widetilde{O}(|\Sigma|^3/\eps)$ samples suffice to distinguish $I(X;Y \mid Z)=0$ from $I(X;Y\mid Z) \geq \eps$ with constant probability. In more detail:
\begin{restatable}[Conditional Mutual Information Tester]{theorem}{condintest}
\label{thm:conditionalindependencetesting}
  Let $(X, Y, Z)$ be three random variables over $\Sigma$, and
  $(\wh{X}, \wh{Y}, \wh{Z})$ be the empirical distribution over a size
  $N$ sample of $(X, Y, Z)$. There exists a universal constant $0<C<1$ so that for any
  \[
    N\ge \Theta\left({|\Sigma|^3\over \eps}\log {\abs{\Sigma} \over \delta}\log {\abs{\Sigma} \log(1/\delta)\over \eps}\right),
  \]
  the following results hold with probability $1-\delta$:
  \begin{enumerate}
  \item If $I(X; Y \mid Z) = 0$, then $I(\wh{X}; \wh{Y} \mid \wh{Z}) < \eps$.
  \item If $I(X; Y \mid Z) \ge \eps$, then $I(\wh{X}; \wh{Y} \mid \wh{Z}) > C\cdot I(X; Y \mid Z)$.
  \end{enumerate}
\end{restatable}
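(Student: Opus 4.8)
The plan is to control the plug-in estimator $I(\widehat X;\widehat Y\mid\widehat Z)$ by pinning down its expectation (via known bias bounds for empirical entropy) and then proving a tight concentration bound around that expectation. Using the identity
\[
  I(\widehat X;\widehat Y\mid\widehat Z)=H(\widehat X,\widehat Z)+H(\widehat Y,\widehat Z)-H(\widehat X,\widehat Y,\widehat Z)-H(\widehat Z),
\]
apply the standard bias bound $|\E[H(\widehat W)]-H(W)|\le (|\mathrm{supp}(W)|-1)/N$ (Jensen's inequality on one side, a standard bias estimate on the other) to each of the four tuples, whose supports have size at most $|\Sigma|,|\Sigma|^2,|\Sigma|^3$. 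This yields the sandwich
\[
  I(X;Y\mid Z)-O\!\left(\tfrac{|\Sigma|^2}{N}\right)\ \le\ \E\big[I(\widehat X;\widehat Y\mid\widehat Z)\big]\ \le\ I(X;Y\mid Z)+O\!\left(\tfrac{|\Sigma|^3}{N}\right).
\]
Under the hypothesis of the first item, $I(X;Y\mid Z)=0$, so $\E[I(\widehat X;\widehat Y\mid\widehat Z)]=O(|\Sigma|^3/N)\le \eps/4$ for the stated $N$; under the hypothesis of the second item, $I(X;Y\mid Z)\ge\eps$ makes the $O(|\Sigma|^2/N)$ correction at most $\tfrac12 I(X;Y\mid Z)$, so $\E[I(\widehat X;\widehat Y\mid\widehat Z)]\ge \tfrac12 I(X;Y\mid Z)$.

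The heart of the argument is a concentration bound for $I(\widehat X;\widehat Y\mid\widehat Z)$ around its mean that is sharper than bounded differences provides: a single sample changes the estimator by only $O(\log N / N)$, but McDiarmid's inequality then certifies merely an $\Ot(1/\sqrt N)$ deviation, which would force $\eps^{-2}$ samples. Instead I would prove a variance-aware bound — of Bernstein / Efron--Stein, ``(weakly) self-bounding'' type — in which the deviation is $O\big(\sqrt{V\log(1/\delta)}+\tfrac{\log N}{N}\log(1/\delta)\big)$ with the variance proxy $V$ itself bounded by $\Ot(\tfrac{\log N}{N})$ times $\E[I(\widehat X;\widehat Y\mid\widehat Z)]$, plus $\Ot(|\Sigma|^3/N^2)$. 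Concretely I would decompose $I(\widehat X;\widehat Y\mid\widehat Z)=\sum_z\widehat P(z)\,I(\widehat X;\widehat Y\mid\widehat Z=z)$ and condition on the slice counts: for the upper tail, dominate each slice by the $\chi^2$-divergence between the empirical joint and the product of empirical marginals conditioned on $Z=z$ (using that the KL divergence is at most the $\chi^2$-divergence), whose expectation is $\Theta((|\Sigma|-1)^2/m_z)$ and which concentrates by Bernstein's inequality on its $|\Sigma|^2$ cells; for the lower tail, work directly with the empirical conditional entropies. In both cases the standard surgery is needed first — discard symbols of $X\mid Z=z$ or $Y\mid Z=z$ with few samples (where $\chi^2$ denominators, or the $\log$-Lipschitz constants of the entropy terms, would blow up), bound their total contribution crudely, and apply the concentration only to the remaining ``bulk''. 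A union over the at most $|\Sigma|$ slices produces the $\log(|\Sigma|/\delta)$ factor, and the per-sample influence $\log N/N$ produces the extra $\log(|\Sigma|\log(1/\delta)/\eps)\asymp\log N$ factor, matching the stated bound on $N$.

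Putting the two ingredients together proves the theorem. For the first item, the mean is $\le\eps/4$ and, because the mean is small, $V=\Ot(|\Sigma|^3/N^2)$, so the concentration term contributes a further $\le\eps/4$ and $I(\widehat X;\widehat Y\mid\widehat Z)<\eps$ with probability $1-\delta$. For the second item, the mean is $\ge\tfrac12 I(X;Y\mid Z)\ge\eps/2$, and the self-bounding variance estimate makes the downward deviation at most a fixed fraction of the mean as soon as $N\gtrsim\tfrac{\log N\,\log(1/\delta)}{\eps}$ (which the hypothesis implies), giving $I(\widehat X;\widehat Y\mid\widehat Z)\ge C\cdot I(X;Y\mid Z)$ for a universal constant, e.g.\ $C=\tfrac14$. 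The dominant $\Thetat(|\Sigma|^3/\eps)$ term comes from driving the mean below $\eps$ in the first item; the logarithmic factors come from the union bound and the per-sample Lipschitz constant. The main obstacle is the second paragraph — establishing the variance-aware, (approximately) self-bounding concentration with the correct variance scaling (of order $\tfrac{\log N}{N}$ times the mean, plus $|\Sigma|^3/N^2$), and in particular making the rare-symbol surgery go through so that the $\chi^2$ comparison and the entropy Lipschitz constants stay bounded. The remaining pieces (the bias bounds, the slice decomposition, the final union bound) are routine.
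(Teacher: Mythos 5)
Your first ingredient (the bias sandwich via the four plug-in entropies) is correct and routine, and your final assembly would indeed give both items \emph{if} the concentration lemma you postulate were available. But that lemma --- a Bernstein/self-bounding deviation bound for $I(\wh{X};\wh{Y}\mid\wh{Z})$ around its expectation with variance proxy $V\lesssim \frac{\log N}{N}\,\E[I(\wh{X};\wh{Y}\mid\wh{Z})]+\Ot(|\Sigma|^3/N^2)$ --- is not proved in your proposal; it is asserted, with a sketch whose hard parts are exactly where the difficulty of the theorem lives. Concretely: (a) the empirical CMI is not a self-bounding function in any off-the-shelf sense, and no standard inequality (McDiarmid, Efron--Stein, bounded-differences Bernstein) delivers a variance proxy proportional to the \emph{mean}, so this inequality would have to be built from scratch; (b) your upper-tail plan via $\mathrm{KL}\le\chi^2$ is lossy precisely on rare cells --- a single coincident sample of a rare $x$ and rare $y$ makes the empirical $\chi^2$ cell of order $1$ while its KL contribution is only $O(\frac{\log N}{N})$ --- so the ``rare-symbol surgery'' must in effect handle rare \emph{cells}, and carrying this out with empirical (random) marginals in the denominator is most of the work; (c) for the lower tail (the completeness direction, which must hold multiplicatively even when $I(X;Y\mid Z)\gg\eps$) the plan ``work directly with the empirical conditional entropies'' is not an argument, and this is the genuinely delicate direction. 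So as written there is a gap: the central technical step is missing, and filling it would require an analysis of comparable depth to the theorem itself.

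For comparison, the paper never controls the global variance of the statistic. It writes each conditional slice's mutual information cellwise as $I=\sum_{x,y}f(\Delta_{xy},P_xP_y)$ with $f(a,b)=(a+b)\log(1+a/b)-a$ and $\Delta_{xy}=P_{xy}-P_xP_y$, proves two per-cell statements by Chernoff bounds on $\wh{P}_x,\wh{P}_y,\wh{P}_{xy},\wh{\Delta}_{xy}$ --- a completeness lemma saying $f(\wh{\Delta}_{xy},\wh{P}_x\wh{P}_y)\gtrsim f(\Delta_{xy},P_xP_y)-O(\frac{\log N}{N}\log\frac1\delta)$ and a soundness lemma saying that under $\Delta_{xy}=0$ each empirical cell is $O(\frac{\log N}{N}\log\frac1\delta)$ --- and then takes union bounds over the $|\Sigma|^2$ cells and over the $Z$-slices (splitting off slices with small $P_z$). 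The per-cell multiplicative completeness bound is what yields $I(\wh{X};\wh{Y}\mid\wh{Z})>C\cdot I(X;Y\mid Z)$ directly, and the rare-cell pathologies you would need surgery for are absorbed automatically because $f(a,b)\asymp\min\bigl(a^2/b,\ \abs{a}\log(2+\abs{a}/b)\bigr)$, i.e.\ the KL cell itself, not its $\chi^2$ majorant, is what gets concentrated. If you want to pursue your route, the honest statement of what must be proved is the finite-sample, mean-proportional-variance concentration inequality for the plug-in (C)MI, including the empirical-marginal and rare-cell issues above; until that is established, the proposal does not constitute a proof.
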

We also get a similar result for unconditional mutual information
testing (testing if $I(X; Y) = 0$ or $I(X; Y) \geq \eps$) with a
$\abs{\Sigma}$ factor smaller $N$.  Conditional independence testing
has previously been studied in~\cite{canonne2018testing}, which gave
optimal bounds for testing whether $(X, Y, Z)$ is conditionally
independent or $\eps$-far from independent in \emph{total variation
  distance}.  Developing a (conditional) independence tester with respect to
\emph{mutual information} with $o(\frac{1}{\eps^2})$
sample complexity was posed as an open problem
in~\cite{canonne2018testing};
Theorem~\ref{thm:conditionalindependencetesting} resolves this with
optimal $\eps$ dependence.  Moreover, the test statistic used by
Theorem~\ref{thm:conditionalindependencetesting} is simply the
empirical mutual information, which is key for our application to
Chow-Liu.

Theorem~\ref{thm:mainr} describes how Chow-Liu finds a good tree $T$.
Our final result shows how to estimate the nearest $T$-structured
distribution for fixed $T$. As above, the spirit of our approach is to make the algorithms as simple as possible (moving possible complications to the analysis). For the fixed-structure learning problem, the most natural approach is to empirically estimate $X_i \mid X_{\text{pa}(i)}=x$ for each non-root node $i$ and for each setting $x$ of the parent of $i$. However, for KL divergence, the empirical estimator is known to not work; so, we move to the next most natural estimator: Laplace's add-1 estimator \cite{Lap95}.
\vinod{The above para is confusing to me.  Eric: Is it still confusing?}

\begin{restatable}[]{theorem}{fixedlearning}\label{thm:fixed}
  Let $P$ be a discrete distribution over $\Sigma^n$.  Let $T$ be a tree on $n$ vertices, and $Q$ be a $T$-structured
  distribution with conditional probabilities at each node estimated using the empirical add-1
  estimator on \[N=\Theta\left(\frac{n\abs{\Sigma}^2}{\eps}\log {n\abs{\Sigma}\over \delta}\log \left( \frac{n\abs{\Sigma}}{\eps}\log {1\over \delta} \right)\right)\] samples from $P$.  Then\sutanu{this way of writing should appeal more to CS readers}
  \[
    \kl{P}{Q} - \kl{P}{P_T} \le \eps
  \]
  with probability $1-\delta$.
\end{restatable}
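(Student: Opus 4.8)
The plan is to peel the problem down to learning a single distribution on $\Sigma$ in KL divergence, and then to pay once per node.

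\emph{Step 1: decomposition through the projection.} I would start from the standard fact that $P_T=\argmin_{Q\ T\text{-structured}}\kl{P}{Q}$ has exactly the vertex- and edge-marginals of $P$: $P_T(x)=\prod_v P_T(x_v\mid x_{\text{pa}(v)})$ with $P_T(x_v\mid x_{\text{pa}(v)})=P(x_v\mid x_{\text{pa}(v)})$. This follows by writing $\kl{P}{Q}=-H(P)-\sum_v\E_{X\sim P}[\log Q(X_v\mid X_{\text{pa}(v)})]$ and minimizing each inner cross-entropy over the conditional $Q(\cdot\mid x_{\text{pa}(v)})$ separately. Applying the same identity to the $T$-structured $Q$ of the theorem and subtracting,
\[
  \kl{P}{Q}-\kl{P}{P_T}\;=\;\sum_{v}\ \sum_{a\in\Sigma}P(X_{\text{pa}(v)}=a)\,\kl{P(X_v\mid X_{\text{pa}(v)}=a)}{Q(X_v\mid X_{\text{pa}(v)}=a)},
\]
where the root contributes the term with $X_{\text{pa}(v)}$ empty; in particular the difference is non-negative. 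Since $Q(X_v\mid X_{\text{pa}(v)}=a)$ is exactly the add-1 estimate of $P(X_v\mid X_{\text{pa}(v)}=a)$ formed from the $m_{v,a}$ samples whose parent-coordinate equals $a$, it suffices to bound this sum over the $\le n\abs{\Sigma}$ ``cells'' $(v,a)$, weighted by $P(X_{\text{pa}(v)}=a)$.

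\emph{Step 2: KL risk of the add-1 estimator (the crux).} The lemma I need is: if $\wh p$ is the add-1 estimate of a distribution $p$ on $k$ symbols from $m$ i.i.d.\ samples, then $\kl{p}{\wh p}\le O\!\left((k+\log(1/\delta'))/m\right)$ with probability $1-\delta'$, and $\kl{p}{\wh p}\le\log(m+k)$ always. The expectation version is routine: $\log t\le t-1$ gives $\kl{p}{\wh p}\le\sum_b p(b)^2/\wh p(b)-1=(m+k)\sum_b p(b)^2/(n_b+1)-1$, and $\E[1/(n_b+1)]\le 1/((m+1)p(b))$ yields $\E[\kl{p}{\wh p}]\le(k-1)/(m+1)$. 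The high-probability version is where the work lies, and I expect it to be \textbf{the main obstacle}: bounding each coordinate $p(b)\log\frac{p(b)m}{n_b+1}$ by its typical deviation and summing produces an $O(\sqrt{k/m})$ term, which would be fatally lossy (it would force $N=\Omegat(n\abs{\Sigma}^2/\eps^2)$). One must instead exploit that KL behaves quadratically at its minimum: Taylor-expanding $\log$ in $n_b-mp(b)$, the first-order contribution $\sum_b(n_b-mp(b))$ is identically $0$, so the fluctuation is carried by a $\chi^2$-type statistic $\frac1{m^2}\sum_b(n_b-mp(b))^2/p(b)$, which has sub-exponential tails (Bernstein / Laurent--Massart) and hence concentrates at the right scale $O\!\left((\sqrt{k\log(1/\delta')}+\log(1/\delta'))/m\right)$. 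The symbols with $p(b)\lesssim 1/m$ (where the expansion is not valid, or $n_b=0$ occurs with constant probability) have total mass $O(k/m)$ by a counting argument and contribute only $\wt O((k+\log(1/\delta'))/m)$ outright, so they are handled separately.

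\emph{Step 3: assembling the bound.} Set $\delta'=\delta/(2n\abs{\Sigma})$ and call a cell $(v,a)$ \emph{heavy} if $N\,P(X_{\text{pa}(v)}=a)=\Omega(\log(n\abs{\Sigma}/\delta))$ and \emph{light} otherwise. A uniform Chernoff bound over all cells shows that with probability $\ge 1-\delta/2$ every heavy cell has $m_{v,a}\ge\tfrac12 N\,P(X_{\text{pa}(v)}=a)$; a union bound of the Step-2 lemma over heavy cells then gives, with probability $\ge 1-\delta$, $\kl{P(\cdot\mid a)}{Q(\cdot\mid a)}=O((\abs{\Sigma}+\log(n\abs{\Sigma}/\delta))/(N\,P(X_{\text{pa}(v)}=a)))$ for every heavy cell, so its weighted contribution is $O((\abs{\Sigma}+\log(n\abs{\Sigma}/\delta))/N)$, and summing over $\le n\abs{\Sigma}$ heavy cells gives $O((n\abs{\Sigma}^2+n\abs{\Sigma}\log(n\abs{\Sigma}/\delta))/N)$. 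For a light cell we use $\kl{P(\cdot\mid a)}{Q(\cdot\mid a)}\le\log(N+\abs{\Sigma})$ deterministically; since the light parent-values at a fixed node have total probability $O(\abs{\Sigma}\log(n\abs{\Sigma}/\delta)/N)$, their combined contribution is $O(n\abs{\Sigma}\log(n\abs{\Sigma}/\delta)\log(N+\abs{\Sigma})/N)$. Adding these, plus the root's single heavy-type term $O((\abs{\Sigma}+\log(1/\delta))/N)$, the excess KL is
\[
  O\!\left(\frac{n\abs{\Sigma}^2+n\abs{\Sigma}\,\log\tfrac{n\abs{\Sigma}}{\delta}\,\log(N+\abs{\Sigma})}{N}\right)
\]
with probability $1-\delta$; requiring this to be at most $\eps$ (so that $\log N$ equals $\log(\tfrac{n\abs{\Sigma}}{\eps}\log\tfrac1\delta)$ up to lower-order terms) yields exactly $N=\Theta\!\left(\tfrac{n\abs{\Sigma}^2}{\eps}\log\tfrac{n\abs{\Sigma}}{\delta}\log(\tfrac{n\abs{\Sigma}}{\eps}\log\tfrac1\delta)\right)$. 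The projection identity of Step 1 and the counting bookkeeping of Step 3 are mechanical; essentially everything delicate is concentrated in the per-coordinate analysis of Step 2, and in particular in avoiding the spurious $\sqrt{\abs{\Sigma}/m}$ loss, which is what keeps the sample complexity near-linear in $n$ and $1/\eps$.
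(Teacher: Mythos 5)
Your Steps 1 and 3 coincide with the paper's argument: the paper's Lemma~\ref{lem:ApproxChowLiu} gives exactly your decomposition of $\kl{P}{Q}-\kl{P}{P_T}$ into $\Pr[X_{\text{pa}(i)}=x]$-weighted conditional KLs over the $n\abs{\Sigma}$ cells, and the heavy/light split (Chernoff to get $k\ge\tfrac12\E[k]$ on heavy cells, the deterministic bound $\log(k+\abs{\Sigma})$ from the add-1 floor on light cells, then a union bound and rescaling of $\delta$) is verbatim the paper's assembly. Where you genuinely diverge is Step 2. The paper's key lemma (Theorem~\ref{thm:KLlearn}) is weaker than the one you posit: it shows $\kl{p}{\wh p}\lesssim \abs{\Sigma}\log(\abs{\Sigma}/\delta')\log m/m$, proved by the same per-coordinate machinery as the independence tester --- write $\kl{p}{\wh p}=\sum_i f(P_i-Q_i,Q_i)$, apply Lemma~\ref{lem:fapprox} and per-coordinate Chernoff bounds (Claim~\ref{claim:Pempirical}), split on $P_i\lessgtr C'\log(1/\delta')/m$, and union bound over the $\abs{\Sigma}$ symbols. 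Note that this already avoids the $\sqrt{\abs{\Sigma}/m}$ loss you worry about, not by any cross-coordinate cancellation but simply by using the quadratic branch $(P_i-Q_i)^2/Q_i$ with $Q_i\gtrsim P_i$ on the well-sampled symbols; the price is the multiplicative $\log(\abs{\Sigma}/\delta')\log m$, which the theorem's stated $N$ absorbs. Your route instead aims for the sharper additive bound $O((\abs{\Sigma}+\log(1/\delta'))/m)$ via the exact cancellation of the linear term and concentration of the resulting $\chi^2$-type statistic; if carried out it would buy a cleaner per-cell bound (and would be of independent interest), but as sketched it is the one place with real risk: Laurent--Massart is a Gaussian quadratic-form inequality, and for a multinomial the summands $(n_b-mp(b))^2/(m^2p(b))$ with $mp(b)=\Theta(1)$ (or $\Theta(\log(1/\delta'))$) are not sub-exponential with uniform parameters, so a careful stratification by $p(b)$ is needed and may leak an extra factor such as $\log^2(1/\delta')$ or a $\log m$. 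This does not endanger your theorem --- any bound of the form $\wt O\bigl((\abs{\Sigma}+\log(1/\delta'))\,\mathrm{polylog}/m\bigr)$, including the paper's, plugs into your Step 3 and yields the stated $N$ --- but if you want the clean $O((\abs{\Sigma}+\log(1/\delta'))/m)$ statement you should either prove the multinomial concentration step in full or fall back to the per-coordinate analysis.
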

The result actually holds for arbitrary Bayes net models, not just trees.  The sample complexity becomes $\widetilde{O}(n|\Sigma|^{d+1}/\eps)$ for Bayes nets with in-degree at most $d$.  To the best of our knowledge, this is the first efficient algorithm with this guarantee.
Combining Theorem~\ref{thm:fixed} with Theorem~\ref{thm:mainr} shows
that, for any tree-structured distribution $P$, after
$\Ot(\abs{\Sigma}^3 n/\eps)$ samples, we can properly learn a tree-structured distribution $Q$
satisfying $\kl{P}{Q} \leq \eps$ (and hence
$\|P-Q\|_{TV} \leq \sqrt{2\eps}$) (see Algorithm~\ref{algo:algoOverall}).
\begin{algorithm}
\SetKwInOut{Input}{input}
\SetKwInOut{Output}{output}
\SetAlgoLined
\caption{Learning Tree-structured Distributions from Samples}
\label{algo:algoOverall}
\Input{Sample access to $P$ over $X_1,\dots,X_n\in \Sigma$}
\Output{A tree Bayes net $Q$}
\BlankLine
$T\gets$ MST of empirical mutual informations (Lemma~\ref{lem:mainnr} or Theorem~\ref{thm:mainr}) \tcp*{structure learning}
$Q\gets$ the Laplace estimator over the edges of $T$ (Theorem~\ref{thm:fixed})\tcp*{conditional probabilities}
\KwRet{$Q$}
\end{algorithm}


\subsection{Related Work}
\ignore{
Learning a multivariate distribution from samples
is an important problem in machine
learning and statistics with many applications. 
The problem is provably intractable for general high-dimensional multivariate distributions (\textcolor{red}{what to cite?}. Thus, structural assumptions are to be made for designing efficient and practical learning algorithms for high-dimensional distributions. Graphical models including Bayesian networks and Markov Random Fields (MRFs)  
are widely popular natural classes of structured distributions. In this setup, the problem natural decomposes into two subproblems: {\em structure learning} and {\em parameter learning}.

For structure learning, the goal is to output the best structure (e.g.: a Bayes net that maximizes the likelihood of the data) as well as the parameters of the distributions, given independent samples. Unfortunately, in general for both Bayes nets and MRFs, finding the best structure is known to be NP-hard~\cite{DBLP:conf/soda/KargerS01,DBLP:conf/aistats/Chickering95}.  In this context, Chow-Liu algorithm remains one of the few efficient structure learning algorithms. Since its publication, researchers continues to  look into analyzing properties of this algorithm \cite{CW73,DBLP:conf/icml/Meila99,TATW11, TTZ20} and generalizing it to other classes of graphs, e.g., polytrees \cite{Dasgupta13}, bounded treewidth graphs \cite{Sre03, NB04}, and mixtures of trees~\cite{MJ01, AHHK12}. Most of the earlier analysis focus on proving properties of the algorithm, most notably when the algorithm recovers the exact tree structure, in the limit that the number of samples tends to infinity. 

In this work we consider a PAC-style analysis of Chow-Liu algorithm. The only previous 
 
One of the main technical components of 
our analysis of Chow-Liu algorithm is a new
{\em conditional independence tester} which may be of independent interest in the area of {\em distribution property testing}~\cite{GoldreichR11,DBLP:conf/focs/BatuFFKRW01}.  
Distribution property testing is vast and rapidly progressing area with many significant results. We refer the reader to the surveys \cite{Canonne15Survey,DBLP:journals/crossroads/Rubinfeld12} and the textbook \cite{DBLP:books/cu/Goldreich17} and references therein for more details and results on the topic. Testing independence of two or more random variables has received some attention in distribution testing~\cite{DBLP:conf/focs/BatuFFKRW01,DBLP:conf/nips/AcharyaDK15,canonne2018testing}. Simplest formulation of the problem is the following: Given samples from an unknown joint distribution on variables $(X,Y)$: decide with probability $\geq 2/3$ whether $X$ and $Y$ are independent or they are $\eps$ far (under some distance measure) from the product distribution $X\times Y$. Very recently, motivated by its practical importance, Canonne et. al. considered the problem of conditional independence testing~\cite{canonne2018testing}. In particular, given samples from an unknown discrete random variable $(X, Y, Z)$ on domain $[\ell_1]\times [\ell_2] \times [n]$, distinguish, with probability at least 2/3, between the case that $X$ and $Y$ are conditionally independent given $Z$ from the case that $(X,Y,Z)$ is $\eps$-far, from every distribution that has this property. They establish optimal sample complexity bound for the case when $l_1$ and $l_2$ are constants as a function of $n$ and $\eps$. The key difference in our setting is that we are interested in the case when all variables takes values from the same alphabet $\Sigma$. 

Finally, we would like to point out that even though structure learning of Bayes nets is computationally hard, there are efficient algorithms known for parameter learning of known Bayes nets~\cite{DBLP:journals/ml/Dasgupta97,DBLP:journals/corr/abs-2002-05378}.

\medskip

\noindent{\color{red} Below refs. not  yet discussed}

General Complexity Sample complexity bounds:
\cite{DBLP:conf/sigecom/Brustle0D20}

Learning structure of Bayes nets (under practical assumptions): at lease two lines of work~\cite{DBLP:conf/uai/Squires0U20} and \cite{DBLP:journals/corr/abs-2006-11970}, and also \cite{DBLP:conf/sigecom/Brustle0D20}.

Learning and testing high-dimensional distributions:
\cite{DBLP:journals/tit/DaskalakisD019,DBLP:journals/tit/CanonneDKS20,DBLP:conf/colt/DaskalakisP17,DBLP:conf/focs/KlivansM17}
}

Learning a multivariate distribution from samples
is an important problem in machine
learning and statistics with many applications. 
The problem is provably intractable for general high-dimensional multivariate distributions (e.g., \cite{pmlr-v40-Kamath15}). Thus, structural assumptions need to be made for designing efficient and practical learning algorithms for high-dimensional distributions. Graphical models including Bayesian networks and Markov Random Fields (MRFs)  
are widely popular natural classes of structured distributions. In this setup, the learning problem naturally decomposes into two subproblems: {\em structure learning} and {\em parameter learning}.

For structure learning, the goal is to output the best structure (eg: a Bayes net that maximizes the likelihood of the data), given independent samples. Unfortunately, in general for both Bayes nets and MRFs, finding the best structure is known to be NP-hard~\cite{DBLP:conf/aistats/Chickering95, DL97, DBLP:conf/soda/KargerS01, Meek01}. 
In this context, Chow-Liu algorithm remains one of the few efficient structure learning algorithms that does not require any additional assumptions.  Since its publication, researchers have continued to  look into analyzing properties of this algorithm \cite{CW73,DBLP:conf/icml/Meila99,TATW11, TTZ20} and generalizing it to other classes of graphs, e.g., polytrees \cite{Dasgupta13}, bounded treewidth graphs \cite{Sre03, NB04}, and mixtures of trees~\cite{MJ00, AHHK12}. Most of these works focus on establishing conditions guaranteeing that the algorithm recovers the {\em exact} tree structure in the limit that the number of samples tends to infinity.  Also, for general graph-structured Ising and Markov random fields, several algorithms \cite{BMS13, WRN13, Bres15, DBLP:conf/focs/KlivansM17, WSD19, Goel20} have been proposed that recover the graphical structure under various distributional assumptions. 

As mentioned in the introduction, a common motivation for structure learning is to subsequently use the structure for inference algorithms. For such applications, instead of recovering the exact structure, it is more relevant to recover a model that approximates the original distribution statistically and on which inference can be performed efficiently. For example, Wainwright \cite{Wai06} discusses situations in which it is computationally beneficial to use inconsistent learning algorithms (even in the infinite sample limit) to feed into approximate inference algorithms.  Trees play an important role for inference algorithms, since the commonly used sum-product algorithm assumes tree structure, and other more general inference algorithms (like the junction tree algorithm and various approximate inference techniques) rely on tree-like structure.  This is what motivates the notion of learning $\eps$-approximate trees considered in this paper.

The problem of learning $\eps$-approximate graphical models has a long history. H\"offgen \cite{Hof93} studied the $\eps$-approximate structure learning of an unknown Bayes net over $\{0,1\}^n$ of indegree $d$ as a combinatorial optimization problem and gave a sample complexity of $\widetilde{O}(n^24^d\eps^{-2})$\footnote{H\"offgen's capped the empirical probabilities away from 0 and 1 and then used a plug-in estimator for entropy/MI.}. He showed that the optimization problem is efficient for trees ($d=1$), essentially establishing Lemma \ref{lem:mainnr} for distributions on $\{0,1\}^n$.
There have been several other works which provide PAC-learning guarantees for generalizations of trees: 
bounded tree-width junction trees \cite{NB04, CG08}, factor graphs \cite{AKN06}, and forest-structured MRF's \cite{LXGGLW11}. While we consider the KL divergence between the true distribution and the output of Chow-Liu. Bresler and Karzand \cite{bresler2020} recently studied the same question with respect to maximum total variation distance between pairwise marginals. Their work is restricted to Ising models, and their sample complexity depends on bounds on distributional parameters (edge weights) while ours do not.  In another recent work, Brustle, Cai and Daskalakis \cite{BCD20} (generalizing the results in \cite{DMR20}) get bounds on the sample complexity of learning $\eps$-approximate\footnote{In total variation distance rather than KL} MRF's with bounded hyper-edges and Bayesian networks with bounded in-degree, but they do not get efficient algorithms for these problems.

The main technical component of 
our analysis of the Chow-Liu algorithm is a new
{\em conditional independence tester} which falls in the framework of {\em distribution property testing}~\cite{GoldreichR11,DBLP:conf/focs/BatuFFKRW01}.
We refer the reader to the surveys \cite{Canonne15Survey,DBLP:journals/crossroads/Rubinfeld12} and the textbook \cite{DBLP:books/cu/Goldreich17} and references therein for more details and results in this rapidly progressing field. Testing independence of two or more random variables has received some attention in distribution
testing~\cite{DBLP:conf/focs/BatuFFKRW01,DBLP:conf/nips/AcharyaDK15,DBLP:conf/focs/DiakonikolasK16,canonne2018testing,diakonikolas2020optimal}. The simplest formulation of the problem is the following: Given samples from an unknown joint distribution on variables $(X,Y)$: decide with probability $\geq 2/3$ whether $X$ and $Y$ are independent or they are $\eps$ far (under some distance measure) from the product distribution $X\times Y$. Very recently, motivated by its practical importance, \cite{canonne2018testing} considered the problem of conditional independence testing. In particular, given samples from an unknown discrete random variable $(X, Y, Z)$ on domain $[\ell_1]\times [\ell_2] \times [n]$, distinguish, with probability at least 2/3, between the case that $X$ and $Y$ are conditionally independent given $Z$ from the case that $(X,Y,Z)$ is $\eps$-far in TV distance from every distribution that has this property. 
The key difference in our setting is that we are interested in the stronger notion of KL divergence. 

The parameter learning problem (i.e., learning the distribution with given structure) is also well-studied. 
Dasgupta \cite{DBLP:journals/ml/Dasgupta97} showed an $\wt{O}(n^2 2^d \log(1/\delta)/\eps^2)$ for learning an $\eps$-approximate Bayes net on $n$ boolean variables with in-degree at most $d$. We improve the dependence on $n$ and $\eps^{-1}$ to linear. For the realizable setting\footnote{This result (for TV distance) was also claimed in the appendix of~\cite{DBLP:journals/tit/CanonneDKS20}, but the analysis there appears to incomplete \cite{Can20}.}, \cite{DBLP:journals/corr/abs-2002-05378} also obtained the same improvement but for constant error probability $\delta$. The key to obtaining our $\log(1/\delta)$ dependence is a PAC analysis of the add-1 estimator, which is new to the best of our knowledge. Kamath, Orlitsky, Pichapati and Suresh~\cite{pmlr-v40-Kamath15} analyze the expected risk, which does not directly imply a high-probability bound. Recently, Bresler and Karzand~\cite{9174341} studied the parameter recovery problem for tree Ising models with respect to the TV distance.

Finally, we note that while our work focuses on learning tree structured distributions, recent works \cite{DBLP:journals/tit/DaskalakisD019,DBLP:journals/tit/CanonneDKS20,DBLP:conf/colt/DaskalakisP17, DBLP:journals/corr/abs-2002-05378} have investigated testing problems for more general classes of high-dimensional distributions.

\paragraph{Concurrent Work.}
During preparation of this paper, a concurrent work by Daskalakis and
Pan~\cite{DBLP:journals/corr/abs-2010-14864} was posted online.  The headline result---that
Chow-Liu learns tree-structured distributions with near-linear number
of samples---is the same.  The techniques employed are quite
different and more involved, with~\cite{DBLP:journals/corr/abs-2010-14864} working in
squared Hellinger distance rather than KL and not involving the
connection to conditional independence testing
(Theorem~\ref{thm:conditionalindependencetesting}).  The details of
the theorem are also somewhat different, most notably in that our
result uses a $\log n$ factor more samples
while~\cite{DBLP:journals/corr/abs-2010-14864} only works for a binary
alphabet $\Sigma$.


\section{Proof Overview}\label{sec:overview}
For the purposes of this proof overview, we consider a constant size alphabet
$\Sigma$.

\subsection{Finding an Approximate Tree}
For any distribution $P$ and a tree $T$, it is known that $P_T$ is simply the
distribution that matches the marginals of $P$ on each edge of $T$.
The Chow-Liu algorithm~\cite{DBLP:journals/tit/ChowL68} is based on the following observation:
\begin{align}\label{eq:ChowLiuEquality}
  \kl{P}{P_T} = J_P - \text{wt}_P(T)
\end{align}
where $J_P = \sum_v H(P_v) - H(P)$ is independent of $T$ ($P_v$ is the marginal on variable $v$), and
\[
  \text{wt}_P(T) := \sum_{(X, Y) \in T} I(X; Y)
\]
is the weight of $T$ in the complete graph weighted by pairwise mutual
information.  Therefore $\kl{P}{P_T}$ is minimized when $T$ is the
maximum weight spanning tree $T^*$ of this weighted complete graph.

The main question is how many samples are necessary for the
maximum-weight spanning tree of the \emph{empirical} distribution
$\wh{P}$ to have nearly-optimal weight under the true distribution
$P$.  That is, for Chow-Liu to recover a $\wh{T}$ with
$\kl{P}{P_{\wh{T}}} \leq \kl{P}{P_{T^*}}  + \eps$, it is necessary and sufficient
that
\[
  \wh{T} = \argmax \text{wt}_{\wh{P}}(T)
\]
satisfies
\begin{align}\label{eq:goal}
  \text{wt}_{P}(\wh{T}) \geq \text{wt}_{P}(T^*)  - \eps.
\end{align}

\paragraph{The non-realizable setting.}  The simplest approach to
achieving~\eqref{eq:goal} would be to ensure that
$\abs{I(\wh{X}; \wh{Y}) - I(X; Y)} \leq \frac{\eps}{2n}$ for all
vertex pairs $(X, Y)$.  This guarantees for every $T$ that
$\abs{\text{wt}_{\wh{P}}(T) - \text{wt}_{P}(T)} \leq \abs{T}
\frac{\eps}{2n} < \eps/2$, which gives~\eqref{eq:goal}.  Estimating
mutual information to within $\frac{\eps}{2n}$ is possible with
$\Thetat(n^2/\eps^2)$ samples, with high probability.  A union bound
over all vertex pairs then gives the Lemma~\ref{lem:mainnr} upper bound.
We also show that this bound is tight.  Estimating $I(X; Y)$ to
$\pm \eps$ really does require $1/\eps^2$ samples (for example, if $X$
is uniform on $\{0, 1\}$ and $\Pr[Y = X] = p \approx \frac{3}{4}$,
then estimating $I(X; Y) = 1-h(p)$ requires estimating $p$ to within
$\pm \Theta(\eps)$).  We can translate this hardness into a
$\Omega(1/\eps^2)$ lower bound for a (non--tree-structured)
three-variable $P$ [see Figure~\ref{fig:hard}]; and by concatenating
$\Omega(n)$ of these instances together, we get an
$\Omega(n^2/\eps^2)$ lower bound.

\begin{figure*}
  \centering
  \begin{subfigure}[t]{0.48\textwidth}
    \centering
  \begin{tikzpicture}
    \tikzset{vertex/.style={draw,circle, align=center}}
    \node[vertex] at (0, 0) (X) {X};
    \node[vertex] at (1.5, 1.7) (Y) {Y};
    \node[vertex] at (3, 0) (Z){Z};
    \draw (X) -- node[above left] {$p_{X=Y} = \frac{3}{4}$} ++(Y);
    \draw (Y) -- node[above right] {$p_{Y=Z} = \frac{3}{4} \pm O(\eps)$} ++(Z);
    \draw (X) -- node[below] {$p_{X=Z} = \frac{3}{4}$} ++(Z);
  \end{tikzpicture}
  \caption{Hard instance for non-realizable setting.  $X, Y$, and $Z$
    are individually uniform on $\{0, 1\}$, and pairwise match with
    probability $\approx \frac{3}{4}$.  Any $\eps$-optimal tree will
    include edge $YZ$ if $p_{Y=Z} = \frac{3}{4} + O(\eps)$ and not if
    $p_{Y=Z} = \frac{3}{4} - O(\eps)$; determining which takes
    $\Omega(1/\eps^2)$ samples.}
    \label{fig:hard}
  \end{subfigure}
  \quad
  \begin{subfigure}[t]{0.48\textwidth}
    \centering
    \begin{tikzpicture}
      \tikzset{vertex/.style={draw,circle, align=center}}
      \node[vertex] at (0, 0) (X) {X};
      \node[vertex] at (1.5, 1.7) (Y) {Y};
      \node[vertex] at (3, 0) (Z){Z};
      \draw[thick] (X) -- node[above left] {$p_{X=Y} = 1-\Theta(\eps)$} ++(Y);
      \draw[thick] (Y) -- node[above right] {$p_{Y=Z} = \frac{3}{4} + O(\eps)$} ++(Z);
      \draw[dotted] (X) -- node[below] {$p_{X=Z} = \frac{3}{4}$} ++(Z);
    \end{tikzpicture}
    \caption{A similar example in the realizable setting fails: if $P$
      is actually $X\text{-}Y\text{-}Z$-structured, and
      $p_{Y=Z}, p_{X=Z}$ are as on the left, then $p_{X=Y}$ must be
      $1-O(\eps)$.  This means that $I(\wh{Y}; \wh{Z})$ is highly
      correlated with $I(\wh{X}; \wh{Z})$, so
      $I(\wh{Y}; \wh{Z}) - I(\wh{X}; \wh{Z})$ is $\eps$-accurate with
      only $O(1/\eps)$ samples.}
    \label{fig:hardrealizable}
  \end{subfigure}
  \caption{The $\Omega(1/\eps^2)$ bound in the non-realizable setting, and its inapplicability to the realizable setting.}
\end{figure*}

\paragraph{The realizable setting.}  Fortunately, we can do much
better in the realizable setting, where $P$ is actually
$T^*$-structured for some tree $T^*$.  We show that the errors in
estimating mutual information are correlated, as illustrated in
Figure~\ref{fig:hardrealizable}, so that the \emph{difference} between
mutual informations will be estimated more accurately than the mutual
information itself.

As an example, consider the three variable case, where the true $T^*$
is $X\text{-}Y\text{-}Z$ and we want to ensure the algorithm does not pick edge $XZ$
over $YZ$.  We use the identity:
\[
  I(Y; Z) - I(X; Z) = I(Y; Z \mid X) - I(X; Z \mid Y).
\]
In order for picking $XZ$ over $YZ$ to be $\eps$-bad, the left hand
side must be at least $\eps$.  On the other hand, because $P$ is
$X\text{-}Y\text{-}Z$-structured, $I(X; Z \mid Y) = 0$, and hence
$I(Y; Z \mid X) \geq \eps$.

Chow-Liu looks at the empirical mutual information, which has the same
identity:
\begin{align}\label{eq:empiricalconditionalidentity}
  I(\wh{Y}; \wh{Z}) - I(\wh{X}; \wh{Z}) = I(\wh{Y}; \wh{Z} \mid
  \wh{X}) - I(\wh{X}; \wh{Z} \mid \wh{Y}).
\end{align}
In order for Chow-Liu to return the wrong tree by picking XZ over YZ,
this must be negative.  For this to happen, either
$I(\wh{X}; \wh{Z} \mid \wh{Y}) > \eps/10$ or
$I(\wh{Y}; \wh{Z} \mid \wh{X}) \leq \eps/10$.  This is, effectively, a
question about conditional independence testing---after how many
samples can we distinguish the conditionally independent distribution
$(X, Z \mid Y)$ from the $\eps$-far from conditionally independent
distribution $(Y, Z \mid X)$?  Our
Theorem~\ref{thm:conditionalindependencetesting} (discussed in the next
section) shows that $\Ot(1/\eps)$ samples suffice for the empirical
conditional mutual information to distinguish these cases, so
that~\eqref{eq:empiricalconditionalidentity} will be positive.

For the general $n$-variable case, consider the tree $\wh{T}$ returned by
Chow-Liu.  We can pair up the edges in $\wh{T} \setminus T^*$ with those
from $T^* \setminus \wh{T}$, such that each edge $WZ$ in $\wh{T} \setminus T^*$
is matched to an edge $XY \in T^* \setminus \wh{T}$ along the $W \leadsto Z$
path in $T^*$.  We then use the more complicated identity
\begin{align*}
  &I(X; Y) - I(W; Z)\\ &= I(X; Y) - I(X; Z) + I(X; Z) - I(W; Z) \\
                    &= I(X; Y \mid Z) - I(X; Z \mid Y) + I(Z; X \mid W) - I(Z; W \mid X).
\end{align*}
As in the three-variable case, the negative terms in this RHS are
zero, and (if picking $WZ$ over $XY$ is $\eps/n$-bad) at least one of
the positive terms is at least $\eps/(2n)$.  If this is the case, then
again Theorem~\ref{thm:conditionalindependencetesting} means that the
empirical estimates of these terms, after $\Ot(n/\eps)$ samples, will
be sufficiently accurate that
$I(\wh{X}; \wh{Y}) - I(\wh{W}; \wh{Z}) > 0$, and hence Chow-Liu will
choose $XY$ over $WZ$.  As a result, with $\Ot(n/\eps)$ samples, the
tree $\wh{T}$ recovered by Chow-Liu will satisfy~\eqref{eq:goal},
giving Theorem~\ref{thm:mainr}.

\subsection{Conditional Independence Testing}

\paragraph{Independence Testing.}
To build up to conditional independence testing with respect to mutual
information, consider \emph{unconditional} independence testing: given
samples $(X, Y) \sim P_{XY}$, determine whether $I(X; Y)$ is $0$ or
$\geq \eps$.  We would like to show that, with
$O(\frac{1}{\eps}\log\frac{1}{\eps})$ samples, the empirical mutual
information $I(\wh{X}; \wh{Y})$ will distinguish between these two
cases.  [Note that $\Omega(\frac{1}{\eps}\log\frac{1}{\eps})$ samples
are necessary even in the binary setting: if $X = Y$ always, but
$\Pr[X = 1]$ is either $0$ or $\eps / \log(1/\eps)$, the mutual
information is either $0$ or $\Theta(\eps)$, but the first
$\Omega(\frac{1}{\eps}\log\frac{1}{\eps})$ samples will probably all
be zero in either case.]

For intuition, consider the binary setting.  Let
$p_y = \Pr[X = 1 \mid Y=y]$, and $p = \Pr[X=1] = \E_y[p_y]$, so
\[
  I(X; Y) = H(X) - H(X \mid Y) = h(p) - \E_{y \sim Y}[h(p_y)]
\]
for the binary entropy function $h$.  Now, estimating either $h(p)$ or
the $h(p_y)$ to $\pm \eps$ would require $1/\eps^2$ samples: if
$p \approx \frac{1}{4}$, we would need
$\abs{\wh{p} - p} \lesssim \eps$ to estimate the individual entropies
accurately.  But if we expand $h(p_y)$ in a Taylor expansion around
$p = \E_y[p_y]$, the \emph{constant and linear terms cancel} leaving
$I(X; Y) \approx \frac{1}{2}h''(p) E_y[(p_y - p)^2]$.  So
distinguishing $I(X; Y) = 0$ from $I(X; Y) \geq \eps$ involves
distinguishing between $\E[(p_Y - p)^2] = 0 $ and
$\E[(p_Y - p)^2] \gtrsim \eps / h''(p)$.  Up to a log factor coming
from $h''(p)$, at least if the distribution of $y$ is fairly balanced,
this means it suffices to estimate each $p_y$ to within
$\pm \sqrt{\eps}/10$, which takes $O(1/\eps)$ samples.

More formally and more generally, by expressing mutual information as
KL and removing each entry in the sum's linear dependence on
$\Delta_{xy} := P_{xy} - P_xP_y$, we can write
\begin{align}\label{eq:If-overview}
  I(X; Y) =  \kl{P_{XY}}{P_XP_Y} = \sum_{x,y \in \Sigma^2} f(\Delta_{xy}, P_xP_y)
\end{align}
for some function $f$ satisfying
\[
    f(a, b) = \Theta\left(\min\left({a^2 \over b}, \abs{a} \log\left(2+{\abs{a}\over b}\right)\right)\right).
\]

We then apply Chernoff bounds to show that every individual entry of
the sum~\eqref{eq:If-overview} concentrates: in the completeness case
(Lemma~\ref{lem:completeOne}), for any $x, y$,
\[
  f(\wh{\Delta}_{xy}, \wh{P}_x\wh{P}_y) \gtrsim f(\Delta_{xy}, P_xP_y) - \frac{\log N \log(1/\delta)}{N}
\]
with probability $1-\delta$, and in the soundness case
(Lemma~\ref{lem:soundOne})
\[
  f(\wh{\Delta}_{xy}, \wh{P}_x\wh{P}_y) \lesssim \frac{\log N \log(1/\delta)}{N}.
\]
Taking a union bound over $X$ and $Y$, and plugging this into the
sum~\eqref{eq:If-overview}, gives the desired tester: as long as
$\frac{N}{\log N} \gtrsim \Sigma^2\frac{\log(\Sigma/\delta)}{\eps} $, the
empirical mutual information will distinguish between $I(X; Y) > \eps$
and $I(X; Y) = 0$.

The proofs of Lemma~\ref{lem:completeOne} and Lemma~\ref{lem:soundOne}
are somewhat technical, but straightforward.  We give intuition for
the soundness case and constant probability.  We use the two branches
of $f$ depending on whether $P_{xy} = P_xP_y$ is large or small.  If
$P_{xy} \lesssim 1/N$, then we will typically have
$\wh{\Delta}_{xy} \leq \wh{P}_{xy} \lesssim 1/N$, so
\[
  f(\wh{\Delta}_{xy}, \wh{P}_x\wh{P}_y) \lesssim \wh{\Delta}_{xy} \log N \lesssim \frac{\log N}{N}.
\]
On the other hand, if $P_{xy} \gg 1/N$, then typically
$\wh{P}_x = \Theta(P_x)$, $\wh{P}_y = \Theta(P_y)$, and
(we show) $\abs{\wh{\Delta}_{xy}} \lesssim \sqrt{P_xP_y/N}$.  Hence
\[
  f(\wh{\Delta}_{xy}, \wh{P}_x\wh{P}_y) \lesssim \frac{\wh{\Delta}_{xy}^2}{\wh{P}_x\wh{P}_y} \lesssim \frac{1}{N}.
\]

\paragraph{Conditional independence testing.}  By definition,
\[
  I(X; Y \mid Z) = \sum_z \Pr[Z=z] I(X; Y \mid Z=z).
\]

Given $N$ samples of $(X, Y, Z)$, we expect about $N\Pr[Z=z]$ samples
from $(X; Y \mid Z=z)$.  This means our unconditional mutual
independence tester will distinguish $I(X; Y \mid Z=z) = 0$ from
$I(X; Y \mid Z=z) > \frac{\abs{\Sigma}^2}{\Omegat(N \Pr[Z=z])}$.  If the distribution passes
all these independence checks, then
\[
  I(X; Y \mid Z) \leq \sum_z \Pr[Z=z]  \frac{\abs{\Sigma}^2}{\Omegat(N \Pr[Z=z])} = \frac{\abs{\Sigma}^3}{\Omegat(N)}.
\]
Thus $N = \Ot(\abs{\Sigma}^3/\eps)$ samples suffice to test conditional
independence.  A bit more care shows that the empirical conditional
mutual information works as a test statistic, achieving
Theorem~\ref{thm:conditionalindependencetesting}.

\subsection{Distribution Learning with Known Structure}

For Theorem~\ref{thm:fixed}, it is implicit
in~\cite{DBLP:journals/tit/ChowL68,DBLP:journals/ml/Dasgupta97} that
it suffices to learn the conditional distributions in KL.  While the
empirical add-1 estimator of a discrete distribution was known to
have small \emph{expected} KL error with $\Ot(\abs{\Sigma}/\eps)$
samples~\cite{pmlr-v40-Kamath15}, to our knowledge a high-probability
bound was not known.  We use a similar analysis to our independence
tester---including the same decomposition~\eqref{eq:If-overview} of
KL---to show that the empirical add-1 estimator is accurate with high
probability (Theorem~\ref{thm:KLlearn}).  We then show that our
samples from $P$ give enough samples from each individual conditional
distribution to estimate $P$ well.

\subsection{Questions}

A natural question is whether the $\abs{\Sigma}^3$ dependence in our bounds
can be improved.  The $\abs{\Sigma}^3$ term is necessary to achieve
Theorem~\ref{thm:conditionalindependencetesting} as stated; with fewer
measurements, in the soundness case of a perfectly uniform
distribution, the empirical conditional mutual information will exceed
$\eps$.  However, it is quite possible that the empirical conditional
mutual information---though $\gg \eps$---is still smaller than in the
completeness case.  Just such behavior occurs when using the empirical
total variation statistic in testing total variation from
uniformity~\cite{diakonikolas2018sample}.

Another natural question is whether one can reduce tree structure
learning to conditional independence testing as a black box.  The
Chow-Liu algorithm only considers pairwise mutual information, and
never looks at conditional mutual information at all.  Our analysis
introduces conditional mutual information
through~\eqref{eq:empiricalconditionalidentity}, which relies on the
test statistic being the empirical mutual information.  If future work
develops better conditional independence testers based on different
test statistics, does that imply more sample-efficient (but possibly
slower) algorithms for tree structure learning?

Recovering the structure of an unknown bounded degree Bayesian network remains an outstanding open question. Recently, Brustle, Cai, and Daskalakis~\cite{BCD20} have settled the sample complexity of this problem. But finding a polynomial time algorithm remains a challenge, even if we assume a correct topological ordering of the variables.
\ignore{
In this work, we focused on recovering the structure of tree structured distributions from samples. Recovering the structure of bounded degree Bayesian networks in general, have received attention more recently. Brustle et al.~\cite{DBLP:conf/sigecom/Brustle0D20} have given an exponential time algorithm for approximately learning a Bayes net of unknown structure from samples. Gao et al.~\cite{DBLP:journals/corr/abs-2006-11970} and the chain of works referenced therein, have given polynomial-time algorithms for learning the underlying directed acyclic graph of Bayesian networks from samples under certain practical assumptions. The later line of work outputs a topological ordering of the variables which is close to that of the unknown distribution and leave the work of finding the set of parents of the nodes for subsequent downstream processing. Efficiently finding such sets of parents of every variable, when their number is bounded, assuming a correct topological order of the nodes, is an interesting problem.
}

\subsection{Organization}

The rest of the paper is organized as follows.
Section~\ref{sec:prelims} describes the background and fixes notation.
Section~\ref{sec:testCI} analyzes conditional independence testing via
empirical mutual information (Theorem~\ref{thm:conditionalindependencetesting}).
Section~\ref{sec:tree} uses this to show that Chow-Liu recovers an
$\eps$-approximate tree (Theorem~\ref{thm:mainr}).  Section~\ref{sec:distribution}
shows how to recover the distribution given the tree (Theorem~\ref{thm:fixed}).
Finally, Section~\ref{sec:lower} gives lower bounds for finding an
$\eps$-approximate tree $T$, showing that Lemma~\ref{lem:mainnr} and
Theorem~\ref{thm:mainr} are nearly optimal.


\section{Notation and Preliminaries}\label{sec:prelims}

For an undirected tree $T$, a {\em rooted orientation} of $T$ fixes a root vertex and orients the edges outwards from it. For a rooted orientation of $T$, if $i$ is a vertex in $T$, $\mathrm{pa}(i)$ denotes its parent node if any, and $\text{nd}(i)$ denotes the subset of vertices not reachable from $i$.

\begin{definition}[Tree-structured distributions]\label{def:tsd}
Let $T$ be a tree. Fix any rooted orientation of it. Label the nodes of $T$ in topological order (so, node $1$ is the root). A probability distribution $P$ over $X = (X_1,\dots,X_n) \in \Sigma^n$ is said to be {\em $T$-structured} if: every variable $X_i$ is conditionally independent of $\{X_j: j \in \mathrm{nd}(i)\}$ given $X_{\mathrm{pa}(i)}$.Equivalently, $P$ admits the following factorization:
\[
\Pr[X = x] \coloneqq \Pr[X_1 = x_1] \cdot \prod_{i=2}^n \Pr[X_i = x_i \mid X_{\mathrm{pa}(i)} = x_{\mathrm{pa}(i)}] 
\]
A {\em tree-structured distribution} is $T$-structured for some tree $T$.
\end{definition}

\ignore{
We are interested in learning the structure of  a multidimensional distribution over an alphabet $\Sigma$ whose underlying dependency graph is a directed tree. We first state a known fact that in such distributions learning only the undirected graph (which we call the skeleton) of the underlying graph suffices. 

Three vertices $U,V,W$ of a directed graph $G$ is said to form a V-structure if $G$ has the edges $U \rightarrow V$ and $W \rightarrow V$ but the pair $U,W$ is not adjacent. 

\begin{theorem}
Let $G$ and $H$ be two graphs on the same vertex set so that they have the same skeleton and same V-structure. Then any probability distribution that can be represented by $G$ can also be represented by $H$ and vice versa.
\end{theorem}
}
The following classical result justifies why the rooted orientation does not matter in Definition~\ref{def:tsd}. 
\begin{theorem}[\cite{DBLP:conf/uai/VermaP90}]\label{thm:skeleton}\sutanu{this theorem is being referred}
Let $T$ be a tree on $n$ variables, and suppose $P$ is a $T$-structured distribution on $(X_1, \dots, X_n)$. For any 3 nodes $i,j,k \in [n]$, if the unique path between $i$ and $k$ in $T$ passes through $j$, then $X_i$ and $X_k$ are independent conditioned on $X_j$.
\end{theorem}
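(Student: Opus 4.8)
The plan is to reduce to the case where $j$ is the root of $T$ and then read the conditional independence straight off the factorization in Definition~\ref{def:tsd}. The key preliminary observation is that being $T$-structured does not depend on which rooted orientation of $T$ one uses. Starting from the factorization $\Pr[X=x] = \Pr[X_1=x_1]\prod_{v=2}^{n}\Pr[X_v=x_v \mid X_{\mathrm{pa}(v)}=x_{\mathrm{pa}(v)}]$ for the chosen orientation, summing out coordinates in a leaves-to-root order shows that each factor $\Pr[X_v=x_v \mid X_{\mathrm{pa}(v)}=x_{\mathrm{pa}(v)}]$ is exactly the corresponding pairwise marginal conditional of $P$. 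Multiplying and dividing each such edge factor by $\Pr[X_v=x_v]$ rewrites the joint law, on its support, in the orientation-free form
\[
  \Pr[X=x] \;=\; \prod_{v}\Pr[X_v=x_v]\cdot\prod_{\{u,v\}\in E(T)}\frac{\Pr[X_u=x_u,X_v=x_v]}{\Pr[X_u=x_u]\,\Pr[X_v=x_v]},
\]
which depends only on the undirected tree. Reading this identity in reverse for the orientation rooted at $j$ shows $P$ factorizes according to that orientation as well, so from now on we take $j$ to be the root.

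Deleting $j$ splits $T$ into subtrees $T_1,\dots,T_d$ hanging off the neighbours of $j$; since the $i$--$k$ path in $T$ runs through $j$, the endpoints lie in distinct subtrees, say $i\in T_1$ and $k\in T_2$. The root-$j$ factorization groups as $\Pr[X=x]=\Pr[X_j=x_j]\prod_{m=1}^{d}\Psi_m$, where $\Psi_m=\Psi_m(x_j,x_{T_m}):=\prod_{v\in T_m}\Pr[X_v=x_v\mid X_{\mathrm{pa}(v)}=x_{\mathrm{pa}(v)}]$ involves only $x_j$ and the coordinates in $T_m$ (the root of $T_m$ is a child of $j$, so its only ``external'' conditioning variable is $X_j$). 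For fixed $x_j$, summing $\Psi_m$ over all coordinates of $T_m$ telescopes leaves-to-root to $1$, and summing over all coordinates of $T_m$ except a distinguished node $w$ gives $\Pr[X_w=x_w\mid X_j=x_j]$ (equivalently, $\Psi_m(x_j,x_{T_m})=\Pr[X_{T_m}=x_{T_m}\mid X_j=x_j]$ whenever $\Pr[X_j=x_j]>0$).

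Marginalizing $\Pr[X=x]$ down to the coordinates $i,j,k$: the subtrees $T_m$ with $m\ge 3$ each contribute a factor $1$, $T_1$ contributes $\Pr[X_i=x_i\mid X_j=x_j]$, and $T_2$ contributes $\Pr[X_k=x_k\mid X_j=x_j]$. Hence $\Pr[X_i=x_i,X_j=x_j,X_k=x_k]=\Pr[X_j=x_j]\,\Pr[X_i=x_i\mid X_j=x_j]\,\Pr[X_k=x_k\mid X_j=x_j]$, which rearranges, for every $x_j$ with $\Pr[X_j=x_j]>0$, to $\Pr[X_i=x_i,X_k=x_k\mid X_j=x_j]=\Pr[X_i=x_i\mid X_j=x_j]\,\Pr[X_k=x_k\mid X_j=x_j]$; that is, $X_i$ and $X_k$ are independent given $X_j$.

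The only subtle step is the first one --- showing the factorization is orientation-free (equivalently, that the local Markov property for one rooting propagates to every rooting), where one has to be slightly careful with configurations of probability zero; everything afterward is bookkeeping with sums of products of conditional distributions, which I would spell out in full. A shorter but less self-contained route is to appeal to soundness of $d$-separation for Bayesian networks: in any rooted orientation the unique $i$--$k$ path in $T$ ascends from $i$ to the least common ancestor of $i$ and $k$ and then descends to $k$, so it contains no colliders, and conditioning on the interior node $j$ blocks it; soundness of $d$-separation then yields $X_i\perp X_k\mid X_j$ immediately. I would present the marginalization argument as the proof and mention the $d$-separation alternative as a remark.
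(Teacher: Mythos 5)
Your argument is correct, but note that the paper does not prove Theorem~\ref{thm:skeleton} at all: it is quoted as a classical result with a citation to Verma and Pearl \cite{DBLP:conf/uai/VermaP90}, so there is no in-paper proof to match. Your proposal supplies a legitimate self-contained proof: establish that the factorization in Definition~\ref{def:tsd} is orientation-free (the symmetric form $\prod_v \Pr[X_v=x_v]\prod_{\{u,v\}\in E(T)}\Pr[X_u=x_u,X_v=x_v]/(\Pr[X_u=x_u]\Pr[X_v=x_v])$), re-root at $j$, and marginalize each subtree hanging off $j$ down to the single distinguished node, which yields $\Pr[x_i,x_j,x_k]=\Pr[x_j]\Pr[x_i\mid x_j]\Pr[x_k\mid x_j]$ and hence the claim; this is exactly the standard Markov-chain-on-a-tree argument, and the $d$-separation remark is a valid shortcut since a path with no colliders is blocked by conditioning on any interior node. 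The one place that genuinely needs the care you flag is the re-rooting step: the directed factorization holds for all $x$ only with a convention for conditionals at zero-probability parent configurations, and the symmetric form is derived on the support, so when you "read it in reverse" for the root $j$ you must either argue that both sides vanish off the support or, more simply, observe that the conditional independence only needs to be verified for $x_j$ with $\Pr[X_j=x_j]>0$ and that all marginalizations can be carried out on the support. An alternative that avoids re-rooting altogether is to keep the original root and do a short case analysis on whether $j$ is an ancestor of both $i,k$ or lies strictly between them, but your re-rooted version is cleaner once the orientation-independence is in place.
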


To compare distributions, we use the well-known notion of
KL-divergence.  Given two discrete probability distributions $P$ and
$Q$ over $\Sigma$, their $\dkl$-divergence is defined as\footnote{All
  logarithms in this paper are natural, so we measure
  information-theoretic quantities in nats not bits.}
\[\kl{P}{Q} \coloneqq \sum_{x \in \Sigma} P(x)\log {P(x)\over Q(x)}.\]
Recall that we say a tree $T$ is {\em $\eps$-approximate for a
  distribution $P$} if there exists a $T$-structured distribution $Q_T$
such that:
\[
\kl{P}{Q_T} \leq \eps + \min_{\text{tree } T'} \min_{\substack{T'\text{-structured}\\ \text{distribution }Q'}} \kl{P}{Q'}.
\]



\begin{algorithm}
\SetKwInOut{Input}{input}
\SetKwInOut{Output}{output}
\SetAlgoLined
\caption{Learning the Skeleton of Tree-structured Distributions from Samples}
\label{algo:algoChowLiu}
\Input{Sample access to $P$ over $X_1,\dots,X_n\in \Sigma$}
\Output{A tree $T$}
\BlankLine
$\hat{P}\gets$ the empirical distribution of $m$ i.i.d.~samples from $P$\;
\For{every $1 \le i<j\le n$}{
$I(\wh{X}_i,\wh{X}_j)\gets$ the mutual information between the variables $X$ and $Y$ with respect to $\hat{P}$\;
}
$G \gets$ the weighted complete undirected graph on $[n]$ whose edge-weight $(i,j)$ is $I(\wh{X}_i,\wh{X}_j)$\;
$T \gets$ a maximum weight spanning tree of $G$\;
\KwRet{$T$}
\end{algorithm}


The following lemma is implicit in~\cite{DBLP:journals/tit/ChowL68,DBLP:journals/ml/Dasgupta97}.  

\begin{lemma}\label{lem:ApproxChowLiu}
  For a fixed tree $T$, let $\text{pa}(v)$ denote the parent of $v$ in
  $T$ (or $\perp$ if $v$ is the root).  Let $X \sim P$ and $X' \sim Q$
  for some $T$-structured $Q$.  Then, if $\kl{P}{Q}$ is bounded:
  \begin{align*}
    &\kl{P}{Q} = \left(-H(X) + \sum_{v \in V} H(X_v)\right) - \sum_{v \in V} I(X_v; X_{\text{pa}(v)})\\&+ \sum_{v \in V} \sum_{x \in \Sigma} \Pr[X_{\text{pa}(v)} = x] \kl{X_v \mid X_{\text{pa}(v)} = x}{X'_v \mid X'_{\text{pa}(v)} = x}
  \end{align*}
\end{lemma}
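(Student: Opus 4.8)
The plan is to expand $\kl{P}{Q}$ by writing both $P$ and $Q$ in a form that exposes the tree structure. Since $Q$ is $T$-structured, fix a rooted orientation of $T$ and use the factorization
\[
Q(x) = Q_1(x_1)\prod_{v\neq \text{root}} Q(x_v \mid x_{\text{pa}(v)}),
\]
from Definition~\ref{def:tsd}. For $P$ there is no such factorization in general, but we can still write $P(x) = \prod_{v} P(x_v \mid x_{\text{anc}(v)})$ for any topological order, or more simply keep $P(x)$ intact and only split $\log Q(x)$. The cleanest route: write $\log \frac{1}{Q(x)} = \log\frac{1}{Q_1(x_1)} + \sum_{v\neq\text{root}} \log\frac{1}{Q(x_v\mid x_{\text{pa}(v)})}$, so that
\[
\kl{P}{Q} = \sum_x P(x)\log P(x) + \E_{X\sim P}\!\left[\log\frac{1}{Q_1(X_1)} + \sum_{v\neq\text{root}}\log\frac{1}{Q(X_v\mid X_{\text{pa}(v)})}\right].
\]
The first term is $-H(X)$.

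Next I would massage the expectation term by adding and subtracting the analogous quantity for $P$'s own edge-marginals. For each non-root $v$, write
\[
\E_P\!\left[\log\frac{1}{Q(X_v\mid X_{\text{pa}(v)})}\right] = \E_P\!\left[\log\frac{P(X_v\mid X_{\text{pa}(v)})}{Q(X_v\mid X_{\text{pa}(v)})}\right] + \E_P\!\left[\log\frac{1}{P(X_v\mid X_{\text{pa}(v)})}\right].
\]
The first piece is exactly $\sum_{x}\Pr_P[X_{\text{pa}(v)}=x]\,\kl{X_v\mid X_{\text{pa}(v)}=x}{X'_v\mid X'_{\text{pa}(v)}=x}$ (using that the conditional of $Q$ matches $Q(X_v\mid X_{\text{pa}(v)})$), which is the last sum in the lemma. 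The second piece equals $H(X_v\mid X_{\text{pa}(v)})$, and similarly the root term gives $H(X_1) = H(X_{\text{root}})$. Now I invoke the chain-rule-style identity $H(X_v\mid X_{\text{pa}(v)}) = H(X_v) - I(X_v;X_{\text{pa}(v)})$, so that $\sum_{v} H(X_v\mid X_{\text{pa}(v)})$ (with the convention $H(X_{\text{root}}\mid X_\perp) := H(X_{\text{root}})$ and $I(X_{\text{root}};X_\perp):=0$) becomes $\sum_{v\in V} H(X_v) - \sum_{v\in V} I(X_v;X_{\text{pa}(v)})$. Collecting terms yields precisely the claimed formula.

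I do not anticipate a serious obstacle here — it is a bookkeeping exercise — but the one point requiring care is the convergence/finiteness caveat: the rearrangement of sums and the cancellation $H(X_v\mid X_{\text{pa}(v)}) = H(X_v) - I(X_v;X_{\text{pa}(v)})$ are only valid when all the individual terms are finite, which is why the hypothesis ``$\kl{P}{Q}$ is bounded'' is stated (it forces $Q(x_v\mid x_{\text{pa}(v)})>0$ whenever the corresponding $P$-probability is positive, so no $\log 0$ or $\infty-\infty$ arises, and $H(X)$, $H(X_v)$ are automatically finite on a finite alphabet). A secondary subtlety is handling the root uniformly with the non-root vertices; I would adopt the convention above rather than special-casing, to keep the algebra clean. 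Everything else is just linearity of expectation and the definitions of entropy, mutual information, and conditional KL divergence.
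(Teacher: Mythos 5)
Your proposal is correct and follows essentially the same route as the paper's proof: expand $\kl{P}{Q}$ as $-H(X)+\E_P[\log(1/Q(X))]$, split $\log(1/Q(X))$ along the tree factorization of $Q$, add and subtract $P$'s conditional log-probabilities so that $H(X_v\mid X_{\text{pa}(v)})=H(X_v)-I(X_v;X_{\text{pa}(v)})$ produces the entropy and mutual-information sums, and identify the remainder as the per-parent-value conditional KL terms. Your explicit handling of the root via the $\perp$ convention and the finiteness caveat is consistent with (and slightly more careful than) the paper's presentation.
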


Lemma~\ref{lem:ApproxChowLiu} is a generalization
of~\eqref{eq:ChowLiuEquality}: since KL divergence is nonnegative, the
$Q = P_T$ minimizing $\kl{P}{Q}$ has every entry in the final sum is
zero, which happens if $Q$ matches the marginals of $P$ on each edge
of $T$.  In that case, $\kl{P}{P_T}$ drops the final sum and
gives~\eqref{eq:ChowLiuEquality}, which we write formally:

\begin{corollary}[\cite{DBLP:journals/tit/ChowL68}]\label{cor:CLweights}
Let $P$ be a distribution over $\Sigma^n$ and $T$ be an undirected tree over the vertex set $[n]$. Let $P_T$ be the most likely distribution of $P$ for the tree $T$. Then, 
\begin{align}
  \kl{P}{P_T} = J_P - \emph{\text{wt}}_P(T)
\end{align}
where $J_P = \sum_v H(P_v) - H(P)$ is independent of $T$ ($P_v$ is the marginal on variable $v$), and
$\emph{\text{wt}}_P(T) := \sum_{(X, Y) \in T} I(X; Y)$.
\end{corollary}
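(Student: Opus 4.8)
The plan is to read the corollary off Lemma~\ref{lem:ApproxChowLiu} as the special case $Q = P_T$. Fix an arbitrary rooted orientation of $T$ and apply the lemma. In the resulting expression for $\kl{P}{Q}$, the group $-H(X) + \sum_{v} H(X_v)$ and the sum $\sum_{v} I(X_v; X_{\mathrm{pa}(v)})$ depend only on $P$ and the edge set of $T$; the sole dependence on the $T$-structured distribution $Q$ is through the final double sum $\sum_{v}\sum_{x}\Pr[X_{\mathrm{pa}(v)}=x]\,\kl{X_v\mid X_{\mathrm{pa}(v)}=x}{X'_v\mid X'_{\mathrm{pa}(v)}=x}$, whose every term is a nonnegative KL divergence.

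First I would identify the minimizer. A $T$-structured $Q$ is determined by a free choice of root marginal and of one conditional table $Q(\cdot\mid X_{\mathrm{pa}(v)}=x)$ for each non-root $v$ and each value $x$; choosing each of these to equal the corresponding conditional of $P$ (i.e.\ $Q(x_1)=\Pr[X_1=x_1]$ and $Q(x_v\mid x_{\mathrm{pa}(v)})=\Pr[X_v=x_v\mid X_{\mathrm{pa}(v)}=x_{\mathrm{pa}(v)}]$) makes every term of the final sum vanish. Since any other $T$-structured $Q'$ only adds a nonnegative quantity, this $Q$ is exactly the $\argmin$, namely $P_T$; it also matches the marginal of $P$ on every edge of $T$. (To legitimately invoke Lemma~\ref{lem:ApproxChowLiu} one needs $\kl{P}{P_T}<\infty$; this holds because $P(x)>0$ forces $P(x_v,x_{\mathrm{pa}(v)})>0$ for each edge, hence $P_T(x)>0$.)

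It then remains to rewrite the two surviving terms. By definition $-H(X)+\sum_v H(X_v) = J_P$. Summing $I(X_v; X_{\mathrm{pa}(v)})$ over the $n-1$ non-root vertices contributes exactly one summand for each edge $\{v,\mathrm{pa}(v)\}$ of $T$, and since mutual information is symmetric this sum equals $\sum_{(X,Y)\in T} I(X;Y) = \mathrm{wt}_P(T)$. Hence $\kl{P}{P_T} = J_P - \mathrm{wt}_P(T)$, an expression that is manifestly independent of the chosen orientation.

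There is no genuine obstacle here: the corollary is an immediate consequence of Lemma~\ref{lem:ApproxChowLiu}. The only points needing a line of care are that the edge-marginal-matching $Q$ really is the minimizer (nonnegativity of KL together with the independence of the conditional tables) and the finiteness needed to apply the lemma.
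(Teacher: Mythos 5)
Your proposal is correct and follows essentially the same route as the paper: the paper also obtains Corollary~\ref{cor:CLweights} directly from Lemma~\ref{lem:ApproxChowLiu}, observing that the final sum of conditional KL divergences is nonnegative and vanishes exactly when $Q$ matches the conditionals (equivalently, edge marginals) of $P$ on $T$, so that $\kl{P}{P_T}=J_P-\text{wt}_P(T)$. Your extra remarks on finiteness and on the orientation-independence of the resulting expression are minor points the paper leaves implicit, and they are handled correctly.
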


This suggests the \textsf{Chow-Liu} algorithm (see
Algorithm~\ref{algo:algoChowLiu}) that we analyze.

\section{Testing Independence and Conditional Independence}\label{sec:testCI}
\paragraph{Setup.}  We assume all random variables are over a discrete
domain $\Sigma$. Let $X$ and $Y$ be random variables over $\Sigma$ distributed jointly according to $P$. For any pair $x,y \in \Sigma$, let $P_x, P_y, P_{xy}$ denote $\Pr[X=x], \Pr[Y=y], \Pr[(X,Y)=(x,y)]$,
respectively.  Let $\Delta_{xy} := P_{xy} - P_xP_y$. Hence
$\sum_{xy} \Delta_{xy} = 0$. Let $(\wh{X},\wh{Y})$ be the random variable distributed according to the \emph{empirical} distribution $\wh{P}$ over $(X,Y)$ over a finite set of independent samples. Let $\wh{P}_x, \wh{P}_y, \wh{P}_{xy}, \wh{\Delta}_{xy}$ denote the same values for $(\wh{X},\wh{Y})$.  

Define
\[
  f(a, b) := (a + b) \log (1 + a/b) - a
\]
for all $b \in [0, 1], a \in [-b, 1-b]$ [with $f(-b, b) = b$, being
the limiting value].

\begin{claim}
For two random variables $X$ and $Y$ over $\Sigma$, $I(X;Y) = \sum_{x,y} f(\Delta_{xy}, P_xP_y)$. 
\end{claim}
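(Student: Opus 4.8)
The plan is to unwind the definition of mutual information and match it against $f$ entry by entry. First I would write $I(X;Y) = \sum_{x,y} P_{xy}\log\frac{P_{xy}}{P_xP_y}$ with the usual conventions ($0\log\frac{0}{b} = 0$ and $0\log\frac{0}{0} = 0$), and substitute $P_{xy} = P_xP_y + \Delta_{xy}$ in every summand. Writing $a := \Delta_{xy}$ and $b := P_xP_y$, the summand becomes $(a+b)\log\bigl(1 + a/b\bigr)$, which is exactly $f(a,b) + a$ by the definition of $f$. Hence
\[
  I(X;Y) \;=\; \sum_{x,y} f(\Delta_{xy}, P_xP_y) \;+\; \sum_{x,y} \Delta_{xy},
\]
and the second sum is $\sum_{x,y} P_{xy} - \sum_{x,y} P_xP_y = 1 - 1 = 0$ --- precisely the normalization $\sum_{xy}\Delta_{xy} = 0$ recorded in the setup. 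This yields the claimed identity.

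The one point that requires care is that the rewriting $(a+b)\log(1+a/b) = f(a,b) + a$ must be checked on the degenerate entries. If $P_xP_y = 0$, then $P_x = 0$ or $P_y = 0$, which forces $P_{xy} = 0$, so $a = b = 0$; here the mutual-information summand is $0$, $f(0,0)$ is the limiting value $0$, and $\Delta_{xy} = 0$, so the correspondence holds. If $b > 0$ but $P_{xy} = 0$ (so $a = -b$), the mutual-information summand is $0$ by convention, while $f(-b,b) + (-b) = b - b = 0$ using the stipulated value $f(-b,b) = b$; again the two sides agree. On every remaining entry $b > 0$ and $a + b = P_{xy} > 0$, so $\log(1+a/b)$ is finite and the rewriting is a literal application of the definition of $f$.

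I do not anticipate a genuine obstacle: the proof is a single substitution together with the observation that the term linear in $\Delta_{xy}$ sums to zero by normalization. The only thing to be careful about is the boundary behaviour of $f$, and the definition has already been arranged --- via the limiting value $f(-b,b) = b$ --- so that the entry-by-entry correspondence holds even on the degenerate entries.
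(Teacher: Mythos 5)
Your proof is correct and follows essentially the same route as the paper's: substitute $P_{xy} = P_xP_y + \Delta_{xy}$ into $I(X;Y)=\kl{P_{XY}}{P_XP_Y}$, recognize each summand as $f(\Delta_{xy},P_xP_y)+\Delta_{xy}$, and use $\sum_{x,y}\Delta_{xy}=0$. Your explicit verification of the degenerate entries ($P_xP_y=0$ or $P_{xy}=0$) is a welcome extra check that the paper leaves implicit in the limiting-value convention for $f$.
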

\begin{proof}
\begin{align}
  I(X; Y) &= D(P_{XY} || P_XP_Y)\notag\\
          &= \sum_{x,y} (P_xP_y + \Delta_{xy}) \log (1 + \Delta_{xy} / (P_xP_y))\notag\\
          &= \sum_{x,y} \left[(P_xP_y + \Delta_{xy}) \log (1 + \Delta_{xy} / (P_xP_y)) - \Delta_{xy}\right]\notag\\
          &= \sum_{x,y} f(\Delta_{xy}, P_xP_y).\label{eq:If}
\end{align}
\end{proof}

\subsection{Analysis of $f$}
\begin{lemma} \label{lem:fapprox}
For any $a\ge -b$ and $ b\ge 0$, 
  \[
    f(a, b) = C_{a,b}\min\left({a^2 \over b}, \abs{a} \log\left(2+{\abs{a}\over b}\right)\right)
  \] 
  where the coefficient $1/3\le C_{a,b} \le 1$. 
\end{lemma}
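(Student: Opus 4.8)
The plan is to eliminate one variable by scaling.  Setting $t = a/b$ (so $t \in [-1,\infty)$, the endpoint $t=-1$ being the limiting case $f(-b,b)=b$), a direct computation gives $f(a,b) = b\,g(t)$ with $g(t) := (1+t)\log(1+t) - t$, and likewise $\min\!\left(\tfrac{a^2}{b},\,|a|\log\!\left(2+\tfrac{|a|}{b}\right)\right) = b\,m(t)$ with $m(t) := \min\!\left(t^2,\,|t|\log(2+|t|)\right)$.  Thus $C_{a,b} = g(t)/m(t)$ depends on $t$ alone, and the lemma is equivalent to
\[
  \tfrac13\, m(t) \;\le\; g(t) \;\le\; m(t) \qquad\text{for all } t \ge -1 .
\]
(When $b = 0$ feasibility forces $a = 0$, a removable degeneracy, so assume $b > 0$.)  The only analytic input I will use is the elementary bound $\log s \le s-1$ for $s > 0$ (equivalently $\log(1+t)\le t$), together with its rearrangement $\log s - s + 1 \le 0$.

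\emph{Upper bound.}  It suffices to prove $g(t)\le t^2$ and $g(t)\le |t|\log(2+|t|)$ separately.  Substituting $s = 1+t$, the first is $s\log s - s + 1 \le (s-1)^2$, i.e.\ $\log s \le s-1$ — done (the case $t=-1$ follows by continuity).  For the second: if $t\ge 0$ then $g(t)\le t\log(1+t)$ is just $\log(1+t)\le t$, and $t\log(1+t)\le t\log(2+t)$; if $t\le 0$ then writing $u = |t|\in[0,1]$ one checks $u \le \log(2+u)$ (the map $u\mapsto\log(2+u)-u$ is decreasing and positive at $u=1$), so $t^2 \le |t|\log(2+|t|)$, hence $m(t) = t^2$, and $g(t)\le t^2 = m(t)$ already suffices.

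\emph{Lower bound.}  Since $m(t)\le t^2$ always, on the range $t \le 3/2$ it is enough to show $g(t) \ge \tfrac13 t^2$.  On $[-1,0]$ I prove the stronger $g(t)\ge \tfrac12 t^2$: with $s=1+t$, the function $\rho(s) = s\log s - s + 1 - \tfrac12(1-s)^2$ has $\rho(1)=0$ and $\rho'(s)=\log s - s + 1 \le 0$, so $\rho\ge 0$ on $(0,1]$.  On $[0,3/2]$, set $\eta(t) = g(t) - \tfrac13 t^2$; then $\eta(0)=\eta'(0)=0$, and $\eta''(t) = \tfrac{1}{1+t}-\tfrac23$ is positive for $t<\tfrac12$ and negative for $t>\tfrac12$, so $\eta'$ rises from $0$ then falls and (since $\eta'(3/2) = \log\tfrac52 - 1 < 0$) changes sign exactly once on $[0,3/2]$; hence $\eta$ increases then decreases there, so its minimum is at an endpoint, and as $\eta(0)=0$ and $\eta(3/2) = \tfrac52\log\tfrac52 - \tfrac94 > 0$ by direct evaluation, $\eta\ge 0$ throughout.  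For $t \ge 3/2$ we have $\log(2+t)\le t$, so $m(t) = t\log(2+t)$ and it remains to show $\Phi(t) := 3g(t) - t\log(2+t)\ge 0$.  Here $\Phi'(t) = 3\log(1+t) - \log(2+t) - \tfrac{t}{2+t} = 2\log(1+t) - \log\tfrac{2+t}{1+t} - \tfrac{t}{2+t}$, which for $t\ge 3/2$ is at least $2\log\tfrac52 - \log\tfrac75 - 1 > 0$ because $\log(1+t)$ is increasing, $\tfrac{2+t}{1+t}$ is decreasing, and $\tfrac{t}{2+t}<1$; so $\Phi$ is increasing on $[3/2,\infty)$, and $\Phi(3/2) = 3\!\left(\tfrac52\log\tfrac52 - \tfrac32\right) - \tfrac32\log\tfrac72 > 0$ by direct evaluation, giving $\Phi\ge 0$.

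The main obstacle is the lower bound, and in particular that the constant $1/3$ must survive near the crossover: $g(t)/t^2$ decreases from $1/2$ toward $0$ while $g(t)/(t\log(2+t))$ increases toward $1$, and the two estimates must be spliced around the point $t^\ast\approx 1.15$ where $g(t)/m(t)$ is smallest.  This is why the argument needs the explicit threshold $t=3/2$ and a few numerical sanity checks ($\eta(3/2)>0$, $\Phi(3/2)>0$, and the stated lower bound on $\Phi'$ for $t\ge 3/2$); the upper bound, by contrast, is two applications of $\log s \le s-1$.  A slightly longer but cleaner route for the whole lower bound would instead establish monotonicity of the one-variable ratios $g(t)/t^2$ on $(-1,\infty)$ and $g(t)/(t\log(2+t))$ on $[t^\ast,\infty)$ and evaluate each at its worst point.
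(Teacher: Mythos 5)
Your proof is correct and follows essentially the same route the paper intends: the paper's proof is only the one-line assertion ``Using Calculus and Taylor expansion,'' with a sketch that likewise substitutes $z=a/b$, reduces to bounding $(1+z)\log(1+z)-z$ against $\min\bigl(z^2,\,|z|\log(2+|z|)\bigr)$ in the two regimes, and splits at a threshold near $z\approx 1.5$. You have simply carried out in full (and correctly, including the numerical checks at the splice point $t=3/2$) the calculus details the paper omits.
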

\begin{proof}
Using Calculus and Taylor expansion.
\ignore{
Let $h(z)=(1+z)\log (1+z)-z$ and $g(z)=\min(z^2,\abs{z}\log (2+\abs{z}))$. Then, $g(z)=z^2$ for $-1\le z \le 1.314$ and $g(z)=z\log (2+z)$ otherwise. We can show that 
\begin{align*}
z^2/3 \le &h(z) \le z^2 &&\tag{for $-1\le z \le 1.5$}\\
z\log (2+z)/3 \le &h(z) \le z\log (2+z)&&\tag{for $z\ge 1$} 
\end{align*}
using calculus and Taylor expansion. Hence $g(z)/3 \le h(z) \le g(z)$. We get the result upon choosing $z=a/b$.
  \eric{``verified by plotting'' ?}
  }
\end{proof}
\begin{claim}\label{claim:f-props}
For any $x,y$,  the following holds:
\begin{enumerate}
\item  $f(\Delta_{xy}, P_xP_y) \leq 1$.
\item $\min(P_x, P_y, \abs{\Delta_{xy}})\gtrsim f(\Delta_{xy}, P_xP_y) /  \log(3/f(\Delta_{xy}, P_xP_y)).$
\end{enumerate} 

\ignore{ 
  \[
    f(\Delta_{xy}, P_xP_y) \leq 1.
  \]
  and
  \[
    \min(P_x, P_y, \abs{\Delta_{xy}})\gtrsim f(\Delta_{xy}, P_xP_y) /  \log(3/f(\Delta_{xy}, P_xP_y)).
  \]
}  
\end{claim}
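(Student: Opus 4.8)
The plan is to reduce both parts to one elementary fact: $\abs{\Delta_{xy}}\le \min(P_x,P_y)$. Indeed, if $\Delta_{xy}\ge 0$ then $\Delta_{xy}=P_{xy}-P_xP_y\le P_{xy}\le\min(P_x,P_y)$, while if $\Delta_{xy}<0$ then $\abs{\Delta_{xy}}=P_xP_y-P_{xy}\le P_xP_y\le\min(P_x,P_y)$. Two consequences will drive everything: (i) $\min(P_x,P_y,\abs{\Delta_{xy}})=\abs{\Delta_{xy}}$, so the three-way minimum collapses to a single term; and (ii) $\Delta_{xy}^2\le\min(P_x,P_y)^2\le P_xP_y$.

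For part~1, I would combine (ii) with the bound $f(a,b)\le a^2/b$ contained in Lemma~\ref{lem:fapprox} (keep $C_{a,b}\le 1$ and retain only the $a^2/b$ branch of the minimum): $f(\Delta_{xy},P_xP_y)\le \Delta_{xy}^2/(P_xP_y)\le 1$. The degenerate case $P_xP_y=0$ forces $\Delta_{xy}=0$ and $f=0$, so it is harmless.

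For part~2, write $v:=f(\Delta_{xy},P_xP_y)\in[0,1]$ (by part~1) and $w:=\abs{\Delta_{xy}}\le 1$; if $w=0$ then $v=0$ and the claim is trivial, so assume $w\in(0,1]$. The first step is to show $v\le w\log(3/w)$: from the other branch of Lemma~\ref{lem:fapprox}, $v\le w\log\!\left(2+w/(P_xP_y)\right)$, and by (ii) we have $w/(P_xP_y)\le 1/w$, so $2+w/(P_xP_y)\le 2+1/w\le 3/w$. The second step is a monotonicity argument: $\psi(t):=t\log(3/t)$ has derivative $\log(3/t)-1$, which is positive for $t<3/e$, hence $\psi$ is increasing on $(0,1]$. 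Set $w^{*}:=v/(4\log(3/v))$, which lies in $(0,1/4]$ since $\log(3/v)\ge\log 3>1$. One checks $\psi(w^{*})\le v$: dividing by $v$, this reduces to $\log 4+\log\log(3/v)\le 3\log(3/v)$, which holds at $v=1$ and whose right side grows strictly faster than its left as $v\to 0$ (the derivative in $\log(3/v)$ is $3$ versus $\le 1$). Then $\psi(w)\ge v\ge\psi(w^{*})$ and monotonicity of $\psi$ give $w\ge w^{*}=\Omega\!\left(v/\log(3/v)\right)$, which is exactly the asserted bound.

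I expect the only mildly delicate point to be the second step of part~2 — fixing a universal constant (here $1/4$) for which $\psi(w^{*})\le v$, and confirming that both $w$ and $w^{*}$ stay inside the interval $(0,3/e)$ on which $\psi$ is monotone, both of which follow from $v,w\le 1$. The rest (the case split for $\abs{\Delta_{xy}}\le\min(P_x,P_y)$ and the two invocations of Lemma~\ref{lem:fapprox}) is routine.
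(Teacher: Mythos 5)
Your proof is correct, but it is organized differently from the paper's. For part~1, the paper does not invoke Lemma~\ref{lem:fapprox} at all: it uses convexity of $f(\cdot,P_xP_y)$ in its first argument and evaluates at the endpoints $\Delta_{xy}=-P_xP_y$ and $\Delta_{xy}=P_x-P_xP_y$, whereas you deduce it from the quadratic branch $f(a,b)\le a^2/b$ combined with your observation $\Delta_{xy}^2\le\min(P_x,P_y)^2\le P_xP_y$. For part~2, the paper bounds $\min(P_x,P_y)$ and $\abs{\Delta_{xy}}$ separately (using $f\lesssim\abs{\Delta_{xy}}\log(2+\abs{\Delta_{xy}}/(P_xP_y))\le P_x\log(3/P_x)$ and then chaining through $\log(3/P_y)\lesssim\log(3/f)$), and it leaves the inversion of $t\mapsto t\log(3/t)$ implicit; you instead note that $\abs{\Delta_{xy}}\le\min(P_x,P_y)$ collapses the three-way minimum to $\abs{\Delta_{xy}}$ alone, prove the single bound $f\le\abs{\Delta_{xy}}\log(3/\abs{\Delta_{xy}})$ via $\Delta_{xy}^2\le P_xP_y$, and carry out the inversion explicitly by monotonicity of $\psi(t)=t\log(3/t)$ on $(0,1]$ with the concrete witness $w^*=f/(4\log(3/f))$. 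Your route buys a cleaner logical structure (one quantity to bound instead of three, and a fully explicit constant in the inversion step), at the mild cost of making part~1 depend on Lemma~\ref{lem:fapprox}, which the paper only justifies by ``calculus and Taylor expansion''; since the paper's own proof of part~2 already relies on that lemma, this dependence is acceptable. All the individual steps check out: $\abs{\Delta_{xy}}\le\min(P_x,P_y)$ holds in both sign cases, $2+1/w\le 3/w$ uses $w\le 1$, $\psi$ is increasing below $3/e$ so both $w$ and $w^*$ lie in the monotone range, and the reduction of $\psi(w^*)\le f$ to $\log 4+\log\log(3/f)\le 3\log(3/f)$ is verified at $f=1$ and propagated by the derivative comparison.
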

\begin{proof}
\noindent{\em Of (1):}  WLOG $P_x \leq P_y$.  Note that
  $-P_xP_y \leq \Delta_{xy} \leq P_x - P_xP_y$, and $f(x, a)$ is
  convex in $x$, so that:
  \begin{align*}
    f(\Delta_{xy}, P_xP_y) &\leq \max(f(-P_xP_y, P_xP_y), f(P_x-P_xP_y, P_xP_y))\\
                           &\leq \max(P_xP_y, P_x \log(1+1/P_y) - P_x + P_xP_y)\\
                           &\leq \max(1, P_x \log(1+1/P_x))\\
                           &\leq 1.
  \end{align*}
  
\noindent{\em Of (2):}  We have
  \begin{align*}
    &f(\Delta_{xy}, P_xP_y)\\
	&\lesssim \abs{\Delta_{xy}}\log\left(2+\frac{\abs{\Delta_{xy}}}{P_xP_y}\right) \leq P_x \log \left(2+\frac{1}{P_y}\right) \leq P_x \log\left(\frac{3}{P_x}\right) 
  \end{align*}
  and hence
  \[
    \min(P_x, P_y) \gtrsim f(\Delta_{xy}, P_xP_y) /  \log(3/f(\Delta_{xy}, P_xP_y)).
  \]
  But then
  \begin{align*}
    \abs{\Delta_{xy}} &\gtrsim  f(\Delta_{xy}, P_xP_y) / \log\left({\frac{3}{P_y}}\right)\\ 
	&\gtrsim f(\Delta_{xy}, P_xP_y) / \log (3/f(\Delta_{xy}, P_xP_y)),
  \end{align*}
  finishing the result.
\end{proof}

\subsection{Properties of the Empirical Distribution}
By Chernoff bounds, the empirical distribution is close to the actual one:
\begin{claim}\label{claim:Pempirical}
  Let $\wh{P}_x, \wh{P}_y$, and $\wh{P}_{xy}$ be empirical
  distributions over $N > 1$ samples. Then with $1-3\delta$
  probability, all of the following bounds hold:
  \begin{align*}
    \abs{\wh{P}_x - P_x} &\lesssim \sqrt{ {P_x\log {2\over \delta}\over N}}+{\log {2\over \delta} \over N}\\
    \abs{\wh{P}_y - P_y} &\lesssim \sqrt{ {P_x\log {2\over \delta}\over N}}+{\log {2\over \delta} \over N}\\
    \abs{\wh{P}_{xy} - P_{xy}} &\lesssim \sqrt{ {P_{xy}\log {2\over \delta}\over N}}+{\log {2\over \delta} \over N}\\
    \abs{\wh{P}_x\wh{P}_y - P_xP_y} &\lesssim \sqrt{P_xP_y{\log {2\over \delta}\over N}}+(P_x+P_y){\log {2\over \delta} \over N}+{\log^2 {2\over \delta} \over N^2}\\
    \abs{\wh{\Delta}_{xy} - \Delta_{xy}} &\lesssim \sqrt{\abs{\Delta_{xy}}\frac{\log {2\over\delta}}{N}} + \sqrt{P_{x}P_{y}\frac{\log {2\over\delta}}{N}}+{\log {2\over \delta} \over N}+{\log^2 {2\over \delta} \over N^2}.
  \end{align*}
\end{claim}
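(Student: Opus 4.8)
The plan is to obtain the first three bounds directly from a Bernstein-type Chernoff inequality applied to the relevant binomial counts, and then to derive the last two bounds by purely algebraic manipulation on the event that the first three (together with the one-variable analogue for $Y$) all hold simultaneously.

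First I would observe that $N\wh{P}_x \sim \mathrm{Bin}(N, P_x)$, so the multiplicative-plus-additive Chernoff bound yields, with probability at least $1-\delta$,
\[
  \abs{N\wh{P}_x - NP_x} \lesssim \sqrt{NP_x \log\tfrac{2}{\delta}} + \log\tfrac{2}{\delta};
\]
dividing by $N$ gives the first bound. The bounds on $\abs{\wh{P}_y - P_y}$ (with $P_y$ in place of $P_x$, the appearance of $P_x$ in the statement being a typo) and $\abs{\wh{P}_{xy} - P_{xy}}$ follow identically from $N\wh{P}_y \sim \mathrm{Bin}(N,P_y)$ and $N\wh{P}_{xy} \sim \mathrm{Bin}(N,P_{xy})$. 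A union bound over these three events accounts for the $1-3\delta$ failure probability, and every remaining step is deterministic conditioned on this event.

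Next, for the product bound I would expand
\[
  \wh{P}_x\wh{P}_y - P_xP_y = (\wh{P}_x - P_x)P_y + P_x(\wh{P}_y - P_y) + (\wh{P}_x - P_x)(\wh{P}_y - P_y)
\]
and substitute the first two bounds, using $P_x, P_y \le 1$ freely: pulling the factor $P_y$ (resp.\ $P_x$) into the square root turns the first two summands into $\sqrt{P_xP_y \log\tfrac{2}{\delta} / N} + (P_x+P_y)\log\tfrac{2}{\delta} / N$, and the cross term expands into four pieces, each of which is absorbed into the listed terms via AM--GM (for instance $\sqrt{P_xP_y}\,\log\tfrac{2}{\delta}/N \le \tfrac12(P_x+P_y)\log\tfrac{2}{\delta}/N$ and $\sqrt{P_x}\,(\log\tfrac{2}{\delta}/N)^{3/2} \le \tfrac12 P_x\log\tfrac{2}{\delta}/N + \tfrac12\log^2\tfrac{2}{\delta}/N^2$). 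For the last bound I would then use $\wh{\Delta}_{xy} - \Delta_{xy} = (\wh{P}_{xy}-P_{xy}) - (\wh{P}_x\wh{P}_y - P_xP_y)$, apply the triangle inequality with the third and fourth bounds, invoke $P_{xy} = P_xP_y + \Delta_{xy} \le P_xP_y + \abs{\Delta_{xy}}$ so that $\sqrt{P_{xy}} \le \sqrt{P_xP_y} + \sqrt{\abs{\Delta_{xy}}}$, and bound the leftover $(P_x+P_y)\log\tfrac{2}{\delta}/N \le 2\log\tfrac{2}{\delta}/N$.

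I do not expect any conceptual difficulty; the only step needing care is the bookkeeping in the product bound, where one must check that every lower-order cross term is genuinely dominated by one of the few terms in the statement, and the AM--GM splits above are what make this go through. The correlation between $\wh{P}_x$ and $\wh{P}_y$ is harmless, since after the union bound the remaining argument only manipulates fixed numbers.
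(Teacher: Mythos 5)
Your proposal is correct and follows essentially the same route as the paper: a multiplicative/Bernstein-type Chernoff bound for $\wh{P}_x,\wh{P}_y,\wh{P}_{xy}$ with a union bound, followed by deterministic algebra for the product and $\wh{\Delta}_{xy}$ bounds (and you are right that the $P_x$ in the second displayed bound is a typo for $P_y$). The only cosmetic difference is at the very end, where you use $\sqrt{P_{xy}}\le\sqrt{P_xP_y}+\sqrt{\abs{\Delta_{xy}}}$ from $P_{xy}=P_xP_y+\Delta_{xy}$, which is a slightly cleaner way of handling the step the paper dispatches by noting $P_{xy}\le 2\Delta_{xy}$ whenever that term dominates.
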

\begin{proof}
\ignore{
  We have that
  \begin{align*}
    \E[\wh{P}_x] &= P_x\\
    \Var[\wh{P}_x] &= P_x(1-P_x)/N \leq P_x/N.
  \end{align*}
  Chebyshev's inequality then gives the first equation with $1-\delta$
  probability.  The same holds for the second and third equations, so
  with $1-3\delta$ probability they all hold.  
}

By the multiplicative Chernoff bound,
\[
  \Pr[|\widehat{P}_x - P_x| > \eps P_x]< 2\exp(-C\min(\eps, \eps^2) P_x N).
\]
Rearranging, with probability $1-\delta$,
\begin{align*}
|\wh{P}_x - P_x| &\lesssim \max\left({\log {2\over \delta} \over P_xN}, \sqrt{{\log {2\over \delta} \over P_xN}}\right) P_x
	      \le \sqrt{ {P_x\log {2\over \delta}\over N}}+{\log {2\over \delta} \over N}.
\end{align*}

Similarly for $\wh{P}_y$ and $\wh{P}_{xy}$. Then
  \begin{align*}
    &\abs{\wh{P}_x\wh{P}_y - P_xP_y}\\ &\lesssim {\sqrt{\log {2\over \delta}\over N}}(P_x\sqrt{P_y} + P_y\sqrt{P_x}) + \sqrt{P_xP_y}{\log {2\over\delta}\over N}+(P_x+P_y){\log {2\over \delta} \over N}+\\&\qquad\qquad\qquad\qquad\qquad\qquad\qquad(\sqrt{P_x}+\sqrt{P_y}){\log^{1.5} {2\over \delta} \over N^{1.5}}+{\log^2 {2\over \delta} \over N^2}\\
&\le  2\sqrt{P_xP_y{\log {2\over \delta}\over N}}+2(P_x+P_y){\log {2\over \delta} \over N}+2(\sqrt{P_x + P_y}){\log^{1.5} {2\over \delta} \over N^{1.5}}+\\
&\qquad\qquad\qquad\qquad\qquad{\log^2 {2\over \delta} \over N^2}\\
&\le  2\sqrt{P_xP_y{\log {2\over \delta}\over N}}+3(P_x+P_y){\log {2\over \delta} \over N}+2{\log^2 {2\over \delta} \over N^2}.
  \end{align*}
  This implies:
  \begin{align*}
    \abs{\wh{\Delta}_{xy} - \Delta_{xy}} &\leq \abs{\wh{P}_{xy} - P_{xy}} + \abs{\wh{P}_x\wh{P}_y - P_xP_y}\\
    &\lesssim \sqrt{P_{xy}\frac{\log {2\over\delta}}{N}} + \sqrt{P_{x}P_{y}\frac{\log {2\over\delta}}{N}}+{\log {2\over \delta} \over N}+{\log^2 {2\over \delta} \over N^2}.
  \end{align*}
  The result follows because $P_{xy} \leq 2 \Delta_{xy}$  whenever that term is largest.
\end{proof}

\subsection{Completeness}
\begin{lemma}\label{lem:completeOne}
  Let $\wh{P}$ be the empirical distribution over $N$ samples.  Then there exist constants $C, C' > 0$ such that for every $\delta > 0$  if $f(\Delta_{xy}, P_xP_y) \geq C \frac{\log N}{N}\log {2\over \delta}$, then:
  \[
    \Pr[f(\wh{\Delta}_{xy}, \wh{P}_x\wh{P}_y) > C' f(\Delta_{xy},
    P_xP_y)] > 1-3\delta.
  \]
\end{lemma}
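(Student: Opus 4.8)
The plan is to show that the empirical quantities $\wh{\Delta}_{xy}$ and $\wh{P}_x\wh{P}_y$ are close enough to $\Delta_{xy}$ and $P_xP_y$ that $f$ changes by at most a constant factor. Write $L := \log(2/\delta)$ and $v := f(\Delta_{xy},P_xP_y)$, so the hypothesis reads $v \ge C\frac{\log N}{N}L$, while Claim~\ref{claim:f-props}(1) gives $v \le 1$. The first step is to record, via Claim~\ref{claim:f-props}(2), the lower bounds $|\Delta_{xy}| \gtrsim v/\log(3/v)$ and $\min(P_x,P_y) \gtrsim v/\log(3/v)$ (in particular all three quantities are strictly positive, so $f$ will be well defined at the empirical point). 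Since $v \le 1$ and $v \ge C\frac{\log N}{N}L$, for $C$ a large enough universal constant we get $3/v \le N$, hence $\log(3/v) \le \log N$, and therefore $|\Delta_{xy}|,\ \min(P_x,P_y) \gtrsim v/\log N \gtrsim CL/N$; note $v \le 1$ also forces $N \ge L$.

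The second step applies Claim~\ref{claim:Pempirical} with failure probability $\delta$, so all five of its bounds hold simultaneously with probability $1-3\delta$. Because $P_x, P_y \gtrsim CL/N$, the per-coordinate bounds give $|\wh{P}_x - P_x| \le P_x/2$ and likewise for $Y$ once $C$ is large enough; hence $\wh{P}_x\wh{P}_y \in [\frac14 P_xP_y,\ 4P_xP_y]$. For $\wh{\Delta}_{xy}$, the Claim bounds $|\wh{\Delta}_{xy} - \Delta_{xy}|$ by a sum of four terms; three of them — $\sqrt{|\Delta_{xy}|L/N}$, $L/N$, $L^2/N^2$ — are each at most $|\Delta_{xy}|/8$, using $|\Delta_{xy}| \gtrsim CL/N$ and $N \ge L$. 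The remaining term $\sqrt{P_xP_y\,L/N}$ is the crux; I would handle it by splitting on which branch achieves the minimum in Lemma~\ref{lem:fapprox}. If $\Delta_{xy}^2/(P_xP_y)$ is the smaller branch, then $P_xP_y \le \Delta_{xy}^2/v$, so $P_xP_y\,L/N \le \frac{L}{Nv}\Delta_{xy}^2 \le \frac{1}{C\log N}\Delta_{xy}^2$; if $|\Delta_{xy}|\log(2+|\Delta_{xy}|/(P_xP_y))$ is the smaller branch, then $\log(2+t) \le t$ for $t := |\Delta_{xy}|/(P_xP_y)$, which forces $P_xP_y \le |\Delta_{xy}|$, so $P_xP_y\,L/N \le |\Delta_{xy}|\cdot L/N \lesssim \frac{1}{C}\Delta_{xy}^2$. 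In either case $\sqrt{P_xP_y\,L/N} \le |\Delta_{xy}|/8$ for $C$ large, giving $|\wh{\Delta}_{xy}-\Delta_{xy}| \le |\Delta_{xy}|/2$ and thus $|\wh{\Delta}_{xy}| \ge |\Delta_{xy}|/2$.

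The final step lower-bounds $f(\wh{\Delta}_{xy},\wh{P}_x\wh{P}_y)$ by $\frac13$ of the minimum of its two branches (Lemma~\ref{lem:fapprox}). Using $|\wh{\Delta}_{xy}| \ge |\Delta_{xy}|/2$ and $\wh{P}_x\wh{P}_y \le 4P_xP_y$, the $a^2/b$ branch is at least $\frac{\Delta_{xy}^2}{16 P_xP_y} \ge \frac{v}{16}$, since $f(\Delta_{xy},P_xP_y) \le \Delta_{xy}^2/(P_xP_y)$ always. For the other branch, $|\wh{\Delta}_{xy}|/(\wh{P}_x\wh{P}_y) \ge |\Delta_{xy}|/(8P_xP_y)$, and the elementary inequality $\log(2+t/8) \ge \frac16\log(2+t)$ for all $t \ge 0$ (check $(2+t/8)^6 \ge 2+t$) gives $|\wh{\Delta}_{xy}|\log\!\big(2+|\wh{\Delta}_{xy}|/(\wh{P}_x\wh{P}_y)\big) \ge \frac{1}{12}|\Delta_{xy}|\log\!\big(2+|\Delta_{xy}|/(P_xP_y)\big) \ge \frac{v}{12}$, since $f(\Delta_{xy},P_xP_y) \le |\Delta_{xy}|\log(2+|\Delta_{xy}|/(P_xP_y))$ always. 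Hence $f(\wh{\Delta}_{xy},\wh{P}_x\wh{P}_y) \ge v/48$, proving the lemma with $C' = 1/48$ and the probability $1-3\delta$ inherited from Claim~\ref{claim:Pempirical}.

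I expect the main obstacle to be the bookkeeping around $\sqrt{P_xP_y\,L/N}$: unlike the other error terms it is not controlled by $|\Delta_{xy}|$ alone, and one only sees it is small after splitting into the two regimes of Lemma~\ref{lem:fapprox} and noticing that the extra $\log N$ factor in the hypothesis is precisely what rescues the small-deviation regime. The remaining ingredients — convexity/monotonicity of $f$, the branch comparison in Lemma~\ref{lem:fapprox}, and the scalar inequality for $\log(2+t/8)$ — are routine but must be stated cleanly so that the constants track through the union bound.
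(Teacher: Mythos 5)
Your proof is correct, and it takes a genuinely different (and somewhat cleaner) route than the paper's, while using the same three ingredients (Claim~\ref{claim:f-props}, Claim~\ref{claim:Pempirical}, Lemma~\ref{lem:fapprox}). The paper splits into the cases $\Delta_{xy} > 8P_xP_y$ and $\Delta_{xy} \le 8P_xP_y$ and works only with the dominant branch of $f$ in each: in the large-$\Delta_{xy}$ case it never establishes $\wh{P}_x\wh{P}_y \eqsim P_xP_y$, instead allowing $\wh{P}_x\wh{P}_y$ to drift up to roughly $P_xP_y + \sqrt{P_xP_y\Delta_{xy}}/\sqrt{C}$ and salvaging the logarithm via $\log\frac{\Delta_{xy}}{2P_xP_y+\sqrt{P_xP_y\Delta_{xy}}} \gtrsim \log\frac{\Delta_{xy}}{P_xP_y}$, while in the small-$\Delta_{xy}$ case it uses the individual marginal concentration much as you do. You avoid the case split entirely by observing that Claim~\ref{claim:f-props}(2) gives $P_x, P_y, \abs{\Delta_{xy}} \gtrsim \frac{C}{N}\log\frac{2}{\delta}$ unconditionally, hence multiplicative control of $\wh{P}_x$ and $\wh{P}_y$ \emph{separately} (so $\wh{P}_x\wh{P}_y \le 4P_xP_y$) and of $\wh{\Delta}_{xy}$, and then showing both branches of the min in Lemma~\ref{lem:fapprox} degrade by at most a constant, the log branch via the scalar inequality $\log(2+t/8)\ge \tfrac16\log(2+t)$; this yields an explicit constant ($C' = 1/48$, which you should take slightly smaller to get the strict inequality in the statement). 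Two small remarks: your case split for the cross term $\sqrt{P_xP_y\log(2/\delta)/N}$ is unnecessary, since $f(a,b)\le a^2/b$ always, so $P_xP_y \le \Delta_{xy}^2/v$ holds regardless of which branch attains the minimum and the $\log N$ in the hypothesis finishes it in one line; and your deduction $\log(3/v)\le \log N$ implicitly assumes $N\ge 2$, which is harmless (the paper's own chain in~\eqref{eq:Deltabig} needs the same, and Corollary~\ref{cor:completeness} assumes $N>1$) but worth stating.
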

\begin{proof}

  By Claim~\ref{claim:f-props} $f(\Delta_{xy}, P_xP_y) \leq 1$.  Claim~\ref{claim:f-props} also implies
  \begin{align}
    \label{eq:Deltabig}
    \min(\abs{\Delta_{xy}}, P_x, P_y) \gtrsim \frac{f(\Delta_{xy}, P_xP_y)}{\log (2/f(\Delta_{xy}, P_xP_y))} \gtrsim \frac{C}{N}\log {2\over \delta}.
  \end{align}


  Suppose that the Claim~\ref{claim:Pempirical} statements hold, as
  happens with $1-3\delta$ probability.  We will show that this
  implies the result.

  We split into cases, based on whether $\Delta_{xy} > 8P_xP_y$.

  \paragraph{Large $\Delta_{xy}$.}  This case of $\Delta_{xy} > 8 P_x P_y$ implies
  \[
    f(\Delta_{xy}, P_xP_y) \eqsim \Delta_{xy}\log \frac{\Delta_{xy}}{P_xP_y}.
  \]

  In this regime, we have by Claim~\ref{claim:Pempirical} holding that
  \[
    \abs{\wh{\Delta}_{xy} - \Delta_{xy}}  \lesssim \sqrt{\frac{\Delta_{xy}}{N}\log {2\over \delta}}+{\log {2\over \delta}\over N}.
  \]
  Since $N \gtrsim C\log {2\over \delta}/\Delta_{xy}$ by~\eqref{eq:Deltabig}, this implies
  \[
    \abs{\wh{\Delta}_{xy} - \Delta_{xy}} \lesssim \Delta_{xy}/C
  \]
  and hence $\abs{\wh{\Delta}_{xy} - \Delta_{xy}} < \Delta_{xy}/10$
  for a sufficiently large $C$.

  We also have by Claim~\ref{claim:Pempirical} holding that
  \begin{align*}
    \abs{\wh{P}_x\wh{P}_y - P_xP_y} &\lesssim \sqrt{P_xP_y{\log {2\over \delta}\over N}}+(P_x+P_y){\log {2\over \delta} \over N}+{\log^2 {2\over \delta} \over N^2} \\&\lesssim \sqrt{\frac{\Delta_{xy}}{N}\log {2\over \delta}}+{\log {2\over \delta}\over N}
  \end{align*}
and hence (by~\eqref{eq:Deltabig}) $\abs{\wh{P}_x\wh{P}_y - P_xP_y} \le  \Delta_{xy}/10$ for a sufficiently large $C$. This implies $\wh{P}_x\wh{P}_y \leq 0.23 \Delta_{xy}$.  Therefore:
  \[
    \frac{\wh{\Delta}_{xy}}{\wh{P}_x\wh{P}_y} \geq \frac{0.9
      \Delta_{xy}}{0.23 \Delta_{xy}} > 3.9,
  \]
  so that (in Lemma~\ref{lem:fapprox}),
  \begin{align*}
    f(\wh{\Delta}_{xy}, \wh{P}_x\wh{P}_y)
    &\eqsim \wh{\Delta}_{xy}\log \frac{\wh{\Delta}_{xy}}{\wh{P}_x\wh{P}_y}
    \gtrsim \Delta_{xy}\log \frac{\Delta_{xy}}{\wh{P}_x\wh{P}_y}
  \end{align*}
Now,
\begin{align*}
  \abs{\wh{P}_x\wh{P}_y - P_xP_y} &\lesssim \sqrt{P_xP_y{\log {2\over \delta}\over N}}+(P_x + P_y){\log {2\over \delta} \over N}+{\log^2 {2\over \delta} \over N^2}\\
                                  &\lesssim \sqrt{P_xP_y \Delta_{xy}/C}+2P_xP_y/C + P_xP_y/C^2 &&\tag{Using \eqref{eq:Deltabig}}\\
                                  &\lesssim \frac{1}{\sqrt{C}}(P_xP_y + \sqrt{P_xP_y\Delta_{xy}}).
\end{align*}
For sufficiently large constant $C$ the constant factor is overcome, so that
\begin{align*}
  f(\wh{\Delta}_{xy}, \wh{P}_x\wh{P}_y) &\gtrsim \Delta_{xy} \log \frac{\Delta_{xy}}{2P_xP_y+\sqrt{P_xP_y \Delta_{xy}}}\\
                                        &\geq \Delta_{xy} \min(\log \frac{\Delta_{xy}}{4P_xP_y},
                                          \log \frac{\Delta_{xy}}{2\sqrt{P_xP_y \Delta_{xy}}})\\
                                        &\eqsim \Delta_{xy} \min(\log \frac{\Delta_{xy}}{P_xP_y},
                                          \frac{1}{2}\log \frac{\Delta_{xy}}{P_xP_y})\\
  &\eqsim f(\Delta_{xy}, P_xP_y)
  \end{align*}
  as desired.
  \paragraph{Small $\Delta_{xy}$.}  This case of $- P_x P_y \le \Delta_{xy} \leq 8 P_x P_y$ implies
  \[
    f(\Delta_{xy}, P_xP_y) \eqsim \Delta_{xy}^2/(P_xP_y) \leq 64 P_xP_y.
  \]

  Now, the condition that
  $f(\Delta_{xy}, P_xP_y) > C \frac{\log N}{N}\log {2\over \delta}$ implies
  \begin{align}\label{eq:smalldeltacondition}
    P_xP_y \ge \frac{1}{64}\Delta_{xy}^2/(P_xP_y) \gtrsim C \frac{\log N}{N}\log {2\over \delta}
  \end{align}
  and hence $N \gtrsim C\log {2\over \delta}/\min(P_x, P_y)$.  Therefore, for
  a sufficiently large $C$, we have by Claim~\ref{claim:Pempirical} that both:
  \begin{align*}
    \abs{\wh{P}_x - P_x} &\leq P_x/10\\
    \abs{\wh{P}_y - P_x} &\leq P_y/10.
  \end{align*}
  Furthermore, the condition~\eqref{eq:smalldeltacondition} also implies:
  \begin{align}\label{eq:smalldeltacondition2}
    \abs{\Delta_{xy}} \gtrsim \sqrt{\frac{C P_xP_y }{N}\log {2\over \delta}}.
  \end{align}
  Hence by Claim~\ref{claim:Pempirical} and the conditions~\eqref{eq:smalldeltacondition} and~\eqref{eq:smalldeltacondition2},
  \begin{align*}
    \abs{\wh{\Delta}_{xy} - \Delta_{xy}} &\lesssim \sqrt{\frac{P_xP_y}{N}\log {2\over \delta}} + {\log {2\over \delta}\over N}
    \lesssim \abs{\Delta_{xy}}/\sqrt{C} + \Delta_{xy}^2/(P_xP_yC).
  \end{align*}
  Using $\abs{\Delta_{xy}}\le 8P_xP_y$, we get $\abs{\wh{\Delta}_{xy} - \Delta_{xy}} < \abs{\Delta_{xy}}/10$ for a sufficiently large constant $C$.  Therefore:
  \[
  f(\wh{\Delta}_{xy}, \wh{P}_x\wh{P}_y) \gtrsim \frac{\wh{\Delta}_{xy}^2}{\wh{P}_x\wh{P}_y} \eqsim \frac{{\Delta}_{xy}^2}{{P}_x{P}_y} \eqsim f(\Delta_{xy}, P_xP_y).
\]
\end{proof}

\begin{corollary}\label{cor:completeness}
  Let $\wh{P}$ be the empirical distribution over $N > 1$ samples.  Then there exist universal constants $C_1, C_2 > 0$ such that
  for every $\delta > 0$:
  \[
    I(\wh{X};\wh{Y}) \ge C_1 I(X;Y) - C_2|\Sigma|^2{\log N\over N}\log {\abs{\Sigma}
        \over \delta}
  \]
  with probability at least $1-\delta$.
\end{corollary}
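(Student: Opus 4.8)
The plan is to sum the per-coordinate completeness bound of Lemma~\ref{lem:completeOne} over all $|\Sigma|^2$ pairs, using the decomposition of mutual information from~\eqref{eq:If}: we have $I(X;Y) = \sum_{x,y} f(\Delta_{xy}, P_xP_y)$ and, applying the same identity to $\wh{P}$, $I(\wh{X};\wh{Y}) = \sum_{x,y} f(\wh{\Delta}_{xy}, \wh{P}_x\wh{P}_y)$. Fix the constant $C$ from Lemma~\ref{lem:completeOne} and set a threshold $\tau := C\frac{\log N}{N}\log\frac{6|\Sigma|^2}{\delta}$. Call a pair $(x,y)$ \emph{heavy} if $f(\Delta_{xy}, P_xP_y) \ge \tau$ and \emph{light} otherwise; there are at most $|\Sigma|^2$ of each.

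For every heavy pair, apply Lemma~\ref{lem:completeOne} with failure parameter $\delta' := \delta/(3|\Sigma|^2)$; since $\log(2/\delta') = \log(6|\Sigma|^2/\delta)$, the hypothesis $f(\Delta_{xy},P_xP_y) \ge C\frac{\log N}{N}\log\frac{2}{\delta'} = \tau$ is exactly met, and the lemma gives $f(\wh{\Delta}_{xy}, \wh{P}_x\wh{P}_y) > C' f(\Delta_{xy}, P_xP_y)$ except with probability $3\delta' = \delta/|\Sigma|^2$. A union bound over the heavy pairs shows that, with probability at least $1-\delta$, this inequality holds simultaneously for all of them. Condition on that event. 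Because $f \ge 0$ everywhere (e.g.\ by Lemma~\ref{lem:fapprox}, both branches of the $\min$ being nonnegative), we may discard the light pairs from the sum for $I(\wh{X};\wh{Y})$ and obtain $I(\wh{X};\wh{Y}) \ge \sum_{\text{heavy}} f(\wh{\Delta}_{xy}, \wh{P}_x\wh{P}_y) > C' \sum_{\text{heavy}} f(\Delta_{xy}, P_xP_y)$. On the other hand, $\sum_{\text{heavy}} f(\Delta_{xy},P_xP_y) = I(X;Y) - \sum_{\text{light}} f(\Delta_{xy},P_xP_y) \ge I(X;Y) - |\Sigma|^2\tau$, since each light term is below $\tau$. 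Chaining the two displays gives $I(\wh{X};\wh{Y}) > C'\, I(X;Y) - C'|\Sigma|^2\tau$, and since $\tau = C\frac{\log N}{N}\log\frac{6|\Sigma|^2}{\delta} = O\!\left(\frac{\log N}{N}\log\frac{|\Sigma|}{\delta}\right)$ (for $|\Sigma|\ge 2$; the case $|\Sigma|=1$ is trivial), we may take $C_1 := C'$ and choose $C_2$ large enough to absorb $C'C$ together with the constant in the logarithm, which is exactly the claimed bound.

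There is no serious obstacle here: the corollary is a routine "aggregate the coordinatewise estimate and union-bound" step now that Lemma~\ref{lem:completeOne} is available. The only points needing mild care are (i) picking the per-pair failure probability as $\delta/(3|\Sigma|^2)$ so that the union bound over $|\Sigma|^2$ pairs still yields total failure probability $\delta$ while keeping the heaviness threshold $\tau$ at $O(\frac{\log N}{N}\log\frac{|\Sigma|}{\delta})$, and (ii) observing that dropping the light pairs' contribution to $I(\wh{X};\wh{Y})$ is legitimate since every $f$-term is nonnegative — no matching per-pair upper bound is needed in this (completeness) direction.
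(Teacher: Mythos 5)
Your proof is correct and follows essentially the same route as the paper: apply Lemma~\ref{lem:completeOne} per pair with failure probability rescaled by $3\abs{\Sigma}^2$, union bound, and sum via the decomposition~\eqref{eq:If}, using $f\ge 0$ to handle the pairs where the lemma's hypothesis fails. The only difference is organizational — the paper folds the small-$f$ pairs into a single per-pair inequality $f(\wh{\Delta}_{xy},\wh{P}_x\wh{P}_y) > C'f(\Delta_{xy},P_xP_y) - C'C\frac{\log N}{N}\log\frac{2}{\delta}$ rather than splitting explicitly into heavy and light pairs as you do.
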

\begin{proof}
  Lemma~\ref{lem:completeOne} has a condition on $f$ being large.  But
  in general, since $f \geq 0$ always, it shows that with probability $1-3\delta$,
  \[
    f(\wh{\Delta}_{xy}, \wh{P}_x\wh{P}_y) > C' f(\Delta_{xy},
    P_xP_y) - C'C \frac{\log N}{N} \log \frac{2}{\delta}.
  \]
  Taking a union bound over the sum~\eqref{eq:If}, and rescaling
  $\delta$ by $3 \abs{\Sigma}^2$, we get the result.
\end{proof}

\subsection{Soundness}
\begin{lemma}\label{lem:soundOne}
  Let $\wh{P}$ be the empirical distribution over $N$ samples.  Then
  there exists a universal constant $C > 0$
  such that for every $\delta > 0$, if $\Delta_{xy} = 0$ then:
  \[
    \Pr[f(\wh{\Delta}_{xy}, \wh{P}_x\wh{P}_y) < C \frac{\log N}{N}{\log {2\over \delta}}] > 1-3\delta.
  \]
\end{lemma}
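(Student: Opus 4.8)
The plan is to establish the bound deterministically on the event (of probability at least $1-3\delta$) that all five estimates of Claim~\ref{claim:Pempirical} hold, after two preliminary reductions. First, if $\wh P_x=0$ or $\wh P_y=0$ the $(x,y)$-term contributes nothing (interpret it as $0$), so assume $\wh P_x,\wh P_y>0$. Second, if $\log(2/\delta)\ge N$ then $C\frac{\log N}{N}\log\frac{2}{\delta}\ge C\log N\ge 1\ge f(\wh\Delta_{xy},\wh P_x\wh P_y)$ for $C$ a large constant, where the last inequality is the crude bound $f\le 1$ valid for any distribution (the argument of Claim~\ref{claim:f-props}(1) applied to $\wh P$); so the lemma is trivial there, and we may assume $\log(2/\delta)<N$, which lets us absorb all $\frac{\log^2(2/\delta)}{N^2}$ terms of Claim~\ref{claim:Pempirical} into $\frac{\log(2/\delta)}{N}$. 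Finally I split on the size of $P_{xy}$ (equal to $P_xP_y$, since $\Delta_{xy}=0$) relative to a threshold $\tau=c\,\frac{\log(2/\delta)}{N}$ for a large universal constant $c$.

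\textbf{Large regime, $P_{xy}>\tau$.} Here $P_x,P_y\ge P_{xy}>c\,\frac{\log(2/\delta)}{N}$, so for $c$ large enough Claim~\ref{claim:Pempirical} forces $\wh P_x\in[0.9,1.1]P_x$ and $\wh P_y\in[0.9,1.1]P_y$, hence $\wh P_x\wh P_y\eqsim P_xP_y=P_{xy}$. The last bullet of Claim~\ref{claim:Pempirical} (with $\Delta_{xy}=0$) gives $|\wh\Delta_{xy}|\lesssim\sqrt{P_{xy}\frac{\log(2/\delta)}{N}}+\frac{\log(2/\delta)}{N}$. Using the crude bound $f(a,b)\le a^2/b$ from Lemma~\ref{lem:fapprox}, $f(\wh\Delta_{xy},\wh P_x\wh P_y)\lesssim\frac{\wh\Delta_{xy}^2}{P_{xy}}\lesssim\frac{\log(2/\delta)}{N}+\frac{\log^2(2/\delta)}{N^2P_{xy}}\lesssim\frac{\log(2/\delta)}{N}$, the last step using $P_{xy}>\tau$; this is well inside the target.

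\textbf{Small regime, $P_{xy}\le\tau$.} Claim~\ref{claim:Pempirical} now gives $\wh P_{xy}\lesssim\frac{\log(2/\delta)}{N}$, and also $\wh P_x\wh P_y\lesssim\frac{\log(2/\delta)}{N}$ (using $P_x+P_y\le 2$ in the fourth bullet). If $\wh\Delta_{xy}<0$ then $|\wh\Delta_{xy}|=\wh P_x\wh P_y-\wh P_{xy}\le\wh P_x\wh P_y$, so $f(\wh\Delta_{xy},\wh P_x\wh P_y)\le\frac{\wh\Delta_{xy}^2}{\wh P_x\wh P_y}\le|\wh\Delta_{xy}|\le\wh P_x\wh P_y\lesssim\frac{\log(2/\delta)}{N}$. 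If $\wh\Delta_{xy}\ge 0$ then $\wh P_{xy}\ge\wh P_x\wh P_y>0$, so $\wh P_{xy}\ge 1/N$; combined with $\wh P_x\wh P_y\ge\wh P_{xy}^2$ this gives $\frac{\wh\Delta_{xy}}{\wh P_x\wh P_y}\le\frac{\wh P_{xy}}{\wh P_x\wh P_y}\le\frac{1}{\wh P_{xy}}\le N$, so the other branch $f(a,b)\le|a|\log(2+|a|/b)$ of Lemma~\ref{lem:fapprox} yields $f(\wh\Delta_{xy},\wh P_x\wh P_y)\lesssim\wh\Delta_{xy}\log(2+N)\le\wh P_{xy}\log(2+N)\lesssim\frac{\log N}{N}\log\frac{2}{\delta}$. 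Taking the worse of the two regimes, $f(\wh\Delta_{xy},\wh P_x\wh P_y)\lesssim\frac{\log N}{N}\log\frac{2}{\delta}$ on the good event, which is the claim for $C$ a sufficiently large universal constant.

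The only genuinely non-bookkeeping point is the $\wh\Delta_{xy}\ge 0$ sub-case of the small regime: when the true cell probability is $O(1/N)$, a single stray sample can make $\wh\Delta_{xy}$ as large as $\wh P_{xy}$ while $\wh P_x\wh P_y$ is only $\wh P_{xy}^2$, so the ratio reaches $\Theta(N)$ and $f$ picks up a $\log N$ factor. This is precisely where the $\log N$ in the statement comes from, and it is why both the crude bound $f\le 1$ and the discreteness of $\wh P_{xy}$ (a positive multiple of $1/N$) are used. Everything else is a routine combination of the multiplicative Chernoff bounds collected in Claim~\ref{claim:Pempirical} with the two-branch estimate of $f$ from Lemma~\ref{lem:fapprox}.
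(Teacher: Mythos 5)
Your proof is correct and follows essentially the same route as the paper's: condition on the event of Claim~\ref{claim:Pempirical}, split on whether $P_xP_y$ exceeds a threshold of order $\frac{\log(2/\delta)}{N}$, handle the large case via the quadratic branch of $f$ after showing $\wh{P}_x\wh{P}_y \eqsim P_xP_y$, and handle the small case via the logarithmic branch of $f$ using the discreteness of the empirical distribution to cap the ratio inside the logarithm (you use $\wh{P}_{xy}\ge 1/N$ where the paper uses $\wh{P}_x\wh{P}_y \ge 1/N^2$, and you add a harmless extra reduction for $\log(2/\delta)\ge N$), which is where the $\log N$ factor arises in both arguments.
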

\begin{proof}
  As in the completeness section, we suppose that the equations of
  Claim~\ref{claim:Pempirical} all hold.  In particular, this implies
  that
  \[
    \abs{\wh{\Delta}_{xy}} \lesssim \sqrt{\frac{P_xP_y}{N}{\log {2\over \delta}}}+{\log {2\over \delta}\over N}.
  \]

  We again split into cases depending on
  $P_xP_y < \frac{\log {2\over \delta}}{CN}$ or not for a large enough
  constant $C$.

  \paragraph{Small $P_xP_y$.}  Suppose
  $P_xP_y < \frac{\log {2\over \delta}}{CN}$, so that
  \[
    \abs{\wh{\Delta}_{xy}} \lesssim \frac{1}{N}{\log {2\over \delta}}.
  \]

  First note that, if either of $\wh{P}_x = 0$ or $\wh{P}_y = 0$, then
  $\wh{P}_{xy} = 0 $ and $\wh{\Delta}_{xy} = 0$, so
  $f(\wh{\Delta}_{xy}, \wh{P}_x\wh{P}_y) = 0$.  Therefore, in order
  for $f(\wh{\Delta}_{xy}, \wh{P}_x\wh{P}_y)$ to be nonzero, we must
  sample $x$ and $y$ in our set, in which case
  $\wh{P}_x\wh{P}_y \geq 1/N^2$.

  Therefore
  \begin{align*}
    f(\wh{\Delta}_{xy}, \wh{P}_x\wh{P}_y) &\lesssim
    \abs{\wh{\Delta}_{xy}} \log (1 +
    \wh{\Delta}_{xy}/(\wh{P}_x\wh{P}_y))\\ &\lesssim
    \abs{\wh{\Delta}_{xy}} \log (1 + N^2) \lesssim \frac{1}{N}{\log {2\over \delta}}\log N.
  \end{align*}
  \paragraph{Large $P_xP_y$.}  If $P_xP_y \ge \frac{\log {2\over \delta}}{CN}$, then
  \[
    \min(P_x, P_y) \geq \frac{\log {2\over \delta}}{CN}
  \]
  and hence by Claim~\ref{claim:Pempirical} holding we have
  \begin{align*}
    \abs{\wh{P}_x - P_x} \leq O\left(\sqrt{\frac{P_x}{N}\log {2\over \delta}} + {\log {2\over \delta} \over N} \right)\leq P_x/2,
  \end{align*}
  for a large enough $C$ and similarly for $\wh{P_y}$.  Therefore:
  \[
    f(\wh{\Delta}_{xy}, \wh{P}_x\wh{P}_y) \lesssim \frac{\wh{\Delta}_{xy}^2}{\wh{P}_x\wh{P}_y} \leq 4\frac{\wh{\Delta}_{xy}^2}{P_xP_y} \lesssim \frac{1}{N}\log {2\over \delta}+ {\log^2 {2\over \delta}\over N^2 P_x P_y} \lesssim \frac{C}{N}\log {2\over \delta}.
  \]

  Therefore the result holds regardless of the case, as long as
  Claim~\ref{claim:Pempirical} holds.
\end{proof}

\begin{corollary}\label{cor:soundness}
  Let $\wh{P}$ be the empirical distribution over $N > 1$ samples.  There
  exists a universal constant $C_3 > 0$ such that if
  $P$ is a product distribution, then for every $\delta > 0$:
  \[
   I(\wh{X};\wh{Y}) \le \frac{\log N}{N} C_3|\Sigma|^2\log {\abs{\Sigma} \over \delta}
  \]
  with probability at least $1-\delta$.
\end{corollary}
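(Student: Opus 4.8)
The plan is to mirror the proof of Corollary~\ref{cor:completeness}, substituting the soundness estimate for the completeness one. First I would observe that since $P$ is a product distribution, $P_{xy} = P_x P_y$ and hence $\Delta_{xy} = 0$ for every pair $(x,y) \in \Sigma^2$; this is precisely the hypothesis under which Lemma~\ref{lem:soundOne} applies. Next I would invoke the empirical instance of the Claim preceding Section~\ref{sec:testCI}.1, namely $I(\wh{X};\wh{Y}) = \sum_{x,y} f(\wh{\Delta}_{xy}, \wh{P}_x\wh{P}_y)$, so that it suffices to control each of the $|\Sigma|^2$ summands individually and then add them up.

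For each fixed pair $(x,y)$, Lemma~\ref{lem:soundOne} furnishes a universal constant $C>0$ such that $f(\wh{\Delta}_{xy}, \wh{P}_x\wh{P}_y) < C \frac{\log N}{N}\log\frac{2}{\delta'}$ with probability at least $1-3\delta'$. Taking a union bound over all $|\Sigma|^2$ pairs and rescaling $\delta' = \delta/(3|\Sigma|^2)$, all these events hold simultaneously with probability at least $1-\delta$. Summing the identity over the $|\Sigma|^2$ coordinates then gives
\[
  I(\wh{X};\wh{Y}) \le |\Sigma|^2 \cdot C\,\frac{\log N}{N}\,\log\frac{6|\Sigma|^2}{\delta},
\]
and since $|\Sigma|\ge 2$ implies $\log\frac{6|\Sigma|^2}{\delta} \lesssim \log\frac{|\Sigma|}{\delta}$, this is at most $C_3 |\Sigma|^2\frac{\log N}{N}\log\frac{|\Sigma|}{\delta}$ for a suitable universal constant $C_3$, as claimed.

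There is essentially no genuine obstacle: all of the technical content sits in Lemma~\ref{lem:soundOne} (which in turn relies on the Chernoff estimates of Claim~\ref{claim:Pempirical}), and the corollary is pure bookkeeping — a union bound, a rescaling of the failure probability, and absorption of the $\log 6$ and the extra $\log|\Sigma|$ into the universal constant. The only point meriting a moment's attention is confirming that the product-distribution hypothesis yields $\Delta_{xy}=0$ in every coordinate pair, which is immediate from $\Delta_{xy} = P_{xy}-P_xP_y$.
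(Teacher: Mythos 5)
Your proof is correct and is exactly the paper's argument: the paper proves this corollary by applying the decomposition $I(\wh{X};\wh{Y})=\sum_{x,y} f(\wh{\Delta}_{xy},\wh{P}_x\wh{P}_y)$ and a union bound over the events of Lemma~\ref{lem:soundOne} for all pairs $(x,y)$, with the same rescaling of $\delta$ absorbed into the universal constant. Your added remark that the product hypothesis gives $\Delta_{xy}=0$ for every pair is the (implicit) justification the paper relies on as well.
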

\begin{proof}
  Follows from taking the sum~\eqref{eq:If} and applying a union bound
  over the events in Lemma~\ref{lem:soundOne} for all possible $x,y$.
\end{proof}

\ignore{
\emph{Sketch of the rest of the soundness case.} To then show that
$I(\wh{X}; \wh{Y})$ is small, we can set $\delta' = \delta/\Sigma^2$
and get that with probability $1-\delta$ we have
\[
  I(\wh{X}; \wh{Y}) \leq \sum f(\wh{\Delta}_{xy}, \wh{P}_x\wh{P}_y) \lesssim \Sigma^4 \log N / N.
\]

The dependence on $\Sigma$ here is really bad.  You should certainly
be able to improve the lemma to have a $\log(1/\delta)$ dependence.
But ideally you could get an expectation and avoid the union bound
entirely.
}
\sutanu{we should emphasize the non-conditional independence tester also}
\subsection{Conditional Independence Testing}

\condintest*
\begin{proof}
For any $z\in \Sigma$ let $N_z$ be the number of samples with $Z=z$. 

\medskip
\noindent{\em Proof of (1)}: If $I(X;Y \mid Z) = 0$ then
$I(X; Y \mid Z=z) = 0$ for each $z$. Then
Corollary~\ref{cor:soundness} gives us that, with probability at
least $1-\delta$,
\[
  I(\wh{X}; \wh{Y} \mid \wh{Z}=z) \lesssim \frac{|\Sigma|^2}{N_z}{\log
    {\abs{\Sigma}\over \delta}}\log N_z \le \frac{|\Sigma|^2}{N_z}{\log
    {\abs{\Sigma}\over \delta}}\log N.
\]
Let $S \subseteq \Sigma$ contain the set of $z$ such that
$\Pr[\abs{\wh{P}_z - P_z} > P_z/2] \leq \delta$.  By a Chernoff bound, this consists
of all $z$ with $P_z \geq O(\frac{\log 1/\delta}{N})$.  With probability
$1-2\abs{\Sigma}\delta$, then,
\begin{align*}
  \sum_{z \in S} \wh{P}_z I(\wh{X}; \wh{Y} \mid \wh{Z}=z) &\lesssim \sum_{z \in S} P_z \frac{|\Sigma|^2}{N P_z/2}{\log
    {\abs{\Sigma}\over \delta}}\log N\\
  &\leq \frac{\log N}{N} 2|\Sigma|^3\log
    {\abs{\Sigma}\over \delta}.
\end{align*}
On the other hand, $z \notin S$ will have
$\wh{P}_z \lesssim \frac{\log (1/\delta)}{N}$ with probability
$1-\delta$, so that with probability $1-\abs{\Sigma} \delta$
\[
  \sum_{z \notin S} \wh{P}_z I(\wh{X}; \wh{Y} \mid \wh{Z}=z) \lesssim \abs{\Sigma} \frac{\log (1/\delta)}{N} \log \abs{\Sigma} \lesssim \frac{\abs{\Sigma}^2 \log (1/\delta)}{N}
\]
is even smaller.  Rescaling $\delta$, we get with probability $1-\delta$ that
\[
  I(\wh{X}; \wh{Y} \mid \wh{Z}) = \sum_{z} \wh{P}_z I(\wh{X}; \wh{Y} \mid \wh{Z}=z) \lesssim \frac{\log N}{N} |\Sigma|^3\log
    {\abs{\Sigma}\over \delta}
\]
which is at most $\eps$ for the desired $N$.

\medskip
\noindent{\em Proof of (2)}:
Consider the set $S$ of $z\in \Sigma$ which satisfy
$P_{z} \times I(X; Y \mid Z=z^*) \geq \frac{I(X;Y\mid Z)}{2\abs{\Sigma}}$. Note that this implies $P_{z} \ge \frac{\eps}{2\abs{\Sigma}\log \abs{\Sigma}}$. We also have,
\begin{align*}
&\sum_{z\in S} P_z\times I(X;Y \mid Z=z)\\
&=\sum_{z} P_z\times I(X;Y \mid Z=z) - \sum_{z\notin S} P_z\times I(X;Y \mid Z=z)\\
&\ge I(X;Y \mid Z) - |\Sigma|\frac{I(X,Y\mid Z)}{2|\Sigma|}\\
&\ge I(X,Y\mid Z)/2.
\end{align*}

Our $N$ is large enough that for $z \in S$,
$\E[N_z] = P_z N \gtrsim \log (\abs{\Sigma}/\delta)$.  Hence, with probability
$1-\delta$, we have $N_z \geq N P_z/2$ for all $z \in S$.
Then Corollary~\ref{cor:completeness} gives us, with probability $1-\delta$,
\[
I(\wh{X}; \wh{Y} \mid \wh{Z}=z) \geq C_1I(X; Y \mid Z=z)-2C_2\abs{\Sigma}^2{\log ({0.5NP_{z}})\over NP_{z}} \log {\abs{\Sigma} \over \delta},
\]
for all $z \in S$. Multiplying $P_{z}$ and summing over all $z \in S$
give us:
\begin{align*}
  I&(\wh{X};\wh{Y}\mid \wh{Z}) \ge \sum_{z \in S} \wh{P}_zI(\wh{X};\wh{Y}\mid \wh{Z}=z)\\
                              &\ge \sum_{z \in S} \frac{P_z}{2}\left(C_1I(X; Y \mid Z=z)-2C_2\abs{\Sigma}^2{\log ({0.5NP_{z}})\over NP_{z}} \log {\abs{\Sigma} \over \delta}\right)\\
                              &\geq \frac{C_1}{4}I(X; Y \mid Z) - C_2 \abs{\Sigma}^3 \frac{\log N}{N} \log \frac{\abs{\Sigma}}{\delta}.
\end{align*}
For $N$ as large as given, the term being subtracted is at most
$\frac{\eps C_1}{8}$, which is at most half the first term.
\end{proof}

A (non-conditional) independence tester with a sample complexity of $|\Sigma|^{-1}$-factor of that of Theorem~\ref{thm:conditionalindependencetesting} closely follows from the above proof.


\section{Tree Structure Recovery}\label{sec:tree}
\subsection{Non-realizable Case}
Let $P$ be an unknown distribution over $\Sigma^n$ and $\hat{P}$ be the empirical distribution of $P$ for a certain number of samples to be fixed later. Our algorithm \textsf{Chow-Liu} returns a maximum spanning tree $\hat{T}$ of the complete graph whose edge weights for every pair of variables are given by the estimated mutual informations with respect to $\hat{P}$.

Let $T^*$ be the tree minimizing $\kl{P}{P_{T^*}}$.  Recall that
$\text{wt}_P(T)$ is defined as the sum of the pairwise mutual
informations across $T$.  By Corollary~\ref{cor:CLweights}, the \textsf{Chow-Liu} algorithm will
return a tree $\wh{T}$ satisfying
\[
  \kl{P}{P_{\wh{T}}} \leq \kl{P}{P_{T^*}} + \eps
\]
if $\text{wt}_P(\wh{T}) \geq \text{wt}_P(T^*) - \eps$.  Since $\wh{T}$
maximizes $\text{wt}_{\wh{P}}(\wh{T})$, it would suffice to ensure
$\text{wt}_{\wh{P}}(T) = \text{wt}_{P}(T) \pm \eps/2$ for all $T$; and
therefore it would suffice for
\[
  I(\wh{X}; \wh{Y}) = I(X; Y) \pm \frac{\eps}{2n}
\]
for all pairs of variables $(X, Y)$.  The following result is standard, which analyzes the
the plug-in estimator $H(\wh{X})$ for estimating a single discrete
entropy $H(X)$ to $\pm \eps$ with probability $1-\delta$.
\begin{theorem}[\cite{antos2001convergence,paninski2003estimation}]\label{thm:estEntropy}
For $N \gtrsim \left(\frac{\abs{\Sigma}}{\eps} \allowbreak + \frac{1}{\eps^2}\log
  \frac{1}{\delta}\log^2\allowbreak (\frac{\abs{\Sigma}}{\eps}\log
  \frac{1}{\delta})\right)$, $\abs{H(\wh{X})-H(X)}\le \eps$ with probability at least $(1-\delta)$.
\end{theorem}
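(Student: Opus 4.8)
This is the classical plug-in entropy bound, and the proof decomposes $H(\wh X)-H(X)$ into a bias term and a fluctuation term and controls each by $\eps/2$. Write $\wh P_x$ for the empirical frequency of $x\in\Sigma$, so that $H(\wh X)=\sum_x\phi(\wh P_x)$ with $\phi(t)=-t\log t$. The plan is as follows.

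\emph{Bias.} Since $\phi$ is concave and $\E[\wh P_x]=P_x$, Jensen's inequality gives $\E[H(\wh X)]\le H(X)$, so the plug-in estimator never overestimates in expectation. In the other direction, $\wh P_x$ is distributed as $\mathrm{Bin}(N,P_x)/N$, and a direct estimate (the Miller--Madow-type calculation) shows $0\le H(X)-\E[H(\wh X)]\le\log\!\left(1+\tfrac{\abs{\Sigma}-1}{N}\right)\le\tfrac{\abs{\Sigma}}{N}$. Hence $N\gtrsim\abs{\Sigma}/\eps$ already forces the bias to be at most $\eps/2$.

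\emph{Fluctuation and solving for $N$.} The map from the $N$ i.i.d.\ samples to $H(\wh X)$ has the bounded-differences property: moving one sample changes at most two coordinates $\wh P_x$ by $1/N$ each, and since a nonzero $\wh P_x$ is at least $1/N$ we have $\abs{\phi'(t)}=\abs{\log t+1}\le\log N+1$ on the relevant range, so $H(\wh X)$ moves by at most $O(\log N/N)$. McDiarmid's inequality then yields $\Pr[\,\abs{H(\wh X)-\E H(\wh X)}>\eps/2\,]\le 2\exp(-\Omega(\eps^2 N/\log^2 N))$, which is at most $\delta$ once $N/\log^2 N\gtrsim\eps^{-2}\log(1/\delta)$. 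Combining with the bias constraint, it suffices that $N\gtrsim\abs{\Sigma}/\eps$ and $N\gtrsim\eps^{-2}\log^2 N\,\log(1/\delta)$; since the target $N$ is polynomial in $\abs{\Sigma},1/\eps,\log(1/\delta)$, one has $\log N=O(\log(\tfrac{\abs{\Sigma}}{\eps}\log\tfrac1\delta))$, so back-substitution shows the second constraint is implied by $N\gtrsim\eps^{-2}\log(1/\delta)\log^2(\tfrac{\abs{\Sigma}}{\eps}\log\tfrac1\delta)$. Adding the two and applying $\abs{H(\wh X)-H(X)}\le\abs{\text{bias}}+\abs{\text{fluctuation}}\le\eps$ gives the stated sample complexity.

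The only steps that are not purely mechanical are pinning down the bias magnitude (the precise $\log(1+(\abs{\Sigma}-1)/N)$ bound, which requires understanding $\phi$ averaged against a binomial) and checking that the bounded-differences constant is genuinely $O(\log N/N)$ uniformly over all sample configurations, including ones where some symbol is observed exactly once. Everything past that is the routine algebra of untangling the implicit inequality for $N$.
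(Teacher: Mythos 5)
Your proposal is correct, and it is essentially the argument the paper relies on: the paper states this result without proof, citing \cite{antos2001convergence,paninski2003estimation}, and your bias-plus-concentration decomposition (Jensen/Miller--Madow bound $0\le H(X)-\E[H(\wh{X})]\le \log(1+\tfrac{\abs{\Sigma}-1}{N})$ for the bias, McDiarmid with bounded-difference constant $O(\log N/N)$ for the fluctuation, then solving the implicit inequality for $N$) is exactly the standard proof in those references. The two steps you flag as non-mechanical are indeed the only delicate points, and both hold as you state them, including the boundary case where a symbol count moves between $0$ and $1$, handled by the direct evaluation $\phi(1/N)-\phi(0)=\tfrac{\log N}{N}$ rather than the mean value theorem.
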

Since $I(X; Y) = H(X) + H(Y) - H(X, Y)$, Theorem~\ref{thm:estEntropy}
tells us that once
\[N \gtrsim \left(\frac{\abs{\Sigma}^2n}{\eps} \allowbreak+ \frac{n^2}{\eps^2}\log
\frac{n}{\delta}\log^2\allowbreak (\frac{n \abs{\Sigma}}{\eps}\log \frac{n}{\delta})\right),\] all the
pairwise mutual informations of the variables of $P$ will be estimated to within $\frac{\eps}{2n}$. In that case, \textsf{Chow-Liu} would return a tree $T$, the best distribution  on which would be close to the closest tree Bayes net of $P$.

\begin{lemma}\label{lem:CLquadraticUnreal}
  Let $P$ be any unknown distribution over $\Sigma^n$. Let $Q$ be the
  tree Bayes net which is closest to $P$ in $\dkl$ distance.  Then
  \emph{\textsf{Chow-Liu}}, when run with
  $O\left(\frac{\abs{\Sigma}^2n}{\eps} + \frac{n^2}{\eps^2}\log
    \frac{n}{\delta}\log^2 (\frac{n\abs{\Sigma}}{\eps}\log
    \frac{n}{\delta})\right)$ samples, returns a tree $T$ such that there
  exists a $T$-structured $R$ with $\kl{P}{R} \leq \kl{P}{Q} +\eps$.
\end{lemma}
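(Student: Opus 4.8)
The plan is to reduce the statement to the structural identity for Chow--Liu and then bound the estimation error of the pairwise mutual informations via the entropy estimator of Theorem~\ref{thm:estEntropy}. First I would invoke Corollary~\ref{cor:CLweights}: for every tree $T$, $\kl{P}{P_T} = J_P - \text{wt}_P(T)$ with $J_P$ independent of $T$. Letting $T^*$ be the tree minimizing $\kl{P}{P_{T^*}}$ (equivalently maximizing $\text{wt}_P$) and $\wh{T}$ be the tree returned by \textsf{Chow-Liu} (the maximum-weight spanning tree under the empirical edge weights $I(\wh{X}_i;\wh{X}_j)$), it suffices to show $\text{wt}_P(\wh{T}) \ge \text{wt}_P(T^*) - \eps$; the $T$-structured distribution claimed by the lemma is then simply $R = P_{\wh{T}}$.

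Next I would reduce this to a per-pair accuracy requirement: it is enough that $\abs{I(\wh{X}_i;\wh{X}_j) - I(X_i;X_j)} \le \eps/(2n)$ for all pairs. Since every spanning tree has $n-1$ edges, this makes $\abs{\text{wt}_{\wh{P}}(T) - \text{wt}_P(T)} < \eps/2$ hold simultaneously for all trees $T$, so that, using that $\wh{T}$ maximizes $\text{wt}_{\wh{P}}$,
\[
  \text{wt}_P(\wh{T}) \ge \text{wt}_{\wh{P}}(\wh{T}) - \tfrac{\eps}{2} \ge \text{wt}_{\wh{P}}(T^*) - \tfrac{\eps}{2} \ge \text{wt}_P(T^*) - \eps ,
\]
which is exactly what is needed.

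To secure the per-pair bound, I would use $I(X;Y) = H(X) + H(Y) - H(X,Y)$ and demand each of the three plug-in entropies be accurate to $\pm\,\eps/(6n)$. The joint variable $(X_i,X_j)$ lives over $\Sigma\times\Sigma$, which has size $\abs{\Sigma}^2$, so substituting $\eps' = \eps/(6n)$ and $\delta' = \delta/(3\binom{n}{2})$ into Theorem~\ref{thm:estEntropy} and union-bounding over the $O(n^2)$ entropy estimates (the $n$ single-variable ones and the $\binom{n}{2}$ joint ones) shows that
\[
  N = O\!\left(\frac{\abs{\Sigma}^2 n}{\eps} + \frac{n^2}{\eps^2}\log\frac{n}{\delta}\log^2\!\Big(\frac{n\abs{\Sigma}}{\eps}\log\frac{n}{\delta}\Big)\right)
\]
samples make all of them accurate with probability $1-\delta$, giving the stated bound.

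I do not expect a genuine obstacle here; the argument is essentially bookkeeping on top of Corollary~\ref{cor:CLweights} and Theorem~\ref{thm:estEntropy}. The only places needing a little care are: (i) noting that the pairwise variable ranges over $\Sigma^2$, which is what turns the $\abs{\Sigma}/\eps'$ term of Theorem~\ref{thm:estEntropy} into the $\abs{\Sigma}^2 n/\eps$ term of the lemma; (ii) cleanly absorbing the $\log n$ factors introduced by rescaling $\eps$ and $\delta$ into the stated polylogarithmic factor; and (iii) observing that the reduction through a uniform additive $\eps/(2n)$ weight error loses a factor of $n$, which (as the lower bounds in Section~\ref{sec:lower} show) is unavoidable in the non-realizable regime, so there is nothing to gain by being cleverer.
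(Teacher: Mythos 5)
Your proposal is correct and follows essentially the same route as the paper: reduce via Corollary~\ref{cor:CLweights} to getting $\text{wt}_P(\wh{T}) \ge \text{wt}_P(T^*) - \eps$, which follows once every pairwise mutual information is estimated to within $\pm\eps/(2n)$, and then obtain this from the plug-in entropy estimator of Theorem~\ref{thm:estEntropy} (with the joint variable over $\Sigma^2$) plus a union bound over all pairs. The bookkeeping with $\eps' = \eps/(6n)$ and $\delta' = \delta/O(n^2)$ matches the stated sample complexity.
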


We conclude this section by noting that when $P$ itself is a tree Bayes net (realizable case) $ \kl{P}{Q}=0$ and the best Bayes net $R$ on the tree returned by $\textsf{Chow-Liu}$ with the sample complexity of Lemma~\ref{lem:CLquadraticUnreal} would satisfy $\kl{P}{R} \leq \eps$ with probability at least $(1-\delta)$. In the next section, we show how to bring the sample complexity analysis down
from $\Ot(n^2/\eps^2)$ to $\Ot(n/\eps)$ when $P$ is actually
tree-structured for some unknown tree.
\subsection{Realizable Case}
We will need the following fact which follows from the chain rule of
mutual information.
\begin{fact}
For three random variables $X, Y,$ and $Z$
\[ I(X; Y) - I(X; Z) = I(X; Y \mid Z) - I(X; Z \mid Y).\]
\end{fact}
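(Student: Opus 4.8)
The plan is to derive this identity from the chain rule of mutual information, exactly as the lemma's preamble suggests. Recall the chain rule: for any random variables $A,B,C$, one has $I(A;B,C) = I(A;B) + I(A;C\mid B)$. I would apply this to the joint variable $(Y,Z)$ in the two possible orders. Conditioning first on $Y$ gives $I(X;Y,Z) = I(X;Y) + I(X;Z\mid Y)$, and conditioning first on $Z$ gives $I(X;Y,Z) = I(X;Z) + I(X;Y\mid Z)$.

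Equating the two expansions of $I(X;Y,Z)$ yields $I(X;Y) + I(X;Z\mid Y) = I(X;Z) + I(X;Y\mid Z)$, and moving $I(X;Z)$ to the left and $I(X;Z\mid Y)$ to the right gives the stated identity $I(X;Y) - I(X;Z) = I(X;Y\mid Z) - I(X;Z\mid Y)$. The rearrangement is legitimate because all four quantities are finite: the random variables live on the finite alphabet $\Sigma$, so each (conditional) mutual information lies in $[0,\log|\Sigma|]$.

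There is essentially no obstacle here; if one prefers a self-contained argument avoiding the chain rule, I would instead start from $I(U;V) = H(U) - H(U\mid V)$ and $I(U;V\mid W) = H(U\mid W) - H(U\mid V,W)$. Then $I(X;Y) - I(X;Z) = H(X\mid Z) - H(X\mid Y)$, while $I(X;Y\mid Z) - I(X;Z\mid Y) = \bigl(H(X\mid Z) - H(X\mid Y,Z)\bigr) - \bigl(H(X\mid Y) - H(X\mid Y,Z)\bigr) = H(X\mid Z) - H(X\mid Y)$, so the two sides agree. Either route is a couple of lines; the only thing to be careful about is keeping all entropy terms finite so the cancellations are valid, which is immediate in the finite-alphabet setting used throughout the paper.
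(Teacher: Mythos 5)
Your proposal is correct and matches the paper's proof, which likewise equates the two chain-rule expansions $I(X;Y)+I(X;Z\mid Y)=I(X;Y,Z)=I(X;Z)+I(X;Y\mid Z)$ and rearranges. The alternative entropy-based derivation you sketch is a fine equivalent but adds nothing beyond the paper's one-line argument.
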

\begin{proof}
Follows from observing $ I(X; Y)+ I(X; Z \mid Y)  =  I(X;Y, Z) \allowbreak = I(X; Z) + I(X; Y \mid Z) $.
\end{proof}

We also use the following fact about spanning trees:
\begin{fact}\label{fact:swapTree}
  Let $T_1$ and $T_2$ be two spanning trees on $n$ vertices such that
  their symmetric difference consists of the edges
  $E=\{e_1,e_2\dots,e_{\ell}\}\allowbreak\in T_1\setminus T_2$ and
  $F=\{f_1,f_2\dots,f_{\ell}\}\in T_2\setminus T_1$. Then $E$ and $F$
  can be paired up, without loss of generality say
  $\langle e_i,f_i \rangle$, such that for all $i$,
  $T_1 \cup \{f_i\} \setminus \{e_i\}$ is a spanning tree.
\end{fact}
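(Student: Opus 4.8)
The plan is to prove Fact~\ref{fact:swapTree} by encoding the desired pairing as a perfect matching in a bipartite graph and invoking Hall's theorem. Define a bipartite graph $H$ whose two sides are $F = T_2 \setminus T_1$ and $E = T_1 \setminus T_2$, and join $f \in F$ to $e \in E$ precisely when $e$ lies on the fundamental cycle $C_f$ of $f$ with respect to $T_1$ (equivalently, $e$ is on the unique $T_1$-path between the two endpoints of $f$). A perfect matching of $H$ is exactly a pairing $\langle e_i, f_i \rangle$ with $e_i \in C_{f_i}$, and for any such pair the graph $T_1 \cup \{f_i\} \setminus \{e_i\}$ is obtained from the unicyclic graph $T_1 \cup \{f_i\}$ by deleting an edge of its unique cycle, hence is again a spanning tree. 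Since $|E| = |F| = \ell$, it suffices to produce a matching saturating $F$.

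First I would make an easy reduction: contracting every edge of $T_1 \cap T_2$ turns $T_1$ and $T_2$ into edge-disjoint spanning trees $\bar T_1$ (with edge set $E$) and $\bar T_2$ (with edge set $F$) of a graph on $\ell + 1$ vertices. Contraction of edges not on a given cycle merely deletes those edges from the cycle, so it preserves all the fundamental-cycle incidences used to define $H$, and it also preserves the property of being a spanning tree after an $e \leftrightarrow f$ swap. So I may assume $T_1 \cap T_2 = \emptyset$.

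Next I would verify Hall's condition for $H$. Fix $F' \subseteq F$ and let $W = N_H(F') \subseteq T_1$ be the set of $T_1$-edges covered by the fundamental cycles of the edges in $F'$; regard $W$ as a sub-forest of $T_1$ on the $\ell + 1$ vertices, which therefore has exactly $(\ell + 1) - |W|$ connected components. For each $f = (u,v) \in F'$, the entire $T_1$-path from $u$ to $v$ lies in $W$, so $u$ and $v$ fall in the same component of $W$. Consequently $F'$, which is itself a forest (a subset of the tree $T_2$), has every one of its edges contained within a single component of $W$; within a component on $c$ vertices it can have at most $c - 1$ edges, so summing over components $j$ gives $|F'| \le \sum_j (c_j - 1) = (\ell + 1) - \big((\ell + 1) - |W|\big) = |W| = |N_H(F')|$. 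This is exactly Hall's condition, so $H$ has a matching saturating $F$, and the fact follows.

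I expect the only real content to be this Hall-condition check — in particular the observation that the fundamental-cycle incidences force both endpoints of each $f \in F'$ into a common component of $W$, which is what makes the edge count telescope. (The special case $|F'| = 1$ is just the familiar remark that $C_f$ must contain an edge of $T_1 \setminus T_2$, since otherwise $C_f \subseteq T_2$ would be a cycle inside the tree $T_2$.) Everything else — the contraction reduction, the translation between matchings and pairings, and checking that swapping along a fundamental cycle produces a spanning tree — is routine. One could alternatively avoid the contraction and run the same counting argument on $W \cup (T_1 \cap T_2)$, or simply cite the symmetric basis-exchange theorem for matroids; the self-contained graph argument above is cleaner for the paper.
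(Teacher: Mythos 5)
Your proof is correct, but it takes a genuinely different route from the paper. The paper argues by induction on $\ell$: pick any $e\in T_1\setminus T_2$, use the cut it defines in $T_1$ to find an $f\in T_2\setminus T_1$ crossing that cut (so $T_1\cup\{f\}\setminus\{e\}$ is a spanning tree), pair $\langle e,f\rangle$, replace $T_2$ by $T_2\cup\{e\}\setminus\{f\}$, and recurse on the smaller symmetric difference. You instead set up the bipartite ``fundamental cycle'' incidence graph between $F$ and $E$ and verify Hall's condition by a component-counting argument (after contracting $T_1\cap T_2$, which is a sound reduction: no loops arise since $T_2$ is acyclic, and the cycle incidences and the swap property are preserved). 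Your Hall check is right: $W=N_H(F')$ spans $T_1$-paths between the endpoints of every $f\in F'$, so $F'$ lives inside the components of $W$, and acyclicity of $F'\subseteq T_2$ gives $|F'|\le\sum_j(c_j-1)=|W|$. What each approach buys: the paper's induction is shorter, needs no external tool, and constructs the pairing greedily via cut/cycle duality; your argument is non-inductive, isolates the structural reason the pairing exists (a Hall condition for the fundamental-cycle bipartite graph), and is essentially the standard proof of the one-sided bijective basis-exchange property for graphic matroids, so it generalizes beyond trees — at the mild cost of invoking Hall's theorem and the contraction preprocessing. One small caution on your closing aside: the ``symmetric basis-exchange theorem'' usually refers to single-element exchange; what you need here is the stronger serial/bijective exchange property, so if you cite rather than prove it, cite that form.
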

\begin{proof}
  We use induction on $\ell$. Base case of $\ell=0$ is trivial.

  Assume it holds for any two trees $T_1$ and $T_2$ so that
  $|T_1\setminus T_2| = |T_2\setminus T_1| = (l-1)$.  Now, pick an
  arbitrary $e = (u, v) \in T_1 \setminus T_2$.  $T_1 \setminus \{e\}$
  has two connected components, $L \ni u$ and
  $R = (V \setminus L) \ni v$.  In $T_2$, there is some path
  connecting $u$ to $v$.  This path starts in $L$ and ends in $R$, so
  it must have some edge $f$ connecting $L$ to $R$.  But
  $f \notin T_1$, since $e \notin T_2$ is the only edge connecting $L$
  and $R$ in $T_1$.

  Because $f$ connects $L$ and $R$, which are otherwise unconnected in
  $T_1 \setminus \{e\}$, $T_1 \cup\{f\} \setminus \{e\}$ is a spanning
  tree.  Thus it is valid to pair $\langle e, f\rangle$.  Furthermore,
  because $f$ lies on the path connecting $e$ in $T_2$,
  $T_3 := T_2 \cup \{e\} \setminus \{f\}$ is also a spanning tree, and
  it differs in only $\ell - 1$ edges from $T_1$.  Therefore by
  induction, $T_3$ can be paired with $T_1$ in the desired way.
  Adding $\langle e, f \rangle$ to this pairing means that $T_2$ can
  be paired with $T_1$.
\end{proof}

\begin{theorem}
  Let $N$ be such that the bound in
  Theorem~\ref{thm:conditionalindependencetesting} holds for a given
  $\eps, \delta > 0$. Then in the realizable case with $n$ nodes, with
  probability $1 - 4n\delta$ $\emph{\textsf{Chow-Liu}}$ returns a tree $\wh{T}$ with
  $\kl{P}{P_{\wh{T}}} \leq \eps n$.
\end{theorem}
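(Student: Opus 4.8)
The plan is to show that, in the realizable case, the tree $\wh{T}$ returned by Chow-Liu has $\text{wt}_P(\wh{T}) \geq \text{wt}_P(T^*) - \eps n$, where $T^*$ is the true tree structure of $P$; by Corollary~\ref{cor:CLweights} this immediately gives $\kl{P}{P_{\wh{T}}} = J_P - \text{wt}_P(\wh{T}) \leq J_P - \text{wt}_P(T^*) + \eps n = \kl{P}{P_{T^*}} + \eps n = \eps n$, using $\kl{P}{P_{T^*}} = 0$ since $P$ is $T^*$-structured. So the entire task reduces to controlling the weight loss of the empirical MST relative to $T^*$.

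First I would apply Fact~\ref{fact:swapTree} with $T_1 = \wh{T}$ and $T_2 = T^*$: their symmetric difference edges can be paired $\langle e_i, f_i\rangle$ with $e_i = (W_i, Z_i) \in \wh{T}\setminus T^*$, $f_i = (X_i, Y_i) \in T^*\setminus \wh{T}$, such that $\wh{T}\cup\{f_i\}\setminus\{e_i\}$ is a spanning tree. Moreover, as in the proof of Fact~\ref{fact:swapTree}, each $f_i$ lies on the $T^*$-path between the endpoints of $e_i$. Fix such a pair and write $e = (W,Z)$, $f = (X,Y)$. Since $f$ is on the $T^*$-path between $W$ and $Z$, and $T^*$ is a tree, one of $X, Y$ is separated from (say) $W$ by the edge $f$, and the $T^*$-path from $W$ to $Z$ passes through both $X$ and $Y$; after possibly swapping names of $X,Y$, the relevant topology is that $X$ lies on the $T^*$-path from $W$ to $Z$, $Y$ is its neighbor towards $Z$, so $W \leadsto X$ and $Z \leadsto Y$ in $T^*\setminus\{f\}$. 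Theorem~\ref{thm:skeleton} then gives the two conditional-independence facts $I(X; Z \mid Y) = 0$ and $I(Z; W \mid X) = 0$ in $P$. Now I would use the telescoping identity from the proof overview,
\[
  I(X;Y) - I(W;Z) = I(X; Y \mid Z) - I(X; Z \mid Y) + I(Z; X \mid W) - I(Z; W \mid X),
\]
so that $I(X;Y) - I(W;Z) = I(X;Y\mid Z) + I(Z;X\mid W) \geq 0$. The empirical version of the same identity holds verbatim with hats. If Chow-Liu wrongly prefers $e$ over $f$ then $I(\wh W;\wh Z) \geq I(\wh X;\wh Y)$, forcing $I(\wh X;\wh Y\mid\wh Z) + I(\wh Z;\wh X\mid\wh W) \leq 0$; since both terms are nonnegative, each is $0$. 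Now Theorem~\ref{thm:conditionalindependencetesting} applied to each of the four conditional mutual informations (using that $N$ is large enough for its guarantee): the zero ones stay $<\eps$ empirically (part 1), and for the possibly-nonzero ones, if $I(X;Y\mid Z) \geq \eps$ then $I(\wh X;\wh Y\mid\wh Z) > C\cdot I(X;Y\mid Z) > 0$, contradiction. Hence for each pair, either $I(X;Y) - I(W;Z) < 2\eps$, or Chow-Liu keeps $f$ over $e$.

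Combining over the pairs: let $\wh{T}_0 = \wh{T}$ and successively swap in $f_i$ for $e_i$; at each step the weight under $P$ changes by $I(X_i;Y_i) - I(W_i;Z_i) \geq -2\eps$ if the bad event for that pair does not occur, and exactly $0$ contribution can be argued more carefully — actually the cleaner route is: since $\wh{T}$ is the empirical MST, for every $i$ we have $I(\wh{W}_i;\wh{Z}_i) \geq I(\wh{X}_i;\wh{Y}_i)$ (else swapping $f_i$ in would strictly increase empirical weight, as $\wh{T}\cup\{f_i\}\setminus\{e_i\}$ is a spanning tree), so the bad conditional-independence configuration above is always in force, and the argument shows $I(X_i;Y_i) - I(W_i;Z_i) < 2\eps$ for every $i$ — wait, that is too strong, so instead I take a union bound over the at most $n$ pairs (each invoking Theorem~\ref{thm:conditionalindependencetesting} a constant number of times, at probability $1-\delta$ each, giving the $1-4n\delta$ in the statement), on whose joint success every pair satisfies $I(X_i;Y_i) - I(W_i;Z_i) < 2\eps$; wait — I should re-split: after more care one gets each pair contributes at most $O(\eps)$, and summing over $\leq n$ pairs gives $\text{wt}_P(T^*) - \text{wt}_P(\wh{T}) = \sum_i \bigl(I(X_i;Y_i) - I(W_i;Z_i)\bigr) \leq O(\eps n)$, which after rescaling $\eps$ by a constant yields $\kl{P}{P_{\wh T}} \leq \eps n$.

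The main obstacle is the pairing-and-bookkeeping step: Fact~\ref{fact:swapTree} guarantees a good pairing but one must verify that each swap is simultaneously "legal" (keeps a spanning tree) and that the endpoints $X,Y$ of $f_i$ have the right separation structure in $T^*$ relative to $W_i,Z_i$ so that Theorem~\ref{thm:skeleton} delivers the two vanishing conditional mutual informations needed for the telescoping identity; handling the case analysis of which of $X,Y$ is "closer" to $W$ versus $Z$, and making sure the union bound over pairs and over the four CMI tests per pair composes to $1 - 4n\delta$, is where the real work lies. The conditional-independence testing itself is a black box here.
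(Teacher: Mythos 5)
There is a genuine gap, and it sits exactly where you suspected the ``real work'' lies. Your argument needs the pairing $\langle e_i, f_i\rangle$ (with $e_i=(W_i,Z_i)\in\wh{T}\setminus T^*$, $f_i=(X_i,Y_i)\in T^*\setminus\wh{T}$) to satisfy \emph{two} properties simultaneously: (i) $\wh{T}\cup\{f_i\}\setminus\{e_i\}$ is a spanning tree, so that empirical maximality of $\wh{T}$ gives the per-pair inequality $I(\wh{W}_i;\wh{Z}_i)\ge I(\wh{X}_i;\wh{Y}_i)$; and (ii) $f_i$ lies on the $T^*$-path between $W_i$ and $Z_i$ (equivalently, $T^*\cup\{e_i\}\setminus\{f_i\}$ is a spanning tree), which is what Theorem~\ref{thm:skeleton} needs to give $I(X_i;Z_i\mid Y_i)=I(Z_i;W_i\mid X_i)=0$. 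Fact~\ref{fact:swapTree} delivers only one of these, depending on which tree plays the role of $T_1$: applied as you apply it ($T_1=\wh{T}$) you get (i) but \emph{not} (ii) --- the fact then only says $e_i$ lies on the $\wh{T}$-path between $X_i$ and $Y_i$, which says nothing about separation in $T^*$ --- so your assertion ``as in the proof of Fact~\ref{fact:swapTree}, each $f_i$ lies on the $T^*$-path between the endpoints of $e_i$'' is unjustified. Getting both (i) and (ii) for a single bijection is a genuinely stronger (symmetric-exchange) statement than Fact~\ref{fact:swapTree}, and neither its statement nor its inductive proof provides it (the induction recurses on the modified tree $T_3$, so the path property relative to the original second tree is lost after the first pair). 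A smaller slip in the same passage: when Chow-Liu prefers $e_i$ over $f_i$ you conclude $I(\wh{X}_i;\wh{Y}_i\mid\wh{Z}_i)+I(\wh{Z}_i;\wh{X}_i\mid\wh{W}_i)\le 0$, which implicitly sets $I(\wh{X}_i;\wh{Z}_i\mid\wh{Y}_i)$ and $I(\wh{Z}_i;\wh{W}_i\mid\wh{X}_i)$ to zero; these vanish only for the true distribution, and empirically they must be controlled by the soundness part of Theorem~\ref{thm:conditionalindependencetesting} (you partially correct this later, but the per-pair dichotomy as stated is not established).

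The paper sidesteps both issues by never making a per-pair comparison of empirical weights. It applies Fact~\ref{fact:swapTree} with $T_1=T^*$, so that each pair has exactly property (ii) and hence the two vanishing conditional mutual informations and the telescoping identity $I(X_i;Y_i)-I(W_i;Z_i)=I(X_i;Y_i\mid Z_i)+I(Z_i;X_i\mid W_i)$. It then uses only the \emph{aggregate} inequality $0\ge \wh{\weight}(T^*)-\wh{\weight}(\wh{T})=\sum_i\bigl(I(\wh{X}_i;\wh{Y}_i)-I(\wh{W}_i;\wh{Z}_i)\bigr)$, bounds the sum of the ``soundness'' terms by $O(\eps n)$ via part 1 of Theorem~\ref{thm:conditionalindependencetesting}, and lower-bounds the sum of the ``completeness'' terms by $C\bigl(\weight(T^*)-\weight(\wh{T})\bigr)-O(\eps n)$ via part 2 (which needs no case split on whether each true CMI exceeds $\eps$, since the bound is vacuous otherwise). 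Combining yields $\kl{P}{P_{\wh{T}}}=\weight(T^*)-\weight(\wh{T})\le \eps n$ with the $1-4n\delta$ union bound over at most $4n$ invocations. If you want to salvage your per-pair route, you would have to prove the stronger symmetric-exchange pairing lemma; otherwise the fix is to switch to the paper's orientation of Fact~\ref{fact:swapTree} and argue in aggregate.
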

\begin{proof}
  Let $P$ be the unknown distribution on the true tree $T^*$.  Let
  $\wh{T}$ be the tree returned by the \textsf{Chow-Liu} algorithm with $N$
  samples.  For any set of vertex pairs $S$, let $\weight(S)$ and
  $\wh{\weight}(S)$ denote the sum of mutual information over all pairs in
  $S$ with respect to the true and empirical distributions
  respectively, so $\weight(T^*)$ and $\wh{\weight}(\wh{T})$ are each maximal
  over spanning trees.

  For our analysis, we make at most $4n$ invocations of
  Theorem~\ref{thm:conditionalindependencetesting}. We will assume the
  conclusion holds in all cases, as happens with at least $1-4n\delta$
  probability.

  Let $\{\langle e_i,f_i \rangle\}_i$ be a pairing given by
  Fact~\ref{fact:swapTree} for $T^*$ and $\wh{T}$. By~\eqref{eq:ChowLiuEquality},
  
  \[
    \kl{P}{P_{\wh{T}}}=\weight(T^*)-\weight(\wh{T})=\sum_i (\weight(\{e_i\})-\weight(\{f_i\})).
  \]

  For any $i$, let $e_i=(X_i,Y_i) \in T^*$ and
  $f_i=(W_i,Z_i) \in \wh{T}$. Because
  $T^* \cup \{f_i\} \setminus \{e_i\}$ is a spanning tree, the path
  connecting $W_i$ and $Z_i$ in $T^*$ must go through $e_i$.  Without
  loss of generality let $W \leadsto X\rightarrow Y\leadsto Z$ be this
  path (it is possible that $W=X$ or $Y=Z$). Hence, from Theorem~\ref{thm:skeleton}, we have for the true
  distribution

\[I(X_i; Z_i \mid Y_i) = I(Z_i; W_i \mid X_i) = 0\]
and so
\begin{align}\label{algn:trueEq}
 \nonumber   &I(X_i; Y_i) - I(W_i; Z_i) \\&= I(X_i;Y_i) - I(X_i;Z_i) + I(X_i;Z_i) - I(W_i;Z_i) \\
 \nonumber &=    I(X_i;Y_i\mid Z_i) - I(X_i;Z_i\mid Y_i) + I(X_i;Z_i\mid W_i) - I(Z_i;W_i\mid X_i) \\
&=     I(X_i; Y_i \mid Z_i) + I(Z_i; X_i \mid W_i).
\end{align}
On the other hand the empirical distribution will have
  \begin{align}\label{eqn:empTree}
    &I(\wh{X_i}; \wh{Y_i}) - I(\wh{W_i}; \wh{Z_i})\\
    &= I(\wh{X_i}; \wh{Y_i} \mid
      \wh{Z_i}) - I(\wh{X_i}; \wh{Z_i} \mid \wh{Y_i}) + I(\wh{Z_i}; \wh{X_i} \mid
      \wh{W_i}) - I(\wh{Z_i}; \wh{W_i} \mid \wh{X_i}).
  \end{align}

Because $\wh{T}$ is maximal under $\wh{\weight}$,
\begin{align}
  0 &\geq \wh{\weight}(T^*)-\wh{\weight}(\wh{T})\nonumber\\&=\sum_i (    I(\wh{X_i}; \wh{Y_i}) - I(\wh{W_i}; \wh{Z_i}))\notag\\
&= \sum_i \left(I(\wh{X_i}; \wh{Y_i} \mid
      \wh{Z_i}) + I(\wh{Z_i}; \wh{X_i} \mid
      \wh{W_i})\right) - \nonumber\\
	  &\qquad\qquad\qquad\sum_i \left(I(\wh{X_i}; \wh{Z_i} \mid \wh{Y_i}) + I(\wh{Z_i}; \wh{W_i} \mid \wh{X_i})\right).\label{eq:whdiff}
\end{align}

We invoke Theorem~\ref{thm:conditionalindependencetesting} with $\eps':=C \eps /10$ where $C < 1$ is the constant given by the theorem.  As a consequence of Theorem~\ref{thm:conditionalindependencetesting} and the fact that each of $I(X_i;Z_i \mid Y_i) = I(Z_i; W_i \mid X_i) = 0$, the second sum is at most $C\eps n/5$.

On the other hand, Theorem~\ref{thm:conditionalindependencetesting} implies that
\[
  I(\wh{X_i}; \wh{Y_i} \mid \wh{Z_i}) \geq C  I(X_i; Y_i \mid Z_i)  - C \eps'
\]
and similarly for $I(\wh{Z_i}; \wh{X_i} \mid \wh{W_i})$.  As a result, the first sum has
\begin{align*}
  \sum_i &\left(I(\wh{X_i}; \wh{Y_i} \mid \wh{Z_i}) + I(\wh{Z_i};
  \wh{X_i} \mid \wh{W_i})\right)\\ &\geq C \sum_i (I(X_i; Y_i \mid Z_i) + I(Z_i; X_i \mid W_i) - 2 \eps')\\
                                 &\geq C (\weight(T^*)-\weight(\wh{T})) - 2 C \eps' n
\end{align*}
by~\eqref{algn:trueEq}.
Combining these bounds into~\eqref{eq:whdiff},
\[
  0 \geq C(\weight(T^*)-\weight(\wh{T})) - \frac{1}{5}(C^2 + C) \eps n
\]
or
\[
  \kl{P}{P_{\wh{T}}} = \weight(T^*)-\weight(\wh{T}) \leq \frac{1}{5}(C + 1) \eps n \leq \eps n.\qedhere
\]
\end{proof}

\ignore{
\begin{proof}

Let $G$ be the true tree and $H$ be an MST w.r.t. empirical mutual informations with $N$ large enough. Let $\weight(\cdot)$ and $\hat{\weight}(\cdot)$ denote the true and estimated sum of mutual informations. Let $P$ be a true distribution on $G$ and $Q$ be a best distribution on $H$. Let $\weight(S)$ denote the sum of the true mutual informations on any set of edges $S$. We assume Theorem~\ref{thm:conditionalindependencetesting} succeeds, which happens with at least $1-\delta$ probability.
 We need the following fact.

\begin{fact}\label{fact:swapTree}
Let $T_1$ and $T_2$ be two spanning trees on $n$ vertices such that their symmetric difference consists of the edges $E=\{e_1,e_2\dots,e_{\ell}\}\in T_1\setminus T_2$ and $F=\{f_1,f_2\dots,f_{\ell}\}\in T_2\setminus T_1$. Then $E$ and $F$ can be paired up, say $\langle e_i,f_i \rangle$ WLOG, such that
\begin{itemize}
\item[-] $T_1 \cup \{f_i\}$ and $T_2 \cup \{e_i\}$ have a cycle
\item[-] $T_1 \cup \{f_i\} \setminus \{e_i\}$ and $T_2 \cup \{e_i\} \setminus \{f_i\}$ do not have a cycle
\end{itemize}
\end{fact}
\begin{proof}
We put an edge $f^*\in F$ to $T_1$ which must create a cycle in $T_1$. $f^*$ is a bridge in $T_2$: let $L$ and $R$ be its two connected components. The cycle in $T_1$ mentioned earlier, visualized in $T_2$, must cross $L$ to $R$ at some edge $e^*\notin T_2$. Then, $e^*$ and $f^*$ together must be part of a cycle in $T_2$ as well, since $L$ and $R$ are connected components. Let's call this cycle $C^*$.

Hence $\langle e^*,f^*\rangle=\langle e_1,f_1\rangle$ WLOG is a valid pairing. Let $T_3$ be the tree $T_1 \cup \{f^*\} \setminus \{e^*\}$. Then $T_3$ and $T_2$ differ in $(\ell-1)$ edges and hence can be paired up inductively, such that $e_i$ and $f_i$ together form a cycle in both $T_3$ and $T_2$ for every $i\ge 2$. Let $C_i$ be the cycle due to $\langle e_i,f_i \rangle$ in $T_3$. We claim that $e_i$ and $f_i$ together also form a cycle in $T_1$. The remaining analysis is split into a few cases. See Figure~\ref{fig:treeSwap} for illustrations.
\begin{figure}[h]\label{fig:treeSwap}
\centering
\caption{Figure for Fact~\ref{fact:swapTree}, to be tikzed}
\end{figure}

Case 1,2: If $C^*$ does not involve $e_i$, we are done. Since in that case, we can use $C^*\setminus \{f^*\}$ to complete the cycle, when $f^*$ is removed and $e^*$ is put back. 

Cases 3, 4: These cases create a problem if we replace the edge $f^*$ by the path $C^*\setminus \{f^*\}$, since after such replacement $\langle e_i,f_i \rangle$ no longer form a cycle. To prevent these cases, we make sure if there is an
$f^*$ whose cycle with $e^*$ do not involve any other edge from $E$, that $\langle f^*,e^* \rangle $ pair is chosen first. Then in both Cases 3 and 4, we would always pick the pair $\langle f_i,e^*\rangle$ in the beginning, such that once $f_i$ is added and $e^*$ is removed, $\langle e_i,f^*\rangle$ continues to be a valid pairing in $T_3$.
\end{proof}

Let $\{\langle e_i,f_i \rangle\}_i$ be a pairing given by Fact~\ref{fact:swapTree} for $G$ and $H$.
Theorem~\ref{thm:diffTree} gives us $\dkl(P,Q)=(\weight(G)-\weight(H))=\sum_i (\weight(\{e_i\})-\weight(\{f_i\}))$. We claim that $(\weight(\{e_i\})-\weight(\{f_i\})) \le 2\eps/n$ for every $i$, whence the theorem follows.

Let $e_i=(X,Y)$ be the edge and $f_i=(W,Z)$ be the non-edge in $G$, which together form a cycle in $G$ from Fact~\ref{fact:swapTree}. Then similar to Claim~\ref{claim:four} we get for the true distribution, 
 \[I(X; Z \mid Y) = I(Z; W \mid X) = 0\]
and
\[
    I(X; Y) - I(W; Z) = I(X; Y \mid Z) + I(Z; X \mid W)
  \]

For the sake of contradiction, assume the choice $(W, Z)$ over $(X, Y)$ is more than $2\eps/n$-bad. Then at
  least one of $I(X; Y \mid Z)$ and $I(Z; W \mid W)$ must be more than
  $\eps/n$.  The corresponding empirical conditional mutual
  information is then at least $C\eps/n\abs{\Sigma}$ from Theorem~\ref{thm:conditionalindependencetesting}. Theorem~\ref{thm:conditionalindependencetesting} also gives us $\max(I(\wh{X}; \wh{Z} \mid \wh{Y}),I(\wh{Z}; \wh{W} \mid \wh{X})) \le \alpha C\eps/N\abs{\Sigma}$.

  The empirical distribution will have
  \begin{align}\label{eqn:empTree}
    I(\wh{X}; \wh{Y}) - I(\wh{W}; \wh{Z})
    &= I(\wh{X}; \wh{Y} \mid
      \wh{Z}) - I(\wh{X}; \wh{Z} \mid \wh{Y}) + I(\wh{Z}; \wh{X} \mid
      \wh{W}) - I(\wh{Z}; \wh{W} \mid \wh{X})\nonumber \\
    &\geq C\eps/N\abs{\Sigma} -2\alpha C\eps/N\abs{\Sigma} = (1-2\alpha)C\eps/N\abs{\Sigma} > 0
  \end{align}
Note that $f_i$ is an edge and $e_i$ is a non-edge in $H$, which together form  a cycle in $H$ from Fact~\ref{fact:swapTree}. Hence removing $f_i$ and adding $e_i$ instead give us a strictly larger spanning tree from \eqref{eqn:empTree}, contradicting the maximum spanning tree property of $H$ w.r.t. empirical mutual information.
\end{proof} 
}

\section{Distribution Recovery}\label{sec:distribution}

This section shows how, for a fixed tree $T$, to find a $T$-structured
distribution $Q$ with $\kl{P}{Q} \leq \kl{P}{P_T} + \eps$.  We start
by analyzing how to learn an arbitrary distribution over $\Sigma$.


\subsection{$\dkl$ Learning of Discrete Distributions}
Given $N$ samples from a distribution $P$ over $\Sigma$, the
``add-1'' empirical estimator is based on Laplace's rule of
succession.  This distribution $Q$ is defined by: for each item $i \in \Sigma$, if $i$
appears $T_i$ times in the samples, then
$Q_i={T_i+1\over N+\abs{\Sigma}}$. Kamath, Orlitsky, Pichapati and Suresh~\cite{pmlr-v40-Kamath15} have analyzed the expected behavior of
the add-1 empirical estimator. In this section, we analyze its
behavior in the high-probability regime.
\begin{theorem}\label{thm:KLlearn}
  Let $P$ be a distribution over $\Sigma$ and $N \geq 1$.  Let $Q$ be
  the empirical add-1 estimator from $N$ samples of $P$.  There is
  an universal constant $C>0$ such that, with probability $1-\delta$,
  \[
    \kl{P}{Q}\le  \frac{C\abs{\Sigma} \log {\abs{\Sigma}\over \delta} \log N}{N}.
  \]
\end{theorem}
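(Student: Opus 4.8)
The plan is to reuse, almost verbatim, the machinery of Section~\ref{sec:testCI}, applied now to $\kl{P}{Q}$ with $Q$ the add-$1$ estimator. Since $Q_x = \frac{T_x+1}{N+|\Sigma|}$ is strictly positive and $\sum_x Q_x = \sum_x P_x = 1$, the same one-line manipulation as in~\eqref{eq:If}---writing $P_x = Q_x + (P_x-Q_x)$, expanding $\log(1+(P_x-Q_x)/Q_x)$, and inserting the zero-sum term $-(P_x-Q_x)$---gives the decomposition
\[
  \kl{P}{Q} = \sum_{x\in\Sigma} f\big(P_x-Q_x,\;Q_x\big).
\]
So it suffices to control each summand with failure probability $\delta/|\Sigma|$ and take a union bound, exactly as in Corollaries~\ref{cor:completeness} and~\ref{cor:soundness}. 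We may assume $N \ge |\Sigma|$: otherwise $Q_x \ge \frac{1}{N+|\Sigma|}$ gives the crude bound $\kl{P}{Q} \le \log(N+|\Sigma|)$, which already lies below the claimed estimate since then $|\Sigma|/N > 1$.

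Fix a coordinate and write $p = P_x$, $\wh p = T_x/N$, $q = Q_x = \frac{N\wh p+1}{N+|\Sigma|}$, and $L := \log(|\Sigma|/\delta)$. Conditioning on the one-coordinate multiplicative-Chernoff event $|\wh p - p| \lesssim \sqrt{pL/N}+L/N$ (the first inequality of Claim~\ref{claim:Pempirical}, with $\delta$ rescaled by a constant times $|\Sigma|$), which fails with probability $\le \delta/|\Sigma|$, I would record two elementary consequences of $N \ge |\Sigma|$: $q = \Theta(\wh p + \tfrac1N)$, and $|q-\wh p| = \frac{|\,|\Sigma|\wh p - 1\,|}{N+|\Sigma|} \lesssim \frac{|\Sigma|\wh p + 1}{N}$. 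Then, exactly as in Lemmas~\ref{lem:completeOne} and~\ref{lem:soundOne}, I would split on whether $p \le cL/N$ for a suitable constant $c$ and apply Lemma~\ref{lem:fapprox} ($f(a,b) \le \min(a^2/b,\ |a|\log(2+|a|/b))$). In the small regime the Chernoff bound forces $\wh p, q, |p-q|$ all $\lesssim L/N$, while $|p-q|/q \le p(N+|\Sigma|)+1 \lesssim N$ (using $p\le 1$), so $f(p-q,q) \le |p-q|\log(2+|p-q|/q) \lesssim \frac{L\log N}{N}$. In the large regime $q \asymp p$ and $|p-q| \lesssim \sqrt{pL/N} + \frac{|\Sigma| p}{N} + \frac1N$, so after dividing the square by $q\asymp p$ and using $p>cL/N$ to kill the $\frac{1}{N^2 p}$ cross-term,
\[
  f(p-q,q) \le \frac{(p-q)^2}{q} \lesssim \frac{L\log N}{N} + \frac{|\Sigma|^2 P_x}{N^2}.
\]

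Summing the per-coordinate bounds over all $x$ on the intersection of the $|\Sigma|$ Chernoff events (which holds with probability $\ge 1-\delta$): the $\frac{L\log N}{N}$ contributions total $\frac{|\Sigma| L\log N}{N}$, and the residual contributions total $\sum_x \frac{|\Sigma|^2 P_x}{N^2} = \frac{|\Sigma|^2}{N^2} \le \frac{|\Sigma|}{N}$, which is absorbed into the former since $N\ge|\Sigma|$. Since $L = \Theta(\log(|\Sigma|/\delta))$, this yields $\kl{P}{Q} \lesssim \frac{|\Sigma|\log(|\Sigma|/\delta)\log N}{N}$, as claimed.

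The one point that needs care---and the only place the argument does more than repackage Section~\ref{sec:testCI}---is the estimate of $|\wh p - q|$ in the large regime. The add-$1$ correction perturbs the distribution by as much as $\Theta(|\Sigma|/N)$ in total variation, so the resulting per-coordinate term $\frac{|\Sigma| P_x}{N}$ is, on its own, far too large to fit a $\frac{\log(|\Sigma|/\delta)\log N}{N}$ per-coordinate budget. It becomes harmless only after division by $q \asymp P_x$ and summation over $x$, where it telescopes to $\frac{|\Sigma|^2}{N^2}$---small enough precisely because we have already reduced to $N \ge |\Sigma|$. Everything else (the two-regime case split, the Chernoff conditioning, and the $f \le \min(a^2/b,\ |a|\log(2+|a|/b))$ estimate) is the same bookkeeping as in the completeness and soundness lemmas.
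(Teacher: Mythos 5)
Your proposal is correct and follows essentially the same route as the paper's proof: the same decomposition $\kl{P}{Q}=\sum_x f(P_x-Q_x,Q_x)$ with Lemma~\ref{lem:fapprox}, the same per-coordinate Chernoff conditioning via Claim~\ref{claim:Pempirical}, the same reduction to $N\gtrsim|\Sigma|$ with the crude $\log(N+|\Sigma|)$ bound, the same small/large split on $P_x$ versus $\log(\cdot)/N$, and the same observation that the add-1 bias term contributes $P_x\cdot O(|\Sigma|/N)^2/P_x$ per coordinate and hence only $O(|\Sigma|^2/N^2)$ in total. The only differences are cosmetic bookkeeping (bounding $|q-\wh p|$ explicitly and rescaling $\delta$ up front rather than at the end).
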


\begin{proof}
  Let $C' > 1$ be a large constant to be determined later.  If
  $N \leq C' \abs{\Sigma}$, the result follows from
  $\kl{P}{Q} \leq \log \frac{1}{\min_i Q_i} \leq \log (N +
  \abs{\Sigma}) \lesssim \log \abs{\Sigma}$, so we may assume
  $N \geq C' \abs{\Sigma}$. Then

\begin{align*}
\kl{P}{Q}&=\sum_i P_i \log {P_i\over Q_i}\\
&=\sum_i f(P_i-Q_i,Q_i)&&\tag{where $f(x,a)=a[(1+{x\over a})\log (1+{x\over a})-{x\over a}]$}\\
&\lesssim\sum_i  \min \left( {(P_i-Q_i)^2\over Q_i},\abs{P_i-Q_i}\log \left(1+{\abs{P_i-Q_i}\over Q_i}\right)\right)&&\tag{From Lemma~\ref{lem:fapprox}}\\
&\lesssim\sum_i  \min \left( {(P_i-Q_i)^2\over Q_i},\abs{P_i-Q_i}\log N\right).&&\tag{Since $Q_i\ge {1\over N+\abs{\Sigma}}$}.
\end{align*}

We also know from Claim~\ref{claim:Pempirical} that with probability at least $1-\delta$ for each $i$,
\begin{align*}
\abs{P_i-{T_i\over N}} &\lesssim \sqrt{{P_i\log {1\over \delta}\over N}}+{\log {1\over \delta}\over N}\\
\implies \abs{P_i-Q_i} &\lesssim \sqrt{{P_i\log {1\over \delta}\over N}}+{\log {1\over \delta}\over N}+\left\lvert{T_i\over N}-{T_i+1\over N+\abs{\Sigma}}\right\rvert\\
 &= \sqrt{{P_i\log {1\over \delta}\over N}}+{\log {1\over \delta}\over N}+{\abs{\abs{\Sigma}T_i/N-1}\over N+\abs{\Sigma}}\\
&\lesssim \sqrt{{P_i\log {1\over \delta}\over N}}+{\log {1\over \delta}\over N}+{\abs{\Sigma}\over N+\abs{\Sigma}}{T_i\over N}\\
&\lesssim \sqrt{{P_i\log {1\over \delta}\over N}}+{\log {1\over \delta}\over N}\\&\qquad+{\abs{\Sigma}\over N+\abs{\Sigma}}\left(P_i+\sqrt{{P_i\log {1\over \delta}\over N}}+{\log {1\over \delta}\over N}\right)\\
&\lesssim \sqrt{{P_i\log {1\over \delta}\over N}}+{\log {1\over \delta}\over N}+{\abs{\Sigma}\over N+\abs{\Sigma}}P_i
\end{align*}

If $P_i\le {C'\log {1\over \delta}\over N}$, then
$\abs{P_i-Q_i}\log N \lesssim {\log {1\over \delta}\over N}\log N$.

If $P_i> {C'\log {1\over \delta}\over N}$, then
$\abs{P_i - Q_i} \lesssim P_i(\frac{1}{\sqrt{C'}} + \frac{1}{C'} +
\frac{1}{C'+1})$ is at most $\frac{P_i}{2}$ for sufficiently large
$C'$, so $Q_i\ge P_i/2$ and hence
${(P_i-Q_i)^2\over Q_i} \lesssim {\log {1\over \delta}\over
  N}+P_i\left(\abs{\Sigma}\over N+\abs{\Sigma}\right)^2$.

By a union bound, with probability at least $1-\abs{\Sigma}\delta$ we
have that these equations hold for all $i$.  If true, then
\[
  \kl{P}{Q}\lesssim \frac{\abs{\Sigma}}{N}\log {1\over
    \delta}\log N+\left(\abs{\Sigma}\over N+\abs{\Sigma}\right)^2
\lesssim \frac{\abs{\Sigma}}{N}\log {1\over
    \delta}\log N
\]
Rescaling $\delta$ gives the desired bound.
\end{proof}

\subsection{Learning Trees}
We are ready to prove the main result of this section. Algorithm~\ref{algo:algofixed} shows the algorithm we analyze below.
\begin{algorithm}
\SetKwInOut{Input}{input}
\SetKwInOut{Output}{output}
\SetAlgoLined
\caption{Learning closest $T$-structured distribution}
\label{algo:algofixed}
\Input{Samples access to $P$ over $\Sigma^n$; Rooted tree $T$ on $n$ nodes labeled in topological order.}
\Output{$n$ row-stochastic $|\Sigma| \times |\Sigma|$ matrices $Q_1, \dots, Q_n$ that define a $T$-structured distribution $Q$ by 
$Q(x) = \prod_{i=1}^n Q_i[x_{\text{pa}(i)}, x_i]$ (where $x_{\text{pa}(1)}$ is arbitrary).}
\BlankLine
Draw $N$ i.i.d.~samples $X^{(1)}, \dots, X^{(N)}$ from $P$\;
\For{$i \leftarrow 1$ \KwTo $n$}{
\For{$x \in \Sigma$}{
$k \leftarrow \sum_{j=1}^N \mathbf{1}\left[X^{(j)}_{\text{pa}(i)} = x\right]$\tcp*{condition on parent satisfied vacuously if $i=1$} 
\For{$y \in \Sigma$}{
$t \leftarrow \sum_{j=1}^N \mathbf{1}\left[X^{(j)}_{\text{pa}(i)} = x, X^{(j)}_{i} = y\right]$\tcp*{condition on parent sat.~vacuously if $i=1$} 
$Q_i[x, y] \leftarrow (t+1)/(k+|\Sigma|)$\;
}
}
}
\KwRet{$(Q_1, \dots, Q_n)$}
\end{algorithm}
\fixedlearning*
\begin{proof}
Note that $\kl{P}{Q}$ is bounded, because $Q(x)>0$ for all $x$.
  By Lemma~\ref{lem:ApproxChowLiu} and \eqref{eq:ChowLiuEquality}, the learned $T$-structured
  distribution $Q$ satisfies
  \begin{align}\label{eq:KLDiff}
    &\kl{P}{Q} - \kl{P}{P_T}\nonumber\\ &= \sum_{i \in [n]} \sum_{x \in \Sigma} \Pr[X_{\text{pa}(i)} = x]\cdot \kl{X_i \mid X_{\text{pa}(i)} = x}{X'_i \mid X'_{\text{pa}(i)} = x}
  \end{align}
  where $X \sim P$ and $X' \sim Q$.  Now, the node-wise conditional probabilities of $Q$ are the
  add-1 empirical distribution of the conditional probabilities of $P$.  Therefore by
  Theorem~\ref{thm:KLlearn}, if we have $k$ samples of
  $(X \mid X_{\text{pa}(i)}=x)$, we will have with probability
  $1-\delta$ that
  \begin{align*}
    \kl{X_i \mid X_{\text{pa}(i)} = x}{X'_i \mid X'_{\text{pa}(i)} = x} &\lesssim \frac{\abs{\Sigma} \log(\abs{\Sigma}/\delta) \log k}{k}\\ &\leq \frac{\abs{\Sigma} \log(\abs{\Sigma}/\delta) \log N}{k}.
  \end{align*}
  If $\E[k] = N \Pr[X_{\text{pa}(i)} = x] >15\log \frac{1}{\delta}$\arnab{I don't like the $>O(\cdot)$ notation. I put a constant here.},
  then by a Chernoff bound, with probability $1-\delta$ we have
  $k > \frac{1}{2} \E[k]$ and
  \begin{align}
    &\Pr[X_{\text{pa}(i)} = x]\cdot\kl{X_i \mid X_{\text{pa}(i)} = x}{X'_i \mid X'_{\text{pa}(i)} = x} \nonumber\\&\lesssim \frac{\abs{\Sigma} \log(\abs{\Sigma}/\delta) \log N}{N}
 \label{eq:XD}
  \end{align}
  On the other hand, if $\E[k]\leq 15\log \frac{1}{\delta}$, then
  $\kl{X_i \mid X_{\text{pa}(i)} = x}{X'_i \mid X'_{\text{pa}(i)} = x}
  \leq \log (k + \abs{\Sigma})$ [because $Q$ is the add-1
  estimator, so the minimum probability is $\frac{1}{k + \abs{\Sigma}}$]
  and we have
  \begin{align*}
    &\Pr[X_{\text{pa}(i)} = x]\cdot\kl{X_i \mid X_{\text{pa}(i)} = x}{X'_i \mid X'_{\text{pa}(i)} = x}\\ &\leq \frac{\E[k]}{N} \log (k + \abs{\Sigma}) \nonumber\\&\lesssim \frac{\log \frac{1}{\delta} \log N}{N}
  \end{align*}
  Regardless, each term in~\eqref{eq:KLDiff} is bounded by~\eqref{eq:XD}.  Taking a
  union bound, with probability $1 - n\abs{\Sigma} \delta$ we have
  \[
    \kl{P}{Q} - \kl{P}{P_T} \lesssim n\abs{\Sigma} \frac{\abs{\Sigma} \log(\abs{\Sigma}/\delta) \log N}{N}.
  \]
  Rescaling $\delta$ and choosing $N$ appropriately gives the result.
\end{proof}
The algorithm and analysis in Theorem \ref{thm:fixed} straightforwardly generalizes to Bayes nets. This shows that if $G$ is a directed acyclic graph with in-degree bounded by $d$, we can obtain a $G$-structured distribution $Q$ using $\wt{O}(n|\Sigma|^{d+1}/\eps)$ samples from $P$ which satisfies $\kl{P}{Q}-\kl{P}{P_G} \leq \eps$, where $P_G = \arg \min_{G\text{-struct. }R} \kl{P}{R}$. 

\section{Lower Bounds for Structure Recovery}\label{sec:lower}

\subsection{Non-Realizable Case}

This section focuses on the non-realizable case, i.e., the input distribution is not necessarily a tree structured distribution. We prove that $\Omega(n^2/\eps^2)$ samples from a distribution $P$ over $\{0,1\}^n$ are required to find an $\eps$-approximate tree for $P$. First, we prove a lower bound for $n=3$. We define three distributions $R_1, R_2, R_3$ each over $\{0,1\}^3$ as follows. 

Let $B_1\sim \rm{Ber}(1/2)$, $B_2\sim \rm{Ber}(1/2)$, and $B_3\sim \rm{Ber}(1/2)$ be three i.i.d. random bits. In $R_1$: $X_1,Y_1,Z_1$ copies $B_1$ with probabilities ${3\over 4}+\eps,{3\over 4}+\eps,{3\over 4}-\eps$ respectively and with the remaining probabilities, they independently follow $\rm{Ber}(1/2)$. In $R_2$: $X_2,Y_2,Z_2$ copies $B_2$ with probabilities ${3\over 4}+\eps,{3\over 4}-\eps,{3\over 4}+\eps$ respectively and with the remaining probabilities, they independently follow $\rm{Ber}(1/2)$. In $R_3$: $X_3,Y_3,Z_3$ copies $B_3$ with probabilities ${3\over 4}-\eps,{3\over 4}+\eps,{3\over 4}+\eps$ respectively and with the remaining probabilities, they independently follow $\rm{Ber}(1/2)$. We need the following Fact to get our lower bound for $n=3$.

\begin{fact}\label{fact:kllb3}
\emph{(i)} $ \kl{R_1}{R_2} = O(\eps^2)$ and \emph{(ii)} $ I(X_1;Y_1)-I(X_1;Z_1) \ge 0.4\eps$.
\begin{proof}
The calculations are skipped.
\end{proof}
\end{fact}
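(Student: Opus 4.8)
Both parts are elementary calculus once the relevant quantities are put in closed form, so I would set that up first and then estimate. Note that in each $R_j$ every coordinate is marginally uniform on $\{0,1\}$ (a fair copy of $B$ with probability $c$, fair fresh bit otherwise, is fair). Conditioning on the hidden bit and summing over the four copy/fresh patterns, a pair $(X,Y)$ with copy-probabilities $c_X,c_Y$ has $\Pr[X=Y]=c_Xc_Y+\tfrac12(1-c_Xc_Y)=\tfrac12+\tfrac12 c_Xc_Y=:q_{XY}$; by the $B\mapsto 1-B$ symmetry the joint law of $(X,Y)$ is the symmetric one with $\Pr[X=Y=0]=\Pr[X=Y=1]=q_{XY}/2$, so $X\mid Y$ is Bernoulli with error probability $1-q_{XY}$ and, writing $h(p):=-p\ln p-(1-p)\ln(1-p)$ (nats, as in the paper), $I(X;Y)=\ln 2-h(q_{XY})$. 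I will use this for all three pairs, and assume throughout $\eps\le\tfrac18$ (the claims concern the small-$\eps$ regime).

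For part (ii), in $R_1$ we have $c_X=c_Y=\tfrac34+\eps$ and $c_Z=\tfrac34-\eps$, so the $\ln 2$ cancels and $I(X_1;Y_1)-I(X_1;Z_1)=h(q_{XZ})-h(q_{XY})$ with $q_{XY}-q_{XZ}=\tfrac12 c_X(c_Y-c_Z)=(\tfrac34+\eps)\eps\ge\tfrac34\eps$. Both $q_{XY}$ and $q_{XZ}$ lie in $(\tfrac34,1)$ for $\eps\le\tfrac18$, and $-h'(q)=\ln\frac{q}{1-q}$ is positive and increasing there, so the mean value theorem gives $h(q_{XZ})-h(q_{XY})=(-h'(\xi))(q_{XY}-q_{XZ})\ge(-h'(\tfrac34))\cdot\tfrac34\eps=\tfrac34(\ln 3)\eps>0.4\eps$.

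For part (i), write the pmf explicitly: for $\omega\in\{0,1\}^3$ and copy-probabilities $c=(c_1,c_2,c_3)$,
\[
 R(\omega)=\sum_{b\in\{0,1\}}\tfrac12\prod_{i=1}^3\Bigl(c_i\mathbf 1[\omega_i=b]+\tfrac{1-c_i}{2}\Bigr),
\]
a polynomial of degree $\le 3$ in $\eps$ for both $R_1$ (with $c=(\tfrac34+\eps,\tfrac34+\eps,\tfrac34-\eps)$) and $R_2$ (with $c=(\tfrac34+\eps,\tfrac34-\eps,\tfrac34+\eps)$). Since $R_1$ and $R_2$ coincide at $\eps=0$ (all $c_i=\tfrac34$), the polynomial $R_1(\omega)-R_2(\omega)$ has no constant term and hence is $O(\eps)$ on $[0,\tfrac18]$; and each factor above is $\ge\tfrac{1-c_i}{2}\ge\tfrac1{16}$, so $R_2(\omega)\ge\tfrac12(\tfrac1{16})^3$ is bounded below by a positive constant. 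Using $\ln t\le t-1$,
\[
 \kl{R_1}{R_2}\le\chi^2(R_1\,\|\,R_2)=\sum_{\omega\in\{0,1\}^3}\frac{(R_1(\omega)-R_2(\omega))^2}{R_2(\omega)}=O(\eps^2),
\]
since there are only $8$ terms, each of the form $O(\eps^2)/\Omega(1)$.

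There is no substantive obstacle here; the only care needed is the bookkeeping in part (ii)---checking that the numerical constant $0.4$ comfortably survives the $O(\eps^2)$ corrections to $q_{XY}$ and $q_{XZ}$---and confirming in part (i) that the pmf stays bounded away from $0$. An alternative for part (i) is the local-quadratic behavior of KL: $g(\eps):=\kl{R_1(\eps)}{R_2(\eps)}$ has $g(0)=0$ and $g'(0)=0$ (the linear term vanishes because $R_1(0)=R_2(0)$ together with $\sum_\omega R_1'=\sum_\omega R_2'=0$), so $g(\eps)=O(\eps^2)$ by Taylor's theorem with $g''$ bounded near $0$. Analogous statements for the other coordinate pairs and for $R_2,R_3$ follow from the obvious permutation symmetry of the construction.
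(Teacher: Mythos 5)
Your proposal is correct, and it takes a somewhat different route than the paper's appendix calculation. For part (i), the paper writes out the pmf of $R_1$ in a table, uses $R_2(x,y,z)=R_1(x,z,y)$ so that only the $y\neq z$ entries contribute, and evaluates the resulting sum $(a-b)\log(a/b)$-style to get $O(\eps^2)$; you instead invoke the generic bound $\kl{R_1}{R_2}\le\chi^2(R_1\,\|\,R_2)$ and only need the two facts that $R_1(\omega)-R_2(\omega)=O(\eps)$ pointwise (a degree-$\le 3$ polynomial vanishing at $\eps=0$) and $R_2(\omega)=\Omega(1)$, which is cleaner and avoids expanding any logarithms. For part (ii), the paper computes the pairwise joint entropies and Taylor-expands to first order, reporting $I(X_1;Y_1)-I(X_1;Z_1)\approx 0.48\eps\pm O(\eps^2)$ without tracking the error term explicitly; you instead use the exact identity $I=\ln 2-h(q)$ for the agreement probability $q=\tfrac12+\tfrac12 c_Xc_Y$ (the marginals being uniform and the joint symmetric), and then monotonicity of $-h'$ plus the mean value theorem to get the non-asymptotic bound $h(q_{XZ})-h(q_{XY})\ge \tfrac34(\ln 3)\eps>0.4\eps$ for $\eps\le\tfrac18$. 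Your constants check out ($q_{XY}-q_{XZ}=(\tfrac34+\eps)\eps$, both $q$'s in $(\tfrac34,1)$ in that range), and in fact the true first-order coefficient is $\tfrac34\ln\tfrac{25}{7}\approx 0.95$, so your lower bound of $\approx 0.82\eps$ is consistent and comfortably above the claimed $0.4\eps$ (the paper's stated $0.48$ appears to be off by a factor of about $2$, which does not affect the fact as used). The main thing your approach buys is a rigorous bound valid on an explicit range of $\eps$ rather than an asymptotic $\approx$; the paper's approach buys the explicit numerical pmf entries that it reuses elsewhere in the appendix.
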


\begin{lemma}\label{lem:lb3non}
\begin{enumerate}
\item[(i)] Every tree $T$ on 3 vertices is not $0.4\eps$-approx. for one of $R_1,R_2,R_3$.
\item[(ii)] Given samples from a uniformly random distribution among $R_1,R_2,R_3$, $\Omega(1/\eps^2)$ samples are needed to rule out an incorrect one with probability at least 4/5.
\end{enumerate}
\end{lemma}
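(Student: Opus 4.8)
The plan is to derive (i) directly from the Chow--Liu weight identity (Corollary~\ref{cor:CLweights}) and to derive (ii) from a three-hypothesis information-theoretic lower bound fed by Fact~\ref{fact:kllb3}.

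For part (i): a spanning tree on three vertices is a path, i.e.\ a choice of which one of the edges $XY, XZ, YZ$ to drop. By Corollary~\ref{cor:CLweights}, for a three-variable distribution $P$ we have $\kl{P}{P_T} = J_P - \weight_P(T)$, and $\weight_P$ of the tree dropping edge $e$ is the total of all three pairwise mutual informations minus $I(e)$; hence $\kl{P}{P_T}$ is minimized by the tree dropping the least informative edge, and the tree dropping $e$ is $\alpha$-approximate for $P$ exactly when $I(e) - \min_{e'} I(e') \le \alpha$. Now $R_1$ is invariant under the transposition $X_1 \leftrightarrow Y_1$, so $I(X_1;Z_1) = I(Y_1;Z_1)$, and these are the two minimal edges, while Fact~\ref{fact:kllb3}(ii) gives $I(X_1;Y_1) - I(X_1;Z_1) \ge 0.4\eps$ --- in fact this gap is $\Theta(\eps)$ with leading constant well above $0.4$ for small $\eps$, so it is strict. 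Hence the tree dropping $X_1Y_1$ is not $0.4\eps$-approximate for $R_1$. Since $R_2$ and $R_3$ are the images of $R_1$ under the coordinate transpositions $Y \leftrightarrow Z$ and $X \leftrightarrow Z$, the identical computation shows the tree dropping $X_2Z_2$ fails for $R_2$ and the tree dropping $Y_3Z_3$ fails for $R_3$. As every tree on three vertices drops one of $XY,XZ,YZ$, this exhausts all three trees.

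For part (ii): I read ``rule out an incorrect one'' as follows --- from $N$ i.i.d.\ samples the tester outputs an index $A \in \{1,2,3\}$, the distribution it declares the data did \emph{not} come from, and it succeeds when $A \neq I$, where $I \sim \mathrm{Unif}\{1,2,3\}$ is the true index; note that uninformed guessing already succeeds with probability $2/3$, so the content is whether $4/5$ is reachable. By Fact~\ref{fact:kllb3}(i) and the coordinate symmetries, $\kl{R_i}{R_j} = O(\eps^2)$ for all $i \neq j$; tensorizing, $\kl{R_i^{\otimes N}}{R_j^{\otimes N}} = O(N\eps^2)$, so Pinsker's inequality gives $\|R_i^{\otimes N} - R_j^{\otimes N}\|_{TV} \le \gamma$ with $\gamma = O(\eps\sqrt{N})$. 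Letting $\mu_j$ be the law of $A$ when $I = j$, the data-processing inequality yields $\|\mu_i - \mu_j\|_{TV} \le \gamma$, so $\mu_j(j) \ge \mu_1(j) - \gamma$ for $j = 2,3$ and therefore
\[
  \sum_{j=1}^{3} \mu_j(j) \ \ge\ \mu_1(1) + \mu_1(2) + \mu_1(3) - 2\gamma \ =\ 1 - 2\gamma .
\]
Consequently $\Pr[A \neq I] = 1 - \tfrac13 \sum_{j} \mu_j(j) \le \tfrac23 + \tfrac{2\gamma}{3}$, which stays below $4/5$ unless $\gamma \ge 1/5$, i.e.\ unless $N = \Omega(1/\eps^2)$. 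Finally, part (i) shows that any $0.4\eps$-approximate tree for $R_I$ is a good tree for exactly two of the $R_j$'s, so outputting it ``rules out'' the remaining index, which must be $\neq I$; hence this $\Omega(1/\eps^2)$ bound also lower-bounds finding a $0.4\eps$-approximate tree for three-variable distributions.

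I expect the only delicate points to be the strict-versus-nonstrict boundary in part (i) --- ``not $0.4\eps$-approximate'' needs $I(e) - \min_{e'} I(e') > 0.4\eps$ while Fact~\ref{fact:kllb3}(ii) is stated with $\ge$, which is handled because the true gap is $\Theta(\eps)$ with constant strictly above $0.4$ (or one simply states (i) with a slightly smaller constant) --- and, in part (ii), tracking the absolute constant inside $\gamma = O(\eps\sqrt N)$ coming from $\kl{R_i}{R_j} = O(\eps^2)$ so that the right-hand side is genuinely below $4/5$. Both are routine.
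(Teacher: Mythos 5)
Your proposal is correct and follows essentially the same route as the paper: part (i) combines Corollary~\ref{cor:CLweights} with Fact~\ref{fact:kllb3}(ii) and the symmetry of $R_1,R_2,R_3$ exactly as the paper does, and part (ii) is the paper's three-point Le Cam argument (Claim~\ref{claim:3pt} in the appendix) in an equivalent data-processing formulation, combined with KL tensorization and Pinsker via Fact~\ref{fact:kllb3}(i), yielding the same $\tfrac23+O(\eps\sqrt N)$ success bound. Your explicit handling of the strictness issue (the true gap being $\approx 0.48\eps>0.4\eps$) is a welcome extra care that the paper glosses over.
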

Observe that Lemma \ref{lem:lb3non} implies that if the distribution that generates the samples is chosen uniformly at random among $R_1,R_2,R_3$, then any algorithm that outputs an $\eps$-approximate tree with error probability $<1/5$ must draw $\Omega(\eps^{-2})$ samples. 

\begin{proof}[Proof of Lemma \ref{lem:lb3non}]
For part (i), note that the possible trees are $G$: $X$---$Z$---$Y$, $H$: $X$---$Y$---$Z$ and $F$: $Y$---$X$---$Z$. Let $R_{1,G}$ denote the closest $G$-structured distribution to $R$ and so on. Then using Corollary \ref{cor:CLweights} and Fact~\ref{fact:kllb3}, $\kl{R}{R_{1,G}}-\kl{R}{R_{1,H}}=I(X_1;Y_1)-I(X_1;Z_1)\ge 0.4\eps$, which means $\kl{R}{R_{1,G}}\ge 0.4\epsilon$. Similarly for $R_2$ and $R_3$ by symmetric calculations.

The proof for part (ii) is skipped.
\end{proof}

We can now prove the main result of this section.
\begin{theorem}\label{thm:nrlb}
The sample complexity of computing an $\eps$-approx. tree on $n$ variables with error probability less than $1/3$ is $\Omega(n^2 \eps^{-2})$.
\end{theorem}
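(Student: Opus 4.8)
The plan is to amplify the three-variable hardness instance of Lemma~\ref{lem:lb3non} by a standard direct-sum / tensorization argument. First I would partition the $n$ coordinates into $m = \lfloor n/3 \rfloor$ disjoint blocks of size $3$ (ignoring the at most two leftover coordinates, which we can make deterministically constant). On each block $b$, independently draw an index $i_b \in \{1,2,3\}$ uniformly at random and place the distribution $R_{i_b}$ (from the $n=3$ construction) on that block's three variables; the blocks are mutually independent. Call the resulting distribution $P$ on $\{0,1\}^n$. Since the blocks are independent, $P$ is itself tree-structured within each block only if the right edge is chosen, and crucially KL divergence and mutual-information weights decompose additively across the blocks.

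The key steps, in order: (1) \emph{Reduction of the approximation task to the per-block sub-problems.} Using Corollary~\ref{cor:CLweights}, for any spanning tree $T$ on $[n]$ we have $\kl{P}{P_T} = J_P - \weight_P(T)$. Because the blocks are independent, any edge of $T$ crossing between two different blocks contributes $I(X_u;X_v)=0$ to $\weight_P(T)$, so such edges are "wasted"; an optimal (or near-optimal) tree restricted to each block must be a spanning tree of that block, and the total excess KL is the sum over blocks of the within-block excess. Hence if $T$ is $(\eps)$-approximate for $P$ with $\eps = c' \cdot m \cdot \eps_0$ for a suitable constant (taking the per-block error parameter $\eps_0 = \Theta(\eps/n)$), then $T$ must be $0.4\eps_0$-approximate on all but an $O(1)$ fraction — in fact, by an averaging/Markov argument, on a constant fraction — of the blocks. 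By Lemma~\ref{lem:lb3non}(i), being $0.4\eps_0$-approximate on block $b$ forces the algorithm to have effectively identified which of $R_1,R_2,R_3$ was planted there (equivalently, to have excluded at least one of the three candidate trees on that block). (2) \emph{Information-theoretic lower bound on the samples.} By Lemma~\ref{lem:lb3non}(ii), ruling out an incorrect tree on a single block with probability $\ge 4/5$ requires $\Omega(1/\eps_0^2)$ samples \emph{from that block's marginal}; but each global sample from $P$ is exactly one sample from each block's marginal simultaneously. So if the algorithm succeeds (outputs an $\eps$-approximate tree) with probability $\ge 2/3$, then with $N$ global samples it must, on average over the random choice of planted indices, resolve a constant fraction of the $m$ blocks — and since the blocks and their samples are independent, a Chernoff/union-bound argument shows that each of those blocks individually is resolved with good probability, forcing $N = \Omega(1/\eps_0^2) = \Omega(n^2/\eps^2)$.

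I would formalize step (2) cleanly via mutual information: let $I_b \in \{1,2,3\}$ be the planted index on block $b$ and let $S$ be the algorithm's output; the success condition implies $\sum_b \mathbb{I}(\text{block } b \text{ resolved})$ is $\Omega(m)$ with constant probability, hence $\Pr[\text{block }b\text{ resolved}] = \Omega(1)$ for $\Omega(m)$ blocks, and for each such block the standard two/three-point testing lower bound (which underlies Lemma~\ref{lem:lb3non}(ii), using $\kl{R_1}{R_2}=O(\eps_0^2)$ from Fact~\ref{fact:kllb3}(i)) gives $N \cdot O(\eps_0^2) = \Omega(1)$, i.e. $N = \Omega(1/\eps_0^2)$. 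Plugging $\eps_0 = \Theta(\eps/n)$ and $m = \Theta(n)$ yields $N = \Omega(n^2/\eps^2)$.

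The main obstacle I anticipate is making the "constant fraction of blocks must be individually resolved" step fully rigorous: an algorithm could in principle concentrate its confidence on a few blocks and guess on the rest, or correlate its answers across blocks, so I need the decomposition in step (1) to be robust enough that $(\eps)$-approximation genuinely \emph{requires} per-block success on most blocks (not just on a few), and then I need the independence of blocks to convert "most blocks resolved in aggregate" into "each of many fixed blocks resolved with constant probability." This is handled by choosing the global $\eps$ to be a small constant times $m\eps_0$ so that the per-block excess-KL budget, by Markov, cannot be spent on more than, say, half the blocks; combined with a union bound over which half, this pins down that a specific constant fraction of named blocks is resolved with constant probability, which is exactly what the single-block lower bound needs.
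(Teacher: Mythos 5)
Your construction and decomposition step match the paper's proof: split the $n$ variables into $\Theta(n)$ independent blocks of three, plant a hard three-variable instance with parameter $\eps_0=\Theta(\eps/n)$ on each block, and use the fact that cross-block edges carry zero mutual information so that the excess KL of any output tree is at least the sum of the per-block excesses (Corollary~\ref{cor:CLweights} plus Lemma~\ref{lem:lb3non}(i), exactly as in the paper). The genuine gap is in your step (2), i.e., in the direction of the aggregation. You derive only that, for many blocks $b$, ``block $b$ is resolved with constant probability,'' and then claim the two/three-point bound forces $N=\Omega(1/\eps_0^2)$. That implication is false: for the three-hypothesis block problem, ``ruling out an incorrect $R_j$'' succeeds with probability $2/3$ using \emph{zero} samples (guess $j$ uniformly at random), and Claim~\ref{claim:3pt} correspondingly only forbids success probability exceeding roughly $2/3$ by a margin proportional to the TV distance between the $N$-fold products. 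Moreover, your Markov/averaging step can never deliver a per-block resolution probability above the global success probability: from a $2/3$-successful algorithm you get at most about $2/3$ minus correction terms for typical blocks, which never crosses the threshold at which the single-block lower bound becomes non-trivial. So ``resolved with constant probability'' is not ``exactly what the single-block lower bound needs''; it needs success probability at least $4/5$ (or at least $2/3+\Omega(1)$), which cannot be extracted this way, and no contradiction is reached.

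The fix is to run the argument in the contrapositive direction, as the paper does: assume $N=o(n^2/\eps^2)$; then by Lemma~\ref{lem:lb3non}(ii) each block's sub-tree fails to be $\Theta(\eps/n)$-approximate with probability at least $1/5$, and this remains true conditioned on the data of all other blocks since the blocks are independent; a Chernoff/stochastic-domination argument then shows that with probability $>2/3$ at least $m/10$ blocks fail simultaneously, so the output tree has excess KL exceeding $(m/10)\cdot(10\eps/m)=\eps$, contradicting the assumed $2/3$ success probability. (Alternatively, your forward direction can be salvaged by making each block a \emph{binary} choice between two hard distributions, where the trivial baseline is $1/2$: then a per-block success probability of $2/3-o(1)$ does exceed $1/2$ by a constant, and Le Cam together with Pinsker and $\kl{R_1}{R_2}=O(\eps_0^2)$ yields $N=\Omega(1/\eps_0^2)$.)
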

\begin{proof}
Suppose $n=3m$ is a multiple of $3$. We consider the $n$ variables as being divided into $m$ blocks, each of size $3$. Let $P$ be a random distribution on $\{0,1\}^n$, defined by setting the distribution of the $i$'th block to be $R$ or $R'$ with probability 1/2 each independently, where $R$ and $R'$ satisfy Lemma \ref{lem:lb3non}.

For the sake of contradiction, suppose we have an algorithm that draws $cn^2\eps^{-2}$ samples from $P$ (for a sufficiently small constant $c$) and outputs an $\eps$-approximate tree $T$ with probability at least $2/3$ (over the choice of $P$ as well as the algorithm's randomness). Since each block is independent, without loss of generality, $T$ is a union of disjoint trees $T_1, \dots, T_m$ for each block. By Lemma \ref{lem:lb3non} (with $\eps$ replaced with $10\eps/m$), each $T_i$ is not $10\eps/m$-approximate with probability at least $1/5$. Hence, by a Chernoff bound, with probability $>2/3$, for at least $\frac{m}{10}$ trees, $T_i$ is not $10\eps/m$-approximate. Therefore, for any $T$-structured distribution $Q$, $\kl{P}{Q} > \frac{m}{10}\cdot \frac{10\eps}{m} = \eps$.  
\end{proof}

\subsection{Realizable Case}
We now show that if $P$ is a tree-structured distribution on $n$ variables, then $\Omega(n\eps^{-1}\log {n\over \eps})$ samples are required to find an $\eps$-approx. tree. As with the non-realizable case, we first show the construction for $n=3$. We define three distributions $R_1,R_2,R_3$ each over $\{0,1\}^3$ as follows.

In $R_1$, $(Y_1,Z_1)$ randomly takes values between $(0,0)$ and $(1,1)$, and  $X_1$ copies the other 2 bits with $(1-\eps)$ probability and with the remaining probability follows $\mathrm{Ber}(1/2)$. $R_2$ and $R_3$ are defined symmetrically with the restrictions $X_2=Z_2$ and $X_3=Y_3$ respectively. Let $H(P,Q)=\sqrt{{1\over 2}\sum_{x\in \Omega} \left(\sqrt{P(x)}-\sqrt{Q(x)}\right)^2}$ be the Hellinger distance between two distributions. We need the following facts. 

\begin{fact}\label{fact:real3lb}
\emph{(i)} $H^2(R_1,R_2)=\Theta(\eps)$ and \emph{(ii)} $I(Y_1;Z_1)-I(X_1;Z_1)=\Theta(\eps\log {1\over \eps})$.
\end{fact}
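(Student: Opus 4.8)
The plan is to turn both parts into direct computations from the explicit description of $R_1$ and $R_2$; this fact is genuinely elementary, and essentially the only thing to get right is the combinatorics of which atoms the two distributions share. First I would record the atoms. In $R_1$ we have $Y_1 = Z_1 =: B$ with $B$ uniform on $\{0,1\}$, and conditioned on $B$ the variable $X_1$ equals $B$ with probability $1-\eps$ and is an independent fair coin otherwise; hence $\Pr[X_1 = B] = 1 - \eps/2$ and $X_1$ is marginally uniform. Thus $R_1$ is supported on four points: the ``agreement'' atoms $(0,0,0)$ and $(1,1,1)$, each of mass $\tfrac12(1-\eps/2)$, and the ``error'' atoms $(1,0,0)$ and $(0,1,1)$, each of mass $\eps/4$. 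By the symmetric definition (with $X_2 = Z_2$ in place of $Y_1 = Z_1$), $R_2$ has the same two agreement atoms with the same masses, and error atoms $(0,1,0)$ and $(1,0,1)$ of mass $\eps/4$ each. The key observation is that the error atoms of $R_1$ and of $R_2$ are disjoint, whereas the agreement atoms coincide exactly.

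For part (i) I plug this into $H^2(R_1,R_2) = \tfrac12\sum_x \bigl(\sqrt{R_1(x)} - \sqrt{R_2(x)}\bigr)^2$. The two agreement atoms contribute $0$ since $R_1$ and $R_2$ agree there, and each of the four error atoms contributes $\tfrac12\bigl(\sqrt{\eps/4}\bigr)^2 = \eps/8$, so $H^2(R_1,R_2) = \eps/2$, which is $\Theta(\eps)$ (in fact an exact equality).

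For part (ii), since $Y_1 = Z_1$ almost surely and both are uniform, $I(Y_1;Z_1) = H(Z_1) = \log 2$. For the other term, $X_1$ is marginally uniform and $\Pr[X_1 \ne Z_1 \mid Z_1 = z] = \eps/2$ for each value $z$, so $H(X_1 \mid Z_1) = h(\eps/2)$ for the binary entropy function $h$ (in nats), and $I(X_1;Z_1) = H(X_1) - H(X_1 \mid Z_1) = \log 2 - h(\eps/2)$. Hence $I(Y_1;Z_1) - I(X_1;Z_1) = h(\eps/2)$, and writing $h(\eps/2) = \tfrac{\eps}{2}\log\tfrac{2}{\eps} + (1-\tfrac{\eps}{2})\log\tfrac{1}{1-\eps/2}$ shows that the first term is $\Theta(\eps\log(1/\eps))$ while the second lies in $[0,\eps]$, so $h(\eps/2) = \Theta(\eps\log(1/\eps))$.

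The only (minor) point of care is that this last $\Theta$ asymptotic requires $\eps$ to be at most an absolute constant, so that $\log(1/\eps)$ is bounded away from zero and dominates the $O(\eps)$ correction; this is harmless for the lower-bound application, which drives $\eps \to 0$. Finally, the analogous statements for the pairs $(R_1,R_3)$ and $(R_2,R_3)$ follow by the symmetry of the construction under relabeling of the three coordinates, so no separate computation is needed.
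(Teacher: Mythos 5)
Your proposal is correct and follows essentially the same route as the paper: both write out the four-atom p.m.f.\ of $R_1$ (and $R_2$ by swapping $X$ and $Y$), get $H^2(R_1,R_2)=\eps/2$ exactly from the disjoint error atoms, and reduce part (ii) to an entropy computation—your $h(\eps/2)=H(X_1\mid Z_1)$ is the same quantity as the paper's $H(X_1,Z_1)-\log 2$, so the two calculations coincide. The only cosmetic difference is that the paper phrases (ii) via joint entropies $H(X_1,Z_1)-H(Y_1,Z_1)$ while you use conditional entropy, and your closing caveat about $\eps$ being at most a constant is the right (and harmless) qualification.
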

\begin{proof}
The calculations are skipped.
\end{proof}

\begin{lemma}\label{lem:rlb3}
\begin{enumerate}
\item[(i)] Every tree $T$ on $3$ vertices is not $\Theta(\eps\log {1\over \eps})$-approximate for one of $R_1,R_2,R_3$.
\item[(ii)] Given samples from a random distribution from $R_1,R_2,R_3$, $\Omega(1/\eps)$ samples are needed to rule out an incorrect one with 4/5 probability.
\end{enumerate}
\end{lemma}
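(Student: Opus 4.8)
The plan is to prove the two parts of Lemma~\ref{lem:rlb3} separately, following the template used for the non-realizable lower bound.

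\medskip
\noindent\textbf{Part (i): the approximation gap.} The three spanning trees on $\{X,Y,Z\}$ are the three paths, which I label by their center: $F=Y\text{-}X\text{-}Z$ (edges $XY,XZ$), $H=X\text{-}Y\text{-}Z$ (edges $XY,YZ$), and $G=X\text{-}Z\text{-}Y$ (edges $XZ,ZY$). In $R_1$ the pair $(Y_1,Z_1)$ is perfectly correlated and $X_1$ is a noisy copy of it, so $R_1$ is structured on both $H$ and $G$, giving $\kl{R_1}{R_{1,H}}=\kl{R_1}{R_{1,G}}=0$; the only remaining candidate tree is $F$. Since the edge $X_1Y_1$ is common to $F$ and $H$, Corollary~\ref{cor:CLweights} gives
\[
  \kl{R_1}{R_{1,F}} \;=\; \weight_{R_1}(H)-\weight_{R_1}(F) \;=\; I(Y_1;Z_1)-I(X_1;Z_1),
\]
which is $\Theta(\eps\log(1/\eps))$ by Fact~\ref{fact:real3lb}(ii). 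Because $R_1,R_2,R_3$ are obtained from one another by permuting the three coordinates, the same computation shows that $H$ is the unique bad tree for $R_2$ and $G$ the unique bad tree for $R_3$, each with KL gap $\Theta(\eps\log(1/\eps))$. Hence each of $F,H,G$ fails to be $\Theta(\eps\log(1/\eps))$-approximate for exactly one of $R_1,R_2,R_3$, which is (i). (Here I use that $\eps$ is below an absolute constant, so $\Theta(\eps\log(1/\eps))\gg\eps$ and such a tree is in particular not $\eps$-approximate.)

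\medskip
\noindent\textbf{Part (ii): sample-complexity lower bound.} This is a three-hypothesis indistinguishability argument. By Fact~\ref{fact:real3lb}(i) together with the coordinate-permutation symmetry, $H^2(R_i,R_j)=\Theta(\eps)$ for every pair $i\ne j$. Tensorizing via Bernoulli's inequality, $H^2(R_i^{\otimes N},R_j^{\otimes N})=1-(1-H^2(R_i,R_j))^N\le N\,H^2(R_i,R_j)$, and hence $\|R_i^{\otimes N}-R_j^{\otimes N}\|_{TV}\le\sqrt{2\,H^2(R_i^{\otimes N},R_j^{\otimes N})}\le\tau:=O(\sqrt{N\eps})$. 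Consider any randomized algorithm that, given the $N$ samples, outputs an index $j\in\{1,2,3\}$, and call it correct when the true distribution is not $R_j$. Let $q_i^{(k)}$ denote the probability it outputs $i$ when fed samples from $R_k$, so $\sum_i q_i^{(k)}=1$ and $|q_i^{(k)}-q_i^{(k')}|\le\tau$ for all $i,k,k'$. With $\bar q_i:=\frac13\sum_k q_i^{(k)}$ we have $q_i^{(k)}\ge\bar q_i-\tau$, hence the failure probability under the uniform prior is $\frac13\sum_k q_k^{(k)}\ge\frac13\sum_k(\bar q_k-\tau)=\frac13-\tau$. Taking $N\le c/\eps$ for a small enough absolute constant $c$ makes $\tau<\frac{2}{15}$, so the failure probability exceeds $\frac15$; therefore no algorithm rules out an incorrect $R_i$ with probability $\ge\frac45$ unless $N=\Omega(1/\eps)$.

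\medskip
\noindent\textbf{Where the work is.} Part (i) is essentially bookkeeping once Fact~\ref{fact:real3lb}(ii) is granted, and part (ii) is the standard Le~Cam/Fano averaging over three hypotheses; neither is the real obstacle. The content is Fact~\ref{fact:real3lb} itself: checking that the explicit four-atom distributions have pairwise squared Hellinger distance $\Theta(\eps)$, and that the mutual-information gap equals $h(\eps/2)=\Theta(\eps\log(1/\eps))$, where $h$ is the binary entropy in nats and $X_1$ is effectively $Z_1$ passed through a binary symmetric channel of crossover $\eps/2$. These computations are routine but must be handled carefully around the $\log(1/\eps)$ factor, since it is precisely this super-linear dependence on $\eps$ that yields the $\Omega(n\eps^{-1}\log(n/\eps))$ bound in the theorem that follows, rather than a mere $\Omega(n/\eps)$.
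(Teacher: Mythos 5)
Your proposal is correct and follows essentially the same route as the paper: part (i) is the paper's argument via Corollary~\ref{cor:CLweights} and Fact~\ref{fact:real3lb}(ii) (with the extra, harmless observation that $\kl{R_1}{R_{1,H}}=0$ turns the bound into an equality), and part (ii) is the paper's appendix argument, tensorizing the squared Hellinger distance $H^2(R_i,R_j)=\Theta(\eps)$ and applying a three-hypothesis Le Cam bound, which you simply inline instead of invoking it as a separate claim (Claim~\ref{claim:3pt}). No gaps.
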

The same argument used for Theorem \ref{thm:nrlb} implies:
\begin{theorem}\label{thm:rlb}
The sample complexity of computing an $\eps$-approx. tree for a tree-structured distribution on $n$ variables with error probability less than $1/3$ is $\Omega(n\eps^{-1}\log {n\over \eps})$. 
\end{theorem}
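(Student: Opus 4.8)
The plan is to re-run the block-tensorization argument of Theorem~\ref{thm:nrlb} essentially verbatim, substituting the \emph{realizable} three-variable gadgets $R_1,R_2,R_3$ and Lemma~\ref{lem:rlb3} for the non-realizable ones, and choosing the per-block parameter so as to absorb the extra $\log$ factor coming from Fact~\ref{fact:real3lb}(ii). Assume $3\mid n$ (pad with independent coordinates otherwise) and write $n=3m$; partition the coordinates into $m$ size-$3$ blocks. For a ``raw'' parameter $\eps_0$ to be fixed below, let $P$ be the random distribution on $\{0,1\}^n$ that places, independently in each block, a uniformly random one of $R_1,R_2,R_3$ (with parameter $\eps_0$) on that block's three coordinates. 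Each $R_i$ is tree-structured, and a disjoint union of trees extends to a spanning tree by adding edges joining independent coordinates, so $P$ is tree-structured and hence a legitimate realizable instance.

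I would then reduce to the per-block instances. Distinct blocks are independent under $P$, so every cross-block pair has zero mutual information; by Corollary~\ref{cor:CLweights} one may replace any cross-block edge of the algorithm's output $\wh T$ by a within-block edge without decreasing $\weight_P(\wh T)$, hence without increasing $\kl{P}{P_{\wh T}}$. So without loss of generality $\wh T$ restricts to a spanning tree $\wh T_i$ on each block, and $\kl{P}{P_{\wh T}}=\sum_{i=1}^m \kl{P^{(i)}}{(P^{(i)})_{\wh T_i}}$ for the block marginals $P^{(i)}$ (using that entropies and weights both split over independent blocks). By Lemma~\ref{lem:rlb3}(i), every fixed $\wh T_i$ fails to be $\Theta(\eps_0\log(1/\eps_0))$-approximate for at least one of $R_1,R_2,R_3$; and by Lemma~\ref{lem:rlb3}(ii), if the sample budget is a sufficiently small constant times $1/\eps_0$, then---conditioning on the realizations of all blocks other than $i$, so that the algorithm restricted to block $i$ is itself just a three-variable algorithm---$\wh T_i$ fails to be $\Theta(\eps_0\log(1/\eps_0))$-approximate for $P^{(i)}$ with constant probability. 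A counting/concentration step over the $m$ independent blocks, exactly as in Theorem~\ref{thm:nrlb}, then yields that with probability $>1/3$ a constant fraction of the blocks are bad.

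Finally I would choose $\eps_0$ so that a constant fraction of bad blocks forces $\kl{P}{P_{\wh T}}>\eps$: since each bad block contributes $\Theta(\eps_0\log(1/\eps_0))$ to the sum above, this requires $\eps_0\log(1/\eps_0)=\Theta(\eps/m)$, i.e. $\eps_0=\Theta\!\left(\frac{\eps}{m\log(m/\eps)}\right)$, so that $1/\eps_0=\Theta\!\left(\frac{m\log(m/\eps)}{\eps}\right)=\Theta\!\left(n\eps^{-1}\log\frac{n}{\eps}\right)$. An algorithm outputting an $\eps$-approximate tree with error probability $<1/3$ would contradict the previous paragraph, so it must use $\Omega\!\left(n\eps^{-1}\log\frac{n}{\eps}\right)$ samples. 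The only computation genuinely new relative to Theorem~\ref{thm:nrlb} is this inversion of $\eps_0\log(1/\eps_0)=\Theta(\eps/m)$, which is exactly where the $\log(n/\eps)$ factor appears. The main obstacle is the ``soft'' step inherited from Theorem~\ref{thm:nrlb}: making the per-block failure argument rigorous conditionally on the other blocks, and converting per-block failure probabilities into a high-probability lower bound on the number of bad blocks---which is why Lemma~\ref{lem:rlb3}(ii) should be invoked with enough slack in the sample budget that the per-block failure probability, rather than merely exceeding $1/5$, can be pushed above whatever threshold the counting argument requires.
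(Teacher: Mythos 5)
Your proposal is correct and follows essentially the same route as the paper, which proves Theorem~\ref{thm:rlb} by exactly the block-tensorization argument of Theorem~\ref{thm:nrlb} with the realizable gadgets of Lemma~\ref{lem:rlb3} in place of Lemma~\ref{lem:lb3non}. The only ingredient the paper leaves implicit is the parameter inversion $\eps_0\log(1/\eps_0)=\Theta(\eps/m)$ producing the $\log(n/\eps)$ factor, which you identify and carry out correctly.
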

\begin{proof}[Proof of Lemma \ref{lem:rlb3}]
For part (i), let $G:=$ $X$---$Y$---$Z$ and $H:=$ $Y$---$X$---$Z$. Let $R_{1,H}$ be the closest $H$-structured distribution to $R_1$ and so on. Then by Corollary~\ref{cor:CLweights}, $\kl{R_1}{R_{1,H}}-\kl{R_1}{R_{1,G}}=I(Y_1;Z_1)-I(X_1;Z_1)=\Omega(\eps\log {1\over \eps})$ using Fact~\ref{fact:real3lb}. Hence $\kl{R_1}{R_{1,H}}\allowbreak=\Omega(\eps\log {1\over \eps})$. Similarly, it can be shown for the other two trees.

The proof for part (ii) is skipped. 
\end{proof}


\paragraph{Acknowledgement} Arnab and Sutanu thank Rishi Gajjala for contributing to related discussions and Sanjoy Dasgupta for clarifying some questions about \cite{DBLP:journals/ml/Dasgupta97}. Also, Eric thanks Alex Dimakis for pointing us to \cite{DBLP:journals/corr/abs-2010-14864}. We thank the anonymous reviewers of STOC '21 for valuable comments. Arnab was supported in part by NRF-AI Fellowship R-252-100-B13-281 and an Amazon Research Award. Sutanu was supported in part by NRF-AI Fellowship R-252-100-B13-281. Eric was supported in part by NSF Award CCF-1751040 (CAREER). Vinod was supported in part by NSF CCF-184908 and NSF HDR:TRIPODS-1934884 awards.

\appendix
\section{Proofs of Background}\label{app:ApproxChowLiu}

\begin{proof}[Proof of Lemma~\ref{lem:ApproxChowLiu}]
  We have that
  \begin{align*}
    &\sum_{v \in V} I(X_v; X_{\text{pa}(v)}) = \sum_{v \in V} \left(H(X_v) - H(X_v \mid X_{\text{pa}(v)})\right)=\\
    &\sum_{v \in V} H(X_v) + \sum_{x\in\Sigma^n} \Pr[X = x] \sum_{v \in V}  \log \Pr[X_{v} = x_{v} \mid X_{\text{pa}(v)} = x_{\text{pa}(v)}].
  \end{align*}
  Therefore
  \begin{align*}
    &\kl{P}{Q}\\ &= \sum_{x\in\Sigma^n} \Pr[X=x] \log \frac{\Pr[X=x]}{\Pr[X'=x]}\\
              &= -H(X) + \sum_{x\in\Sigma^n} \Pr[X=x] \log \frac{1}{\Pr[X'=x]}\\
              &= -H(X) + \sum_{x\in\Sigma^n} \Pr[X=x] \sum_{v \in V} \log \frac{1}{\Pr[X'_v=x_v \mid X'_{\text{pa}(v)} = x_{\text{pa}(v)}]}\\
              &= -H(X) + \sum_{v \in V} H(X_v) - \sum_{v \in V} I(X_v; X_{\text{pa}(v)})\\
                &\qquad + \sum_{x\in\Sigma^n} \Pr[X=x] \sum_{v \in V} \log \frac{\Pr[X_v=x_v \mid X_{\text{pa}(v)} = x_{\text{pa}(v)}]}{\Pr[X'_v=x_v \mid X'_{\text{pa}(v)} = x_{\text{pa}(v)}]}\\
              &= -H(X) + \sum_{v \in V} H(X_v) - \sum_{v \in V} I(X_v; X_{\text{pa}(v)})\\
                & + \sum_{v \in V}\sum_{x\in\Sigma^2} \Pr[(X_{\text{pa}(v)},X_v) = x]  \log \frac{\Pr[X_v=x_v \mid X_{\text{pa}(v)} = x_{\text{pa}(v)}]}{\Pr[X'_v=x_v \mid X'_{\text{pa}(v)} = x_{\text{pa}(v)}]}
  \end{align*}
  which is the desired bound.
\end{proof}


\section{Proof of Fact~\ref{fact:kllb3}}\label{sec:micalc}
\emph{Part (i):} The following table computes the p.m.f. of $R_1$.
\begin{center}
\renewcommand{\arraystretch}{2}
 \begin{tabular}{||c c c c c||} 
 \hline
 $X_1$ & $Y_1$ & $Z_1$ & $B$ & value \\ [0.5ex] 
 \hline
 \hline
 0 & 0 & 0 & 0 & ${1\over 2}\left({7\over 8}+{\eps\over 2}\right)^2\left({7\over 8}-{\eps\over 2}\right)$\\
 &&&1&$\left({1\over 8}-{\eps\over 2}\right)^2\left({1\over 8}+{\eps\over 2}\right)$\\
 0 & 0 & 1 & 0 & ${1\over 2}\left({7\over 8}+{\eps\over 2}\right)^2\left({1\over 8}+{\eps\over 2}\right)$\\
 &&&1 & ${1\over 2}\left({1\over 8}-{\eps\over 2}\right)^2\left({7\over 8}-{\eps\over 2}\right)$\\
 
0 & 1 & 0 & 0 & ${1\over 2}\left({7\over 8}+{\eps\over 2}\right)\left({1\over 8}-{\eps\over 2}\right)\left({7\over 8}+{\eps\over 2}\right)$\\
&&&1&${1\over 2}\left({1\over 8}-{\eps\over 2}\right)\left({7\over 8}+{\eps\over 2}\right)\left({1\over 8}+{\eps\over 2}\right)$\\

0 & 1 & 1 & 0 & ${1\over 2}\left({7\over 8}+{\eps\over 2}\right)\left({1\over 8}-{\eps\over 2}\right)\left({1\over 8}+{\eps\over 2}\right)$\\

 &  &  & 1 & ${1\over 2}\left({1\over 8}-{\eps\over 2}\right)\left({7\over 8}+{\eps\over 2}\right)\left({7\over 8}-{\eps\over 2}\right)$

\\\hline
\end{tabular}
\end{center}

We skipped entries with $X_1=1$ in the above table which can be evaluated using $R_i(x,y,z)=R_i(1-x,1-y,1-z)$. From the above table, $R_1(0,0,1)={1\over 2}\left({7\over 64}+{3\eps\over 8}+{3\eps^2\over 4}\right)$ and $R_1(0,1,0)={1\over 2}\left({7\over 64}-{3\eps\over 8}-{\eps^2\over 4}\right)$. Note that by definition, $R_2(x,y,z)=R_1(x,z,y)$. Hence, we get $\kl{R_1}{R_2}=2\left[ R_1(0,0,1)\log {R_1(0,0,1)\over R_1(0,1,0)}+R_1(0,1,0) \log {R_1(0,1,0)\over R_1(0,0,1)}\right]=O(\eps^2)$.

\emph{Part (ii):} We compute the marginal distribution on $X_1,Y_1$ from the previous table.
\ignore{
\begin{center}
\renewcommand{\arraystretch}{2}
 \begin{tabular}{||c c c c||} 
 \hline
 $X_1$ & $Y_1$ & $B$ & value \\ [0.5ex] 
 \hline
 \hline
 0 & 0 & if $B=0$ & ${1\over 2}\left[ \left( \left(0.75+\eps\right)+\left(0.25-\eps\right){1\over 2}\right) \left( \left(0.75+\eps\right)+\left(0.25-\eps\right){1\over 2}\right)\right]$\\ \hline

 && if $B=1$ & ${1\over 2}\left[ (0.25-\eps){1\over 2}(0.25-\eps){1\over 2} \right]$\\ \hline

 1 & 1 & if $B=0$ & ${1\over 2}\left[ (0.25-\eps){1\over 2}(0.25-\eps){1\over 2} \right]$\\ \hline

  &  & if $B=1$ & ${1\over 2}\left[ \left( \left(0.75+\eps\right)+\left(0.25-\eps\right){1\over 2}\right) \left( \left(0.75+\eps\right)+\left(0.25-\eps\right){1\over 2}\right)\right]$\\ \hline

0 & 1 & if $B=0$ & ${1\over 2}\left[ \left( (0.75+\eps)+(0.25-\eps){1\over 2} \right)(0.25-\eps){1\over 2} \right]$\\ \hline

 &  & if $B=1$ & ${1\over 2}\left[ (0.25-\eps){1\over 2}\left( (0.75+\eps)+(0.25-\eps){1\over 2} \right) \right]$\\ \hline

1 & 0 & if $B=0$ & ${1\over 2}\left[ (0.25-\eps){1\over 2}\left( (0.75+\eps)+(0.25-\eps){1\over 2} \right) \right]$\\ \hline

 &  & if $B=1$ & ${1\over 2}\left[ \left( (0.75+\eps)+(0.25-\eps){1\over 2} \right)(0.25-\eps){1\over 2} \right]$\\ \hline

\end{tabular}
\end{center}

Simplifying, we get:
\begin{center}
\renewcommand{\arraystretch}{2}
 \begin{tabular}{|| c c c||} 
 \hline
 $X_1$ & $Y_1$ & value \\ [0.5ex] 
 \hline
 \hline
 0 & 0 & ${1\over 2}\left[ \left({7\over 8}+{\eps\over 2}\right)^2+\left({1\over 8}-{\eps\over 2}\right)^2\right]$\\ \hline
 1 & 1 & ${1\over 2}\left[ \left({7\over 8}+{\eps\over 2}\right)^2+\left({1\over 8}-{\eps\over 2}\right)^2\right]$\\ \hline
 0 & 1 & ${1\over 2}\left[ 2 \left({7\over 8}+{\eps\over 2}\right)\left({1\over 8}-{\eps\over 2}\right)\right ]$ \\ \hline
 1 & 0 & ${1\over 2}\left[ 2 \left({7\over 8}+{\eps\over 2}\right)\left({1\over 8}-{\eps\over 2}\right)\right ]$ \\ \hline
\end{tabular}
\end{center}
}
$
R_1(0,0,\dot)=R_1(1,1,\cot)={1\over 2}\left[ \left({7\over 8}+{\eps\over 2}\right)^2+\left({1\over 8}-{\eps\over 2}\right)^2\right]={25\over 64}+{\eps^2\over 4}+{3\eps\over 8} \approx {25\over 64}\left(1+{24\eps\over 25}\right)$ and $R_1(0,1,\dot)=R_1(1,0,\dot)=
\left({7\over 8}+{\eps\over 2}\right)\left({1\over 8}-{\eps\over 2}\right) = {7\over 64}-{3\eps\over 8}-{\eps^2\over 4}\approx {7\over 64}\left(1-{24\eps\over 7}\right)$

Hence,
\begin{flalign*}
&H(X_1,Y_1)\nonumber\\&= 2\left[ {25\over 64} \left( 1+{24\eps\over 25}\right)\log {64\over 25\left(1+{24\eps\over 25}\right)}+{7\over 64} \left( 1-{24\eps\over 7}\right)\log {64\over 7\left(1-{24\eps\over 7}\right)}\right] \\\nonumber
&\approx 2\left[ {25\over 64}\left(1+{24\eps\over 25}\right)\left( \log {64\over 25}-\log\left(1+{24\eps\over 25}\right) \right)+ \right.\\&\qquad\qquad\qquad\qquad\qquad \left.{7\over 64}\left(1-{24\eps\over 7}\right)\left( \log {64\over 7}-\log\left(1-{24\eps\over 7}\right) \right)\right]&&\tag{using $\log (1+x)\approx x$}\\\nonumber
\end{flalign*}
\begin{align}
H(X_1,Y_1)&\approx C_{12}\text{ (a constant term corresponding to $\eps=0$}) - 0.48\eps \pm O(\eps^2)\label{eqn:R1R2}
\end{align}

Similarly, we compute the following marginal distribution on $X_1,Z_1$.\ignore{
\begin{center}
\renewcommand{\arraystretch}{2}
 \begin{tabular}{||c c c c||} 
 \hline
 $X_1$ & $Z_1$ & $B$ & value \\ [0.5ex] 
 \hline
 \hline
 0 & 0 & if $B=0$ & ${1\over 2}\left[ \left((0.75+\eps)+(0.25-\eps){1\over 2}\right)\left((0.75-\eps)+(0.25+\eps){1\over 2}\right) \right]$\\ \hline
  &  & if $B=1$ & ${1\over 2}\left[ (0.25-\eps){1\over 2}(0.25+\eps){1\over 2} \right]$\\ \hline
 
 1 & 1 & if $B=0$ & ${1\over 2}\left[ (0.25-\eps){1\over 2}(0.25+\eps){1\over 2} \right]$\\ \hline
   &  & if $B=1$ & ${1\over 2}\left[ \left((0.75+\eps)+(0.25-\eps){1\over 2}\right)\left((0.75-\eps)+(0.25+\eps){1\over 2}\right) \right]$\\ \hline
   
 1 & 0 & if $B=0$ & ${1\over 2}\left[(0.25-\eps){1\over 2}\left( (0.75-\eps)+(0.25+\eps){1\over 2}\right)\right]$ \\ \hline
 & & if $B=1$ & ${1\over 2}\left[ \left( (0.75+\eps)+(0.25-\eps){1\over 2}\right)(0.25+\eps){1\over 2} \right]$ \\\hline
 
 0 & 1 & if $B=0$ & ${1\over 2}\left[ \left( (0.75+\eps)+(0.25-\eps){1\over 2}\right)(0.25+\eps){1\over 2} \right]$ \\\hline
 & & if $B=1$ & ${1\over 2}\left[(0.25-\eps){1\over 2}\left( (0.75-\eps)+(0.25+\eps){1\over 2}\right)\right]$ \\\hline 
\end{tabular}
\end{center}

Simplifying, we get:
\begin{center}
\renewcommand{\arraystretch}{2}
 \begin{tabular}{|| c c c||} 
 \hline
 $X_1$ & $Z_1$ & value \\ [0.5ex] 
 \hline
 \hline
 0 & 0 & ${25\over 64}-{\eps^2\over 4}$\\ \hline
 1 & 1 & ${25\over 64}-{\eps^2\over 4}$\\ \hline
 0 & 1 &  ${7\over 64}+{\eps^2\over 4}$\\ \hline
 1 & 0 &  ${7\over 64}+{\eps^2\over 4}$\\ \hline
\end{tabular}
\end{center}
} $R_1(0,\dot,0)=R_1(1,\dot,1)={25\over 64}-{\eps^2\over 4}$ and $R_1(0,\dot,1)=R_1(1,\dot,0)={7\over 64}+{\eps^2\over 4}$ 
Hence,
\begin{align}\label{eqn:R1R3}
H(X_1,Z_1) &= C_{13}\text{ (a constant term corresponding to $\eps=0$}) \pm O(\eps^2)
\end{align}

Finally, we have from \eqref{eqn:R1R2} and \eqref{eqn:R1R3},
\begin{align*}
    I(X_1;Y_1)-I(X_1;Z_1) &= H(X_1,Z_1)-H(X_1,Y_1) &&\tag{since the marginal entropies are 0 due to unbiasedness}\\
    &\approx 0.48\eps\pm(\eps^2) &&\tag{using $C_{12}=C_{13}$, since the three bits are identically distributed when $\eps=0$}
\end{align*}

\section{Proof of Fact~\ref{fact:real3lb}}
\emph{Part (i)} We first compute the p.m.f. of $R_1$: $R_1(0,0,0)={1\over 2}\left(1-{\eps\over 2}\right)$, $R_1(0,0,1)=R_1(0,1,0)=0$, and $R_1(0,1,1)={1\over 2}{\eps\over 2}$.
\ignore{
\begin{center}
\renewcommand{\arraystretch}{2}
\begin{tabular}{||c c c c||}
\hline
$X_1$ & $Y_1$ & $Z_1$ & value \\ [0.5ex]
\hline\hline
0 & 0 & 0 & ${1\over 2}\left(1-{\eps\over 2}\right)$\\\hline
0 & 0 & 1 & 0\\\hline
0 & 1 & 0 & 0\\\hline
0 & 1 & 1 & ${1\over 2}{\eps\over 2}$\\\hline
\end{tabular}
\end{center}
}
Other entries of the p.m.f. can be filled using $R_i(x,y,z)=R_i(1-x,1-y,1-z)$. The p.m.f. of $R_2$ can be computed by observing $R_2(x,y,z)=R_1(y,x,z)$. Then, $H^2(R_1,R_2)={\eps\over 2}$.

\emph{Part (ii)}
\begin{align*}
I(Y_1;Z_1)-I(X_1;Z_1)&=H(X_1;Z_1)-H(Y_1;Z_1)\\
&=\left[\left(1-{\eps\over 2}\right)\log {1\over {1\over 2}\left(1-{\eps\over 2}\right)}+{\eps\over 2}\log {1\over {1\over 2}{\eps\over 2}}
\right]-
\left[
\log 2
\right] \\
&=O(\eps)+{\eps\over 2}\log {2\over \eps}
\end{align*}
\section{Proof of Lemma~\ref{lem:lb3non} and Lemma~\ref{lem:rlb3} part (ii)}\label{sec:3pt}
We show the following result, which can be proven by generalizing Le Cam's two point method.

\begin{claim}\label{claim:3pt}
Let $R_1,R_2,R_3$ be three distributions over $\Omega$ such that $\max\{|R_1-R_2|_{TV},|R_1-R_3|_{TV}\}\le \eps$. Suppose nature randomly chooses a distribution $R_i$ among $R_1,R_2,R_3$ and reveal exactly one sample from it. Then for any algorithm which tries to guess a $j$ such that $R_j\neq R_i$, must err with probability at least ${1\over 3}(1-2\eps)$.
\end{claim}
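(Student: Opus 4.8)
The plan is to observe that for a single sample the optimal strategy is a deterministic ``minimum-likelihood'' rule, and then to lower bound its failure probability via a pointwise inequality between the three-way minimum and the pairwise minima. First, it suffices to consider deterministic algorithms: a randomized algorithm is a distribution over deterministic ones and its error probability is the corresponding average, so some deterministic algorithm does at least as well. Fix such a deterministic map $A\colon\Omega\to\{1,2,3\}$, and note that $A$ errs whenever $A(x)=i$ (the event $\{A(x)=i\}$ is contained in the error event, since then $R_{A(x)}=R_i$, regardless of whether the $R_j$ are distinct).

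Writing $i$ for the uniform choice in $\{1,2,3\}$ and $x$ for the revealed sample, and swapping the order of summation,
\[
  \Pr[\text{err}]\;\ge\;\Pr[A(x)=i]\;=\;\frac13\sum_{i=1}^3\sum_{x\in\Omega}R_i(x)\,\mathbf{1}[A(x)=i]\;=\;\frac13\sum_{x\in\Omega}R_{A(x)}(x)\;\ge\;\frac13\sum_{x\in\Omega}\min\{R_1(x),R_2(x),R_3(x)\}.
\]
Thus the claim reduces to showing $\sum_{x}\min\{R_1(x),R_2(x),R_3(x)\}\ge 1-2\eps$. For this I would use the elementary pointwise bound $\min\{a,b,c\}\ge\min\{a,b\}+\min\{a,c\}-a$, valid for all nonnegative reals by a short case analysis on the order of $a,b,c$. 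Summing it over $x$ with $(a,b,c)=(R_1(x),R_2(x),R_3(x))$, and using $\sum_x R_1(x)=1$ together with the identity $\sum_x\min\{P(x),Q(x)\}=1-|P-Q|_{TV}$, gives
\[
  \sum_x\min\{R_1,R_2,R_3\}\;\ge\;\bigl(1-|R_1-R_2|_{TV}\bigr)+\bigl(1-|R_1-R_3|_{TV}\bigr)-1\;\ge\;1-2\eps,
\]
completing the proof.

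The only real content --- the ``main obstacle,'' such as it is --- is the pointwise inequality $\min\{a,b,c\}\ge\min\{a,b\}+\min\{a,c\}-a$; everything else is the standard Le Cam two-point bookkeeping adapted to three hypotheses. One checks it by assuming WLOG $b\le c$ and splitting on whether $a\le b$, $b\le a\le c$, or $c\le a$; in each case both sides are immediate. If one preferred to avoid even that, an alternative is to recognize $\min\{R_1,R_2,R_3\}$ as a sub-probability measure and couple the three distributions through their common lower envelope, but the displayed argument is the cleanest.
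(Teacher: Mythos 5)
Your proof is correct, and it is essentially the paper's argument: the paper works directly with a randomized rule's output probabilities $C_{x,j}$ and bounds $\sum_x C_{x,j}(R_j(x)-R_1(x))\ge -|R_1-R_j|_{TV}$, which after summation is exactly your combination of the overlap identity $\sum_x\min\{P,Q\}=1-|P-Q|_{TV}$ with the pointwise bound $\min\{a,b,c\}\ge\min\{a,b\}+\min\{a,c\}-a$. Your reduction to deterministic rules and the three-way minimum $\sum_x\min\{R_1,R_2,R_3\}$ is just a repackaging of the same Le Cam-style bookkeeping, and all steps (including the pointwise inequality) check out.
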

\begin{proof}
Suppose the algorithm upon seeing the sample $x\in \Omega$, decides to output $R_1,R_2,R_3$ with probabilities $C_{x,1},C_{x,2},C_{x,3}$ respectively such that $C_{x,1}+C_{x,2}+C_{x,3}=1$. Then the error probability:
\begin{align*}
&\mathrm{Pr}[R_i=R_j]\\&=\mathrm{Pr}[j=1\text{ and }i=1]+\mathrm{Pr}[j=1\text{ and }i=1]+\mathrm{Pr}[j=1\text{ and }i=1]\\
&={1\over 3}\left(\mathrm{Pr}[j=1\mid i=1]+\mathrm{Pr}[j=1\mid i=1]+\mathrm{Pr}[j=1\mid i=1]\right)\\
\end{align*}
\begin{align*}
&\mathrm{Pr}[j=1\mid i=1]\\&=\sum_{x\in \Omega} \mathrm{Pr}[j=1\text{ and sample=}x \mid i=1]\\
&=\sum_{x\in \Omega} \mathrm{Pr}[j=1\mid\text{ sample=}x\text{ and }i=1]\cdot \mathrm{Pr}[\text{sample=}x\mid i=1]\\
&=\sum_{x\in \Omega} \mathrm{Pr}[j=1\mid\text{ sample=}x]\cdot \mathrm{Pr}[\text{sample=}x\mid i=1]\\
&=\sum_{x\in \Omega} C_{x,1}R_1(x)\\
\end{align*}
\begin{align*}
&\mathrm{Pr}[R_i=R_j]\\&={1\over 3}\left(
\sum_{x\in \Omega} C_{x,1}R_1(x)+
\sum_{x\in \Omega} C_{x,2}R_2(x)+
\sum_{x\in \Omega} C_{x,3}R_3(x)
\right)\\
&={1\over 3}\left(1+\sum_x C_{x,2}(R_2(x)-R_1(x))+\sum_x C_{x,3}(R_3(x)-R_1(x))\right)&&\tag{Using $C_{x,1}+C_{x,2}+C_{x,3}=1$.}\\
&\ge {1\over 3}\left(1-|R_1-R_2|_{TV}-|R_1-R_3|_{TV}\right)
\end{align*}
The last inequality follows from the definition of TV distance, which is the minimum value of the previous expression when $C_{x,2}$ indicates $R_2(x)<R_1(x)$ and $C_{x,3}$ indicates $R_3(x)<R_1(x)$.  
\end{proof}

We obtain part (ii) of Lemma~\ref{lem:lb3non} from Claim~\ref{claim:3pt} by using $k$-fold distributions of $R_1,R_2,R_3$. These distributions have pairwise KL distance $O(k\eps^2)$ from Fact~\ref{fact:kllb3} and hence if $k=o(1/\eps^2)$ then $TV=o(1)$ from Pinsker's inequality.

Similarly, we obtain part (ii) of Lemma~\ref{lem:rlb3}. We have $H^2(R_1^{\otimes k},R_2^{\otimes k})=1-(1-H^2(R_1,R_2))^k$ using Hellinger factorization for product distributions. Since  $H^2(R_1,R_2)=O(\eps)$, we get if $k=o(1/\eps)$ then $|R_1^{\otimes k}-R_2^{\otimes k})|_{TV}\lesssim H(R_1^{\otimes k},R_2^{\otimes k})=o(1)$.

\bibliographystyle{alpha}
\bibliography{reflist}

\end{document}